\title{An Innovations Algorithm for the prediction \\ of functional linear processes}
\author{Johannes Klepsch\thanks{Center for Mathematical Sciences, Technische Universit\"at M\"unchen,  85748 Garching, Boltzmannstrasse 3, Germany, e-mail: johannes.klepsch@tum.de\,,\,cklu@tum.de}
\and 
Claudia Kl\"uppelberg\footnotemark[1]
}
\numberwithin{equation}{section}
\newtheorem{theorem}{Theorem}[section]
\newtheorem{lemma}[theorem]{Lemma}
\newtheorem{remark}[theorem]{Remark}
\newtheorem{example}[theorem]{Example}
\newtheorem{proposition}[theorem]{Proposition}
\newtheorem{definition}[theorem]{Definition}
\newtheorem{corollary}[theorem]{Corollary}
\newtheorem{fig}[theorem]{Figure}
\newcommand{\bthe}{\begin{theorem}}
\newcommand{\ethe}{\end{theorem}}
\newcommand{\ben}{\begin{enumerate}}
\newcommand{\een}{\end{enumerate}}
\newcommand{\bit}{\begin{itemize}}
\newcommand{\eit}{\end{itemize}}
\newcommand{\beq}{\begin{equation}}
\newcommand{\eeq}{\end{equation}}
\newcommand{\ble}{\begin{lemma}}
\newcommand{\ele}{\end{lemma}}
\newcommand{\bde}{\begin{definition}\rm}
\newcommand{\ede}{\end{definition}}
\newcommand{\bco}{\begin{corollary}}
\newcommand{\eco}{\end{corollary}}
\newcommand{\bpr}{\begin{proposition}}
\newcommand{\epr}{\end{proposition}}
\newcommand{\brem}{\begin{remark}\rm}
\newcommand{\erem}{\end{remark}}
\newcommand{\bproof}{\begin{proof}}
\newcommand{\eproof}{\end{proof}}
\newcommand{\bexam}{\begin{example}\rm}
\newcommand{\eexam}{\end{example}}
\newcommand{\bfi}{\begin{fig}}
\newcommand{\efi}{\end{fig}}
\newcommand{\btab}{\begin{tab}}
\newcommand{\etab}{\end{tab}}
\newcommand{\beao}{\begin{eqnarray*}}
\newcommand{\eeao}{\end{eqnarray*}\noindent}
\newcommand{\beam}{\begin{eqnarray}}
\newcommand{\eeam}{\end{eqnarray}\noindent}
\newcommand{\barr}{\begin{array}}
\newcommand{\earr}{\end{array}}
\newcommand{\bdis}{\begin{displaymath}}
\newcommand{\edis}{\end{displaymath}\noindent}
\def\N{{\mathbb N}}
\def\C{{\mathbb C}}
\def\Z{{\mathbb Z}}
\def\P{{\mathbb P}}
\def\E{{\mathbb E}}
\def\R{{\mathbb R}}
\def\P{\mathbb{P}}
\def\cals_+{{\cals_+}}
\def\calb{{\mathcal{B}}}
\def\calf{{\mathcal{F}}}
\def\cala{{\mathcal{A}}}
\def\call{{\mathcal{L}}}
\def\cals{{\mathcal{S}}}
\def\calp{{\mathcal{P}}}
\def\caln{{\mathcal{N}}}
\def\calm{{\mathcal{M}}}
\newcommand{\al}{{\alpha}}
\newcommand{\la}{{\lambda}}
\newcommand{\eps}{\varepsilon}
\newcommand{\LCS}{{\rm LCS}}
\newcommand{\FMA}{{\rm FMA}}
\newcommand{\FAR}{{\rm FAR}}
\newcommand{\spa}{{\rm \ov{sp}}}
\newcommand{\ov}{\overline}
\newcommand{\wh}{\widehat}
\newcommand{\wt}{\widetilde}
\newcommand{\halmos}{\quad\hfill\mbox{$\Box$}}  % Beweisendezeichen
\newcommand{\JK}[1]{{\color{red} #1}}
\begin{document}

%\date{}

\maketitle

\begin{abstract}
When observations are curves over some natural time interval, the field of functional data analysis comes into play. 
Functional linear processes account for temporal dependence in the data. 
The prediction problem for functional linear processes has been solved theoretically, but the focus for applications has been on functional autoregressive processes.
We propose a new computationally tractable linear predictor for functional linear processes. 
It is based on an application of the Multivariate Innovations Algorithm to finite-dimensional subprocesses of increasing dimension of the infinite-dimensional functional linear process.
%Under the in the context of prediction very natural assumption of invertibility of the functional linear process, 
We investigate the behavior of the predictor for increasing sample size.
We show that, depending on the decay rate of the eigenvalues of the covariance and the spectral density operator, 
the resulting predictor converges with a certain rate to the theoretically best linear predictor.
\end{abstract}

\noindent
%\begin{tabbing}
{\em AMS 2010 Subject Classifications:}  primary:\,\,\,62M10, 62M15, 62M20\,\,\,
secondary: \,\,\,62H25, 60G25
%\end{tabbing}

%\vspace{0.4cm}

\noindent
{\em Keywords:}
functional linear process,
functional principal components, 
functional time series, 
Hilbert space valued process, 
Innovations Algorithm, 
prediction,
prediction error
%functional data, 
%dimension reduction, function space, \ARMA, \MA, 

%%%%%%%%%%%%%%%%%%%%%%%%%%%%%%%%%%%%%%%%%%%%%%%%%%%%%%%%%%%%%%%%%%%%%%%%%%%%%%%%%%%%%%%%%%%%%%%%%%%%%%%%%%%%%%%%%%%%%%%%%%%%%%%%%%%%%%%%%%%%%%%%%
\section{Introduction}

We consider observations which are consecutive curves over a fixed  time interval within the field of functional data analysis (FDA).
% has a variety of real-world applications. 
%By taking continuous approximations to the recorded data, one can circumvent the often arbitrary discretization of the data. 
%Most of the early work in FDA focussed on i.i.d. functional realizations, where the book \cite{ramsay1} was an important milestone as it established the field in its own right.
In this paper curves are representations of a functional linear process. 
The data generating process is a time series $X=(X_n)_{n\in\mathbb{Z}}$ where each $X_n$ is a random element $X_n(t)$, $t\in [0,1]$, of a Hilbert space, often the space of square integrable functions on $[0,1]$. 

Several books contain a mathematical or statistical treatment of dependent functional data as e.g. Bosq \cite{bosq}, Horv\`ath and Kokoszka \cite{horvath}, and Bosq and Blanke \cite{blanke}.
The main source of our paper is the book \cite{bosq} on linear processes in function spaces, which gives the most general mathematical treatment of linear dependence in functional data, developing estimation, limit theorems and prediction for functional  autoregressive processes.
In \cite{weaklydep} the authors develop limit theorems for  the larger class of weakly dependent functional processes. 
More recently, \cite{hoermann} and \cite{panaretros} contribute to frequency domain methods of functional time series.

Solving the prediction equations in function spaces is problematic and research to-date has mainly considered first order autoregressive models.
Contributions to functional prediction go hand in hand with an estimation method for the autoregressive parameter operator. 
The book \cite{bosq} suggests a Yule-Walker type moment estimator, spline approximation is applied in \cite{bessecardot}, and \cite{kargin} proposes a predictive factor method where the principal components are replaced by directions which may be more relevant for prediction.  

When moving away from the autoregressive process, results on prediction of functional time series become sparse. 
An interesting theory for the prediction of general functional linear processes is developed in \cite{bosq2014}. 
Necessary and sufficient conditions are derived for the best linear  predictor to take the form  $\phi_n(X_1,\dots,X_n)$ with $\phi_n$ linear and bounded. 
However, due to the infinite dimensionality of function spaces boundedness of $\phi_n$ cannot be guaranteed. 
Consequently, most results, though interesting from a theoretical point of view, are not suitable for application. 

More practical results are given for example in \cite{antoniadis}, where  prediction is performed non-parametrically with a functional kernel regression technique,  or in  \cite{aue}, \cite{hyndmanshang} and \cite{KKW}, where the dimensionality of the prediction problem is reduced via functional principal component analysis. 
In a multivariate setting, the Innovations Algorithm proposed in \cite{brockwell} gives a robust prediction method for linear processes. 
However, as often in functional data analysis, the non-invertibility of covariance operators prevents an ad-hoc generalization of the Innovations Algorithm to functional linear processes. 

We suggest a computationally feasible linear prediction method extending the Innovations Algorithm to the functional setting. 
For a functional linear process $(X_n)_{n\in\Z}$ with values in a Hilbert space $H$ and with innovation process $(\varepsilon_n)_{n\in\Z}$ our goal is a linear predictor $\wh{X}_{n+1}$ based on $X_1,\dots,X_n$ such that $\wh{X}_{n+1}$ is both computationally tractable and {\em consistent}.
In other words, we want  to find a bounded linear mapping $\phi_n$ 
with $\wh X_{n+1}=\phi_n(X_1,\dots,X_n)$ such that the statistical prediction error converges to $0$ for increasing sample size; i.e.,
\begin{align}\label{goal}
\lim_{n\to\infty} \E\Vert X_{n+1}-\wh{X}_{n+1}\Vert^2 = \E\Vert\varepsilon_0\Vert^2.
\end{align}
To achieve convergence in \eqref{goal} we work with finite dimensional projections of the functional process, similarly as in \cite{aue} and \cite{KKW}. 
We start with a representation of the functional linear model in terms of an arbitrary orthonormal basis of the Hilbert space. We then focus on a representation of the model based on only finitely many basis functions. 
%In other words we work with an arbitrary finite dimensional subprocess. 
An intuitive choice for the orthonormal basis consists of the eigenfunctions of the covariance operator of the process. 
Taking the eigenfunctions corresponding to the $D$ largest eigenvalues results in a truncated Karhunen-Lo\'eve representation, and guarantees to capture most of the variance of the process (see \cite{aue}). 
%In \cite{kargin} it is suggested to rather focus on what is called the {predictive factors}, since the space best suited for the representation of the process may not be the space best suited for prediction. 
Other applications may call for a different choice.

Though the idea of finite dimensional projections is not new, our approach differs significantly from existing ones. 
Previous approaches consider the innovations of the projected process as the projection of the innovation of the original functional process. 
Though this may be sufficient in practice, it is in general not theoretically accurate. 

The Wold decomposition enables us to work with the exact dynamics of the projected process, which then allows us to derive precise asymptotic results. 
The task set for this paper is of a purely predictive nature: we assume knowing the dependence structure and do not perform model selection or covariance estimation.  
{This will be the topic of a subsequent paper. }

 The truncated process $(X_{D,n})_{n\in\Z}$ based on $D$ basis functions is called subprocess.  We show that every subprocess of a stationary (and invertible) functional process is again stationary (and invertible). We then use an isometric isomorphy to a $D$-dimensional vector process to compute the best linear predictor of $(X_{D,n})_{n\in\Z}$ by the Multivariate Innovations Algorithm (see e.g. \cite{brockwell}). 
%Here we also prove important properties of the subprocess, which is isomorphic to a $D$-dimensional vector process, which is a linear subprocess of the original functional linear process.
 
As a special example we investigate the functional moving average process of finite order. We prove that every subprocess is again a functional moving average process of same order or less. Moreover, for this process the Innovations Algorithm simplifies. 
Invertibility is a natural assumption in the context of prediction (cf. \cite{brockwell}, Section 5.5, and \cite{nsiri}), and we require it when proving limit results.
The theoretical results on the structure of $(X_{D,n})_{n\in\Z}$ enable us to quantify the prediction error in \eqref{goal}. As expected, it can be decomposed in two terms, one due to the dimension reduction, and the other due to the statistical prediction error of the $D$-dimensional model. However, the goal of consistency as in \eqref{goal} is not satisfied, as the error due to dimension reduction does not depend on the sample size.

Finally, in order to satisfy \eqref{goal}, we propose a modified version of the Innovations Algorithm. The idea is to increase $D$ together with the sample size. Hence the iterations of our modified Innovations Algorithm are based on increasing subspaces. Here we focus on the eigenfunctions of the covariance operator of $X$ as orthonormal basis of the function space. 

Our main result states that the prediction error is a combination of two tail sums, one involving  operators of the inverse representation of the process, and the other the eigenvalues of the covariance operator. 
We obtain a computationally tractable functional linear predictor for stationary invertible functional linear processes. As the sample size tends to infinity the predictor satisfies \eqref{goal} with a rate depending on the eigenvalues of the covariance operator and of the spectral density operator.

Our paper is organized as follows. 
After  summarizing prerequisites of functional time series in Section~\ref{sec2}, we recall in Section~\ref{sec3} the framework of prediction in infinite dimensional Hilbert spaces, mostly based on the work of Bosq (see \cite{bosq,bosq2,bosq2014}). 
Here we also clarify the difficulties of linear prediction in infinite dimensional function spaces. 
In Section~\ref{sec4} we propose an Innovations Algorithm based on a finite dimensional subprocess of $X$.
The predictor proposed in Section~\ref{sec4}, though quite general,  does not satisfy \eqref{goal}.
Hence, in Section~\ref{sec5} we project the process on a finite-dimensional subspace spanned by the eigenfunctions of the covariance operator of $X$, and formulate the prediction problem in such a way that the dimension of the subprocess increases with the sample size. 
A modification of the Innovations Algorithm then yields a predictor, which satisfies \eqref{goal} and remains computationally tractable. 
The proof of this result requires some work and is deferred to Section~\ref{proofs} along with some auxiliary results.

%Even though observations will generally be discretely observed, we will consider them being realisations of a functional process. For an introduction to the transformation of the discrete observations to functions, see for example \cite{ramsay1}, where one can also find an introduction to other topics of functional data analysis.

%  (see \cite{hoermannkoko} for an overview of functional time series, and \cite{panaretros} for an introduction to the spectral domain of functional time series). 

\section{Methodology} \label{sec2}

Let $H =L^2([0,1])$ be the real Hilbert space of square integrable functions with norm $\|x\|_{H}=(\int_0^1 x^2(s)ds)^{1/2}$ generated by the inner product $\left\langle x , y \right\rangle=\int_0^1 x(s)y(s)ds$ for $x,y\in H$.  
We denote by $\mathcal{L}$ the space of bounded linear operators acting on $H$. 
If not stated differently, for $A\in\call$ we take the standard operator norm $\|A\|_{\mathcal{L}} = \sup_{ \|x\|\leq 1} \|Ax\|$.
Its adjoint $A^*$ is defined by $\langle Ax,y \rangle = \langle x , A^* y\rangle$ for $x,y\in H$.   
The operator $A\in\call$ is called { nuclear operator} (denoted by $\caln$), if 
$\Vert A\Vert_{\caln}=\sum_{j=1}^{\infty} \langle A e_j,e_j\rangle<\infty$ for some (and hence all) orthonormal basis (ONB) $(e_j)_{j\in\N}$ of $H$. 
If additionally $A$ is self-adjoint, then $\Vert A\Vert_{\caln}=\sum_{j=1}^\infty \vert \la_j\vert <\infty$, where $(\la_j)_{j\in\N}$ are the eigenvalues of $A$.
We shall also use the estimate $\|A B\|_{\caln} \le \|A\|_\call \|B\|_{\caln}$ for $A\in\call$ and $B\in\caln$.
For an introduction and more insight into Hilbert spaces we relied on Chapters~3.2 and 3.6 in \cite{simon}.

Let $\calb_H$ be the Borel $\sigma$-algebra of subsets of $H$.
All random functions are defined on a probability space $(\Omega,\mathcal{A},\mathcal{P})$ and are $\cala-\calb_H$-measurable.
Then the space of square integrable random functions  $L^2_{H}:=L^2(\Omega, \mathcal{A},\mathcal{P})$ is a Hilbert space with inner product 
$\mathbb{E}\left\langle X,Y\right\rangle=\E\int_0^1X(s)Y(s)ds$ for $X,Y \in L^2_{H}$. 
Furthermore, we say that $X$ is { integrable} if $\E \Vert X \Vert = \E [ (\int_0^1 X^2(t) dt) ^ {1 /2}] < \infty$. 

From Lemma~1.2 of \cite{bosq} we know that $X$ is a random function with values in $H$ if and only if $\langle \mu, X\rangle$ is a real random variable for every $\mu\in H$.
Hence, the following definitions are possible.

\bde\label{moments}
(i) \, If $X \in L^2_{H}$ is integrable, then there exists a unique $\mu \in H$ such that $\E \langle y, X \rangle = \langle y , \mu \rangle$ for $y \in H$.
It follows that $\E X(t) = \mu(t)$ for almost all $t \in [0,1]$, and $\E X \in H$ is called the {\em expectation} of $X$.\\
(ii) \,   If $X \in L^2_{H}$ and $\E X=0\in H$, the {\em covariance operator} of $X$ is defined as
\begin{align*}
	C_X(y) = \E [\langle X, y \rangle X], \quad y \in H.
\end{align*}
(iii) \,  If $X,Y \in L^2_{H}$ and $\E X = \E Y=0$, the {\em cross covariance operator} of $X$ and $Y$ is defined as 
\begin{align*}
	C_{X,Y}(y) = C_{Y,X}^*(y)=\E [\langle X, y \rangle Y],  \quad y \in H.
\end{align*}
%where $C_{Y,X}^*$ denotes the adjoint of $C_{Y,X}$. 
\ede

The operators $C_X$ and $C_{Y,X}$ belong to $\caln$ (cf. \cite{bosq}, Section~1.5). 
Furthermore, $C_X$ is a self-adjoint ($C_X=C_X^*$) and non-negative definite operator with spectral representation
\begin{align*}
	C_X(x)=\sum_{j=1}^{\infty} \lambda_j \langle x, \nu_j \rangle \nu_j, \quad x\in H, 
\end{align*}
for eigenpairs $(\lambda_j,\nu_j)_{j\in \N}$, where $(\nu_j)_{j\in\N}$ is an ONB of $H$ and $(\lambda_j)_{j\in\N}$ is a sequence of positive real numbers such that $\sum_{j=1}^{\infty} \lambda_j < \infty$. 
When considering spectral representations, we assume that the $\lambda_j$ are ordered decreasingly; i.e., $\lambda_i \geq \lambda_k$ for $i<k$.

For ease of notation we introduce the operator 
\begin{align*}
x\otimes y(\cdot)= \langle x ,\cdot \rangle  y, 
\end{align*}
 which allows us to write $C_X = \E [  X\otimes X  ]$ and $C_{X,Y} = \E[X\otimes Y]$. 
 Note also that
 \begin{align}
 \E\Vert X\Vert^2 = \E\Vert X\otimes X\Vert_{\caln}=\Vert C_X\Vert_{\caln}. \label{traceeq}
 \end{align}
  Additionally, the following equalities are useful: for $A\in \call$  and $x_i,y_i\in H$ for $i=1,2$ we have 
\beam\label{otimesprop}
 \barr{rcl}
 A(x_1\otimes y_1) &=& A( \langle x_1, \cdot \rangle y_1 )= \langle x_1,\cdot \rangle Ay_1= x_1\otimes Ay_1,\\
 (x_1+x_2)\otimes(y_1+y_2) &=& x_1\otimes y_1+ x_1\otimes y_2 + x_2 \otimes y_1 + x_2 \otimes y_2.
 \earr
 \eeam
 
We define now functional linear processes and state some of their properties, taken from \cite{bosq},  Section~1.5 and Section~3.1.
We first define the driving noise sequence. 

\bde
%(i) \, $X$ and $Y$ in $L^2_{H}$ with $\E X =\E Y =0$ are {\em strictly orthogonal} if $C_{X,Y}=\E X\otimes Y =0$.\\
 $(\varepsilon_n)_{n\in\mathbb{Z}}$ is {\em white noise (WN) in $L^2_{H}$} if $\E\, \varepsilon_n=0$, $0< \E\|\eps_n\|^2=\sigma^2<\infty$, $C_{\varepsilon_n}=C_{\varepsilon}$ is independent of $n$, and if $C_{\eps_n, \eps_{m}}=0$ for all $n,m\in\Z$, $n\neq m$. 
%(iii) \, $(\varepsilon_n)_{n\in\mathbb{Z}}$ is a  {\em strong white noise}  in $L^2_{H}$ if $\E\,\varepsilon_n=0$, $0< \E\|\eps_n\|^2=\sigma^2<\infty$ and if the sequence is independent and identically distributed (iid). 
\ede

\bde[\cite{bosq}, Definition~7.1]
	Let $(\varepsilon_n)_{n\in\mathbb{Z}}$ be WN and $(\psi_j)_{j\in \mathbb{N}}$ a sequence in $\mathcal{L}$. Define $\psi_0=I_H$, the identity operator on $H$, and let $\mu\in H$. We call $(X_n)_{n\in\Z}$ satisfying
	\begin{align}\label{process}
		X_n =\mu + \sum_{j=0}^{\infty} \psi_j\varepsilon_{n-j}, \quad n\in\mathbb{Z}, 
	\end{align}
	a {\em functional linear process} in $L_H^2$ with mean $\mu$. The series in \eqref{process} converges in probability.
\ede

Note that by definition a functional linear process is causal.
We now state assumptions to ensure stronger convergence of the above series. 

\begin{lemma}[\cite{bosq}, Lemma~7.1(2)]\label{lemstationary}
 Let $(\varepsilon_n)_{n\in\mathbb{Z}}$ be WN and $\sum_{j=0}^{\infty} \Vert \psi_j \Vert_{\mathcal{L}}^2  < \infty $. 
	Then the series in \eqref{process} converges in $L^2_H$ and a.s., and $(X_n)_{n\in\Z}$ is (weakly) stationary.
%	(ii) \, Let $(\varepsilon_n)_{n\in\mathbb{Z}}$ be a strong white noise sequence and $\sum_{j=0}^{\infty} \Vert \psi_j \Vert_{\mathcal{L}}^2  < \infty $. 
%	Then $X_n$ converges in $L^2_H$ and a.s., and $X_n$ is strictly stationary.
\end{lemma}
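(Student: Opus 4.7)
My plan is to prove the three claims separately: convergence in $L^2_H$, almost sure convergence, and (weak) stationarity.

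For convergence in $L^2_H$, I would show that the partial sums $S^{(N)}_n := \mu + \sum_{j=0}^N \psi_j \eps_{n-j}$ form a Cauchy sequence in $L^2_H$. For $M < N$, expanding the squared norm and using the bilinearity of the inner product together with \eqref{otimesprop} gives
\begin{align*}
\E \|S^{(N)}_n - S^{(M)}_n\|^2 = \sum_{j,k=M+1}^N \E\langle \psi_j\eps_{n-j}, \psi_k\eps_{n-k}\rangle.
\end{align*}
Since $(\eps_n)_{n\in\Z}$ is WN, $C_{\eps_{n-j},\eps_{n-k}}=0$ for $j\neq k$, so only the diagonal terms survive, yielding
\begin{align*}
\E \|S^{(N)}_n - S^{(M)}_n\|^2 = \sum_{j=M+1}^N \E\|\psi_j \eps_{n-j}\|^2 \leq \sigma^2 \sum_{j=M+1}^N \|\psi_j\|_{\call}^2,
\end{align*}
which tends to $0$ by the square-summability assumption. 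Completeness of $L^2_H$ then yields a limit $X_n$.

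For almost sure convergence, the idea is to exploit that the summands $Z_j := \psi_j\eps_{n-j}$ are centered and pairwise orthogonal in $L^2_H$ with $\sum_j \E\|Z_j\|^2 \leq \sigma^2\sum_j \|\psi_j\|_\call^2 < \infty$. First, Markov's inequality applied along a geometrically growing subsequence together with Borel--Cantelli extracts $N_k\uparrow\infty$ with $S^{(N_k)}_n \to X_n$ a.s. To control the oscillation between successive indices, I would invoke a Hilbert-space version of the Kolmogorov/Doob maximal inequality valid for centered orthogonal summands, bounding $\E\max_{N_k\leq N\leq N_{k+1}}\|S^{(N)}_n - S^{(N_k)}_n\|^2$ by $\sum_{j=N_k+1}^{N_{k+1}}\|\psi_j\|_\call^2\sigma^2$, and then a second Borel--Cantelli step upgrades subsequential a.s. convergence to full a.s. convergence. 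This is the main obstacle, because WN gives only uncorrelatedness rather than independence, so the classical Kolmogorov three-series argument does not apply verbatim; the maximal inequality for orthogonal sequences is the right substitute.

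For (weak) stationarity, $\E X_n=\mu$ is immediate by linearity and $\E\eps_j=0$. For the covariance structure I compute
\begin{align*}
C_{X_n-\mu,X_m-\mu}(y) = \sum_{j,k=0}^{\infty} \E\big[\langle \psi_j\eps_{n-j}, y\rangle \,\psi_k\eps_{m-k}\big] = \sum_{\substack{j,k\geq 0\\ n-j=m-k}} \psi_k\, C_\eps\, \psi_j^*(y),
\end{align*}
where again the cross terms vanish by WN. The surviving sum depends only on the lag $n-m$, which together with the fact that $\E X_n$ is constant in $n$ is exactly weak stationarity in the sense required by the subsequent sections.
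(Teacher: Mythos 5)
The paper offers no proof of this lemma: it is imported verbatim from \cite{bosq}, Lemma~7.1(2), so there is no in-text argument to compare against and your proposal has to stand on its own. Two of your three parts do. The $L^2_H$ convergence is correct and standard: the cross terms $\E\langle \psi_j\eps_{n-j},\psi_k\eps_{n-k}\rangle$ vanish because $C_{\eps_{n-j},\eps_{n-k}}=0$, the diagonal terms are bounded by $\sigma^2\Vert\psi_j\Vert_\call^2$, and completeness of $L^2_H$ finishes the argument. The weak-stationarity computation is also right; the only point to make explicit is that interchanging the expectation with the double series is justified by the $L^2_H$ convergence you have just established.

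The almost sure convergence step, however, contains a genuine gap. You correctly note that the paper's WN assumption gives only pairwise uncorrelatedness, so Kolmogorov's three-series theorem is unavailable; but the substitute you invoke --- a maximal inequality bounding $\E\max_{N_k\le N\le N_{k+1}}\Vert S_n^{(N)}-S_n^{(N_k)}\Vert^2$ by $\sigma^2\sum_{j=N_k+1}^{N_{k+1}}\Vert\psi_j\Vert_\call^2$ for merely orthogonal summands --- does not exist. Doob's inequality requires a (sub)martingale and Kolmogorov's requires independence; for orthogonal summands the correct statement is the Rademacher--Menshov maximal inequality, which carries an extra factor of order $\big(\log(N_{k+1}-N_k)\big)^2$. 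Correspondingly, square-summability of the variances alone does not guarantee almost everywhere convergence of an orthogonal series (this is Menshov's classical counterexample, and it is exactly why the Menshov--Rademacher theorem demands $\sum_j(\log j)^2\,\E\Vert Z_j\Vert^2<\infty$). Your route would therefore only deliver a.s.\ convergence under the stronger hypothesis $\sum_j(\log j)^2\Vert\psi_j\Vert_\call^2<\infty$. To close the gap one needs additional structure on $(\eps_n)_{n\in\Z}$: if the noise is independent (Bosq's strong white noise), the It\^o--Nisio/L\'evy equivalence for sums of independent Hilbert-space-valued random variables upgrades convergence in probability to a.s.\ convergence; if $(\eps_n)$ is a martingale difference sequence, Doob's maximal inequality applies and your geometric-subsequence scheme goes through verbatim. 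Under the bare orthogonality of the paper's WN definition, the a.s.\ claim cannot be obtained by the argument you sketch, and you should either add one of these hypotheses or confine the conclusion to $L^2_H$ convergence and weak stationarity (which is all that the rest of the paper actually uses).
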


Strict stationarity of a functional linear process can be enforced by assuming that  $(\varepsilon_n)_{n\in\Z}$ is additionally independent. 
In our setting weak stationarity will suffice. From here on, without loss of generality we set $\mu=0$. For a stationary process $(X_n)_{n\in\Z}$, the covariance operator with lag $h$ is denoted by
\begin{align}
C_{X;h}=\E[X_0\otimes X_h],\quad h\in \Z. \label{cxh}
\end{align}

We now define the concept of invertibility of a functional linear process, which is a natural assumption in the context of prediction; cf.  \cite{brockwell}, Chapter~5.5 and \cite{nsiri}. 

\bde[\cite{merlevede}, Definition~2]\label{definvertible}
A functional linear process $(X_n)_{n\in\Z}$ is said to be {\em invertible} if it admits the representation
	\begin{align}
		X_n= \varepsilon_n + \sum_{j=1}^{\infty} \pi_j X_{n-j}, \quad n\in\mathbb{Z}, 	\label{invertible}
	\end{align}
for $\pi_j\in \call$ and $\sum_{j=1}^\infty \Vert \pi_j \Vert_{\call} < \infty$.
\ede

Note that, as for univariate and multivariate time series models, every stationary causal  functional autoregressive moving average (FARMA) process is a functional linear process (see \cite{spangenberg}, Theorem~2.3). 
Special cases include functional autoregressive processes of order $p\in\N$ (\FAR$(p)$), which have been thoroughly investigated. 
Our focus is on functional linear models, with the {functional moving average process of order $q\in\N$} (\FMA$(q)$) as illustrating example, which we investigate in Section~\ref{4.2}.

\bde\label{defFMA}
	For $q\in\N$ a $\FMA(q)$ is a functional linear process $(X_n)_{n\in \mathbb{Z}}$ in $L^2_{H}$ such that
	\begin{align}
		X_n = \varepsilon_n + \sum_{j=1}^q \gamma_j \varepsilon_{n-j} , \quad n \in \mathbb{Z},  \label{FMA}
	\end{align}
	for WN $(\eps_n)_{n\in\Z}$  and $\gamma_j \in \call$ for $j=1,\dots,q$. 
\ede

A \FMA$(q)$ process can be characterized as follows:

\begin{proposition} [\cite{blanke}, Prop.~10.2] \label{maq}
	A  stationary  functional linear process $(X_n)_{n\in\mathbb{Z}}$ in $L^2_{H}$ is a \FMA$(q)$ for some $q\in\N$ if and only if $C_{X;q}\neq 0$  and $C_{X;h}=0$ for $|h|>q$.
\end{proposition}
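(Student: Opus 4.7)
For the ``only if'' direction I would do a direct moment computation. Substituting $X_n=\sum_{j=0}^q\gamma_j\varepsilon_{n-j}$ with $\gamma_0=I_H$ into $C_{X;h}=\E[X_0\otimes X_h]$, expanding bilinearly via \eqref{otimesprop} together with the dual identity $(Ax)\otimes y=(x\otimes y)A^*$, and applying the WN property $\E[\varepsilon_i\otimes\varepsilon_j]=\delta_{ij}C_\varepsilon$, the double sum collapses to
\begin{align*}
C_{X;h}=\sum_{\substack{0\le j\le q\\ 0\le j+h\le q}}\gamma_{j+h}\,C_\varepsilon\,\gamma_j^*.
\end{align*}
For $|h|>q$ this index set is empty, so $C_{X;h}=0$; for $h=q$ only $j=0$ survives and $C_{X;q}=\gamma_q C_\varepsilon$, which is nonzero since $\gamma_q\neq 0$ (as $q$ is the order of the process) and the driving noise is non-degenerate.

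For the ``if'' direction, my strategy is to combine the functional Wold decomposition with the orthogonality hidden in the hypothesis. The Hilbert-space Wold theorem applied to the stationary linear process $X$ yields a representation $X_n=\sum_{j=0}^\infty\psi_j\xi_{n-j}$ with $\psi_0=I_H$ and WN innovations $\xi_n$ such that each scalar projection $\langle\xi_n,y\rangle$ lies in the closed linear span in $L^2(\Omega,\R)$ of $\{\langle X_k,z\rangle:k\le n,\ z\in H\}$. The assumption $C_{X;h}=0$ for $h>q$ is equivalent to $\langle X_m,z\rangle\perp\langle X_k,y\rangle$ in $L^2(\Omega)$ whenever $m-k>q$; passing to closed linear spans then gives $\langle X_m,z\rangle\perp\langle\xi_n,y\rangle$ for $m-n>q$, i.e.\ $\E[X_m\otimes\xi_n]=0$. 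Evaluating this last expectation through the Wold representation yields the operator equation $C_\xi\psi_{m-n}^*=0$, which in turn gives $\E\|\psi_j\xi_0\|^2=\mathrm{tr}(\psi_j C_\xi \psi_j^*)=0$ for $j>q$. Hence $\psi_j\xi_{n-j}=0$ a.s.\ for such $j$, the Wold sum truncates at $q$, and $(X_n)$ is an \FMA\ of order at most $q$. The assumption $C_{X;q}\neq 0$, combined with the identity $C_{X;q}=\psi_q C_\xi$ applied to this truncated representation, finally forces $\psi_q\neq 0$, fixing the order at exactly $q$.

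The main obstacle is the converse step: extracting a truncation property of the Wold operators from the lag-wise vanishing of the cross-covariance. Two subtleties need care. First, the functional Wold decomposition must be formulated so that the ``past'' consists of all scalar projections of the $X_k$, $k\le n$, which allows one to manipulate ordinary scalar $L^2(\Omega)$ orthogonality rather than attempt projections inside $L^2_H$ directly. Second, since $C_\xi$ is compact and may have a nontrivial kernel, one cannot conclude $\psi_j^*=0$ directly from $C_\xi\psi_j^*=0$; the argument must be routed through the trace identity $\mathrm{tr}(\psi_j C_\xi \psi_j^*)=0$, which delivers the weaker but sufficient conclusion that the $j$-th Wold term vanishes a.s.\ in $L^2_H$.
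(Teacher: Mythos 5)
The paper offers no proof of this proposition — it is imported verbatim from Bosq and Blanke \cite{blanke}, Prop.~10.2 — so your argument can only be measured against the standard literature proof. Your ``only if'' direction is correct and is that standard computation: the collapsed sum $C_{X;h}=\sum_{0\le j,\,j+h\le q}\gamma_{j+h}C_\varepsilon\gamma_j^*$ is right, the index set is empty for $\vert h\vert>q$, and $C_{X;q}=\gamma_q C_\varepsilon$. One small caveat: $\gamma_q\neq 0$ together with $C_\varepsilon\neq 0$ does not by itself force $\gamma_q C_\varepsilon\neq 0$ when $C_\varepsilon$ has a nontrivial kernel; the clean formulation of ``order exactly $q$'' is $\gamma_q C_\varepsilon\neq 0$, i.e.\ $\gamma_q\varepsilon_{n-q}\neq 0$ with positive probability, which is in fact exactly what your computation delivers, so this is a matter of stating the nondegeneracy hypothesis correctly rather than an error.

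The genuine gap is in the converse, at its very first step: you take for granted that the functional Wold decomposition yields $X_n=\sum_{j\ge 0}\psi_j\xi_{n-j}$ with \emph{bounded} operators $\psi_j\in\call$ and $\psi_0=I_H$. In infinite-dimensional $H$ this is precisely what can fail. The Wold decomposition of a regular stationary $H$-valued process writes $X_n=\xi_n+\sum_{j\ge1}P_{\cale_{n-j}}(X_n)$, where $\cale_{n-j}$ is the \LCS\ generated by the single innovation $\xi_{n-j}$, and the projection onto such an \LCS\ is a limit of elements $l_k\xi_{n-j}$ with $l_k\in\call$ that need \emph{not} be of the form $\psi_j\xi_{n-j}$ for a bounded $\psi_j$ — this is exactly the obstruction recalled after Definition~\ref{defblp} (Proposition~2.2 of \cite{bosq2014}) and the reason the present paper retreats to finite-dimensional projections at all. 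Conditional on that representation, your orthogonality argument, the identity $\E[X_m\otimes\xi_n]=C_\xi\psi_{m-n}^*$, and the trace computation $\mathrm{tr}(\psi_jC_\xi\psi_j^*)=0$ are all correct and do show that the Wold sum truncates after lag $q$; your handling of the possible non-injectivity of $C_\xi$ via the trace rather than via cancellation is a good catch. But without boundedness of $\psi_1,\dots,\psi_q$ you have only produced an MA$(q)$ ``in the wide sense'', not a \FMA$(q)$ in the sense of Definition~\ref{defFMA} with $\gamma_j\in\call$; establishing that boundedness (or routing the argument through an operator factorization of the spectral density operator) is where the real content of the converse lies. A second, smaller omission: you should also verify that $X$ has no singular Wold component, so that the decomposition carries no deterministic remainder term.
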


\section{Prediction in Hilbert spaces} \label{sec3}

In a finite dimensional setting, when the random elements take values in  $\mathbb{R}^d$ equipped with the Euclidean norm, the concept of linear prediction of a %(non-observed)
random vector is well known (e.g. \cite{brockwell}, Section~11.4).
The best linear approximation of a random vector $\mathbf{X}$ based on vectors $(\mathbf{X}_1,\dots,\mathbf{X}_n$ is the orthogonal projection of each component of $\mathbf{X}$ on the smallest closed linear subspace of $L^2_{\R}(\Omega,\mathcal{A},\P)$ generated by the components of $\mathbf{X}_i$. 
This results in
$$\wh{\mathbf{X}} :=\sum_{i=1}^n \mathbf{\Phi}_{n,i}\mathbf{X}_i$$
for $\mathbf{\Phi}_{n,i}\in \R^{d\times d}$.
In infinite dimensional Hilbert spaces we proceed similarly, but need a rich enough subspace on which to project. The concept of linear prediction in infinite dimensional Hilbert spaces was introduced by Bosq; see Section~1.6 in \cite{bosq}. 
We start by recalling the notion of $\mathcal{L}$-closed subspaces (\LCS), introduced in \cite{fortet}.

\bde \label{lcs}
	$\mathcal{G}$ is said to be an {\em $\mathcal{L}$-closed subspace} {\rm (\LCS)} of $L^2_{H}$  if $\mathcal{G}$ is a Hilbertian subspace of $L^2_{H}$, and if $X\in \mathcal{G}$ and $l\in\mathcal{L}$ imply $lX \in \mathcal{G}$.
\ede

We now give a characterization of a \LCS\ generated by a subset of $L^2_{H}$.

\bpr[\cite{bosq}, Theorem 1.8] \label{theo18}
	Let $F \subseteq L^2_{H}$. 
	Then the \LCS\ generated by $F$, denoted by \LCS$(F)$, is the closure with respect to $\Vert \cdot \Vert$ of
	\begin{align*}
	F'=\Big\{\sum_{i=1}^k l_iX_i: \ l_i \in \mathcal{L}, \ X_i \in F, \ k\geq 1 \Big\}.
	\end{align*}
\epr

We are now ready to  define the best linear predictor in an infinite dimensional Hilbert space analogous to the finite dimensional setting. 

\bde\label{defblp}
Let %$F_n=\{X_1,\dots,X_n\}$ where
 $X_1,\dots,X_n$ be zero mean random elements in $L^2_{H}$. 
%The orthogonal projection of $X_{n+1}$ on $\LCS(F_n)$,
% denoted by $P_{\LCS(F_n)} (X_{n+1})$,  
Define
\begin{align}
F_n=\{X_1,\dots,X_n\}\quad\mbox{and}\quad \wh{X}_{n+1}=P_{\LCS(F_n)}( X_{n+1}), \label{blp}
\end{align}
i.e., $\wh{X}_{n+1}$ is the orthogonal projection of $X_{n+1}$ on $\LCS(F_n)$.
Then $\wh X_{n+1}$ is called {\em best linear functional predictor} of $X_{n+1}$ based on $\LCS(F_n)$.
\ede

Note however that, since $F'$ is not closed, $\wh{X}_{n+1}$ as in \eqref{blp} has in general not the form $\wh{X}_{n+1}=\sum_{i=1}^n l_iX_i$ for $l_i\in\call$ (e.g. \cite{bosq2014}, Proposition~2.2). 
For functional linear processes the above representation is purely theoretical. 
In the following we develop an alternative approach based on finite dimensional projections of the functional process.

\section{Prediction based on a finite dimensional projection} \label{sec4}

%In this section we propose a different approach to predicting functional processes.
For a stationary functional linear process $(X_n)_{n\in\Z}$
the infinite dimensional setting makes the computation of $\wh{X}_{n+1}$ as in \eqref{blp} basically impossible. 
A natural solution lies in finite dimensional projections of the functional process $(X_n)_{n\in\Z}$. 
For fixed $D\in\N$ we define
\begin{align}\label{AD}
	(\nu_i)_{i\in\N}\quad\mbox{and}\quad A_{D}=\spa\{\nu_1,\dots,\nu_{D}\},
\end{align}
where $(\nu_i)_{i\in\N}$ is some ONB of $H$, and consider the projection of a functional random element on $A_D$.
In \cite{aue} and \cite{KKW} the authors consider the projection of  a \FAR\ process $(X_n)_{n\in\Z}$ on $A_D$, where $\nu_1,\dots,\nu_D$ are the eigenfunctions corresponding to the largest eigenvalues of $C_X$. 
However, instead of considering the true dynamics of the subprocess, they work with an approximation which lies in the same model class as the original functional process; e.g. projections of functional AR$(p)$ models are approximated by multivariate AR$(p)$ models. The following examples clarifies this concept.

\bexam \label{exar}
 Consider a \FAR$(1)$ process $(X_n)_{n\in\Z}$ as defined in Section~3.2 of \cite{bosq} by 
 \begin{align}
 X_n= \Phi X_{n-1} + \varepsilon_n, \qquad n\in\Z \label{FAR},
 \end{align}
 for some $\Phi\in\call$ and WN $(\varepsilon_n)_{n\in\Z}$. 
Let furthermore $(\nu_i)_{i\in\N}$ be an arbitrary ONB of $H$. 
Then \eqref{FAR} can be rewritten in terms of $(\nu_i)_{i\in\N}$ as
\begin{small}
\begin{align*}%\label{ch4_bigrep}
\begin{pmatrix}
  \left\langle X_{n},\nu_1\right\rangle \\
  \vdots\\
  \left\langle X_{n},\nu_D\right\rangle  \\
 \hline
  \left\langle X_{n},\nu_{D+1}\right\rangle\\
  \vdots
 \end{pmatrix}&=\left(\begin{array}{ccc|cc}
\left\langle\phi\nu_1,\nu_1\right\rangle  &\dots &\left\langle\phi\nu_D,\nu_1\right\rangle &\left\langle\phi\nu_{D+1},\nu_1\right\rangle & \dots \\
\vdots & \ddots &\vdots &\vdots &\ddots\\
\left\langle\phi\nu_1,\nu_D\right\rangle  &\dots &\left\langle\phi\nu_D,\nu_D\right\rangle &\left\langle\phi\nu_{D+1},\nu_D\right\rangle &\dots\\
\hline
\left\langle\phi\nu_1,\nu_{D+1}\right\rangle  &\dots &\left\langle\phi\nu_D,\nu_{D+1}\right\rangle &\left\langle\phi\nu_{D+1},\nu_{D+1}\right\rangle &\dots\\
\vdots & \ddots &\vdots &\vdots &\ddots
\end{array}
\right)
\begin{pmatrix}
  \left\langle X_{n-1},\nu_1\right\rangle \\
  \vdots\\
  \left\langle X_{n-1},\nu_D\right\rangle  \\
  \hline
  \left\langle X_{n-1},\nu_{D+1}\right\rangle\\
  \vdots
 \end{pmatrix}
 %\nonumber\\
%&\qquad 
+\begin{pmatrix}
    \left\langle \varepsilon_{n},\nu_1\right\rangle \\
    \vdots\\
    \left\langle \varepsilon_{n},\nu_D\right\rangle  \\
    \hline
    \left\langle \varepsilon_{n},\nu_{D+1}\right\rangle\\
    \vdots
   \end{pmatrix},
\end{align*}
 \end{small}
 which we abbreviate as
\begin{equation}\label{ch4_bockrep}
\begin{pmatrix}
\mathbf{X}_{D,n}\\
\hline
\mathbf{X}_{n}^\infty
\end{pmatrix}
=\left[
\begin{array}{c|c}
\mathbf{\Phi}_D & \mathbf{\Phi}_D^\infty \\
\hline
\vdots\ &\vdots
\end{array}
\right]\begin{pmatrix}
\mathbf{X}_{D,n-1}\\
\hline
\mathbf{X}_{n-1}^\infty
\end{pmatrix}+\begin{pmatrix}
\mathbf{E}_{D,n}\\
\hline
\mathbf{E}_{n}^\infty
\end{pmatrix}.
\end{equation}
%where 
%\begin{align}
%\mathbf{E}_n&:=\left(\left\langle \varepsilon_n,\nu_1\right\rangle,\dots,\left\langle \varepsilon_n,\nu_d\right\rangle \right)^\top,\nonumber\\
%\mathbf{X}_n^\infty&:=\left(\left\langle X_n,\nu_{d+1}\right\rangle,\dots\right)^\top,\nonumber\\
%\mathbf{E}_n^\infty&:=\left(\left\langle \varepsilon_n,\nu_{d+1}\right\rangle,\dots\right)^\top.\nonumber
%\end{align}
We are interested in the dynamics of the $D$-dimensional subprocess $(\mathbf{X}_{D,n})_{n\in\Z}$.  
From \eqref{ch4_bockrep} we find that 
 $(\mathbf{X}_{D,n})_{n\in\Z}$ satisfies
\begin{align}\label{ch4_vectorreal}
\mathbf{X}_{D,n}
%&=\mathbf{\Phi}\mathbf{X}_{n-1}+\mathbf{E}_{n}+\mathbf{\Theta}\mathbf{E}_{n-1} \nonumber\\
&=\mathbf{\Phi}_D\mathbf{X}_{D,n-1}+\mathbf{\Phi}_D^{\infty}\mathbf{X}_{n-1}^{\infty}+\mathbf{E}_{D,n},\quad n\in\Z,
\end{align}
which does in general {not} define a \FAR(1) process. 
This can be seen from the following example, similar to Example~3.7 in \cite{bosq}. For some $a\in\R$ with $0<a<1$ let
 \begin{align*}
 \Phi(x)=a \sum_{j=1}^{\infty} \langle x,\nu_j\rangle \nu_1 + a \sum_{i=1}^{\infty}\langle x,\nu_i \rangle \nu_{i+1}, \qquad x\in H.
 \end{align*}
 Furthermore, assume that $\E\langle \varepsilon_n,\nu_1\rangle ^2 >0$ but $\E\langle \varepsilon_n,\nu_j\rangle ^2 =0$ for all $j>1$. 
 Since $\Vert \Phi \Vert_\call =a < 1$, $(X_n)_{n\in\Z}$ 
 %defined by \eqref{FAR} with $\Phi$ as above 
 is a stationary \FAR(1) process. 
 However, with \eqref{ch4_vectorreal} for $D=1$,
 \begin{align*}
 \mathbf{X}_{1,n}&=\langle X_n,\nu_1 \rangle = a \sum_{j=1}^{\infty} \langle X_{n-1},\nu_j \rangle +\langle \varepsilon_n,\nu_1\rangle\\
 &= a \langle X_{n-1},\nu_1\rangle + a\sum_{j=2}^{\infty} \Big\langle\big(a \sum_{j'=1}^{\infty} \langle X_{n-2},e_{j'}\rangle \nu_1 + a \sum_{i=1}^{\infty}\langle X_{n-2},\nu_i\rangle e_{i+1} + \varepsilon_{n-1}\big),\nu_j \Big\rangle + \langle \varepsilon_{n},\nu_1\rangle \\
 &=a \langle X_{n-1},\nu_1\rangle + a ^2 \langle X_{n-2},\nu_1\rangle + a ^2\sum_{j=2}^{\infty} \langle X_{n-2},\nu_j \rangle + \langle \varepsilon_{n},\nu_1 \rangle\\
 &=\sum_{j=1}^{\infty}a^j \mathbf{X}_{1,n-j} + \mathbf{E}_{n,1}.
 \end{align*}   
 Hence, $(\mathbf{X}_{1,n})_{n\in\Z}$ follows an AR$(\infty)$ model and $(\mathbf{X}_{1,n}\nu_1)_{n\in\Z}$ a \FAR$(\infty)$ model.
 
 In \cite{aue} and \cite{KKW},  $(\mathbf{X}_{D,n})_{n\in\Z}$ is approximated by  $(\mathbf{\tilde{X}}_{D,n})_{n\in\Z}$ satisfying
\begin{align*}
\mathbf{\tilde{X}}_{D,n}
%&=\mathbf{\Phi}\mathbf{X}_{n-1}+\mathbf{E}_{n}+\mathbf{\Theta}\mathbf{E}_{n-1} \nonumber\\
&=\mathbf{\Phi}_D\mathbf{\tilde{X}}_{D,n-1}+\mathbf{E}_{D,n},\quad n\in\Z,
\end{align*}
such that  $(\mathbf{\tilde{X}}_{D,n})_{n\in\Z}$ follows a vector  AR(1) process.
\eexam

We pursue the idea of Example~\ref{exar} for functional linear processes and work with the true dynamics of a finite-dimensional subprocess.

\subsection{Prediction of functional linear processes} \label{4.1}

For a functional linear process $(X_n)_{n\in\Z}$ we focus on the orthogonal projection
\begin{align}
X_{D,n}=P_{A_D}(X_n)=\sum_{j=1}^D \langle X_n , \nu_j \rangle \nu_j, \quad n\in\Z, \label{projX}
\end{align}
for $(\nu_i)_{i\in\N}$ and $A_D$ as in \eqref{AD}.
We will often use the following isometric isomorphism between two Hilbert spaces of the same dimension.

\ble \label{isomor}
Define $A_D$ as in \eqref{AD}. The map $T:A_D\rightarrow \R^D$ defined by $Tx=(\langle x, \nu_i \rangle )_{i=1,\dots,D}$ is a bijective linear mapping with 
$\langle Tx,Ty\rangle_{\R^D} = \langle x,y \rangle$ for all $x,y \in A_D$. 
Hence, $\LCS(F_{D,n})$ is isometrically isomorphic to $\spa\{\mathbf{X}_{D,1},\dots,\mathbf{X}_{D,n}\}$.
Moreover, $(X_{D,n})_{n\in\Z}$ as defined in \eqref{projX} is isometrically isomorphic to the $D$-dimensional vector process  
\beam\label{dvector}
\mathbf{X}_{D,n}:=(\langle X_n, \nu_1\rangle, \dots,\langle X_n, \nu_D \rangle )^\top, \quad n\in\Z.
\eeam 
%consisting of the $D$ scores of \eqref{projX}.
 
\ele

When choosing $(\nu_i)_{i\in\N}$ as the eigenfunctions of the covariance operator $C_X$ of $(X_n)_{n\in\Z}$, the representation \eqref{projX} is a truncated version of the Karhunen-Lo\'eve decomposition (see e.g. \cite{bosq}, Theorem~1.5).

As known from Example~\ref{exar}, the structure of $(X_n)_{n\in\Z}$ does in general not immediately reveal the dynamics of $(X_{D,n})_{n\in\Z}$. 
Starting with the representation of $(X_{D,n})_{n\in\Z}$ as in \eqref{process} with $\psi_0=I_H$ and using similar notation as in \eqref{ch4_vectorreal}, the  $D$-dimensional vector process $(\mathbf X_{D,n})_{n\in\Z}$ can be written as
\begin{align}\label{bold}
\mathbf X_{D,n}= \mathbf{E}_{D,n} + \sum_{j=1}^\infty \Big(\mathbf{\Psi}_{D,j} \mathbf{E}_{D,n-j}+\mathbf{\Psi}_{D,j}^{\infty}\mathbf{E}_{n-j}^{\infty}\Big),\quad n\in\Z,
\end{align}
where the blocks $\mathbf{\Psi}_{D,j}$, $\mathbf{\Psi}^\infty_{D,j}$, $\mathbf{E}_{D,n}=(\langle \varepsilon_n, \nu_1\rangle, \dots, \langle \varepsilon_n , \nu_D \rangle)^\top$, and $\mathbf{E}_n^\infty = (\langle \varepsilon_n, \nu_{D+1}\rangle, \langle \varepsilon_n , \nu_{D+2} \rangle,\dots)^\top$ are defined analogously to the blocks in \eqref{ch4_bockrep}. Note that this is in general not a vector MA$(\infty)$ representation of a process with innovation $(\mathbf{E}_{D,n})_{n\in\Z}$.

The following proposition summarizes general results  on the structure of $(X_{D,n})_{n\in\Z}$.
Its proof is given in Section~\ref{proofs}.
 
 \begin{proposition} \label{properties}
 Let $(X_n)_{n\in\Z}$ be a stationary (and invertible) functional linear process with WN $(\varepsilon_n)_{n\in\Z}$, such that all eigenvalues of the covariance operator $C_{\varepsilon}$ of $(\varepsilon_n)_{n\in\Z}$ are  positive.
 Then $(X_{D,n})_{n\in\Z}$ is also a stationary (and invertible) functional linear process with some WN $(\tilde{\varepsilon}_{D,n})_{n\in\Z}$. $(\tilde{\varepsilon}_{D,n})_{n\in\Z}$ is isometrically isomorphic to the $D$-dimensional vector process $(\tilde{ \mathbf E}_{D,n})_{n\in\Z}$, defined by $
 \tilde{\mathbf{E}}_{D,n}:=(\langle \tilde{\varepsilon}_{D,n}, \nu_1\rangle, \dots,\langle \tilde{\varepsilon}_{D,n}, \nu_D \rangle )^\top$. Furthermore define  $\mathbf{\calm}_{D,n}=\spa\{\mathbf X_{D,t},-\infty<t\leq n\}$. Then
 \begin{align}\label{Epstilde}
 \tilde{ \mathbf E}_{D,n} = \mathbf{E}_{D,n}  +\mathbf{\Psi}_{D,1}^{\infty}(\mathbf{E}_{n-1}^{\infty} - P_{\mathbf{\calm}_{D,n-1}}(\mathbf{E}^\infty_{n-1})):= \mathbf{E}_{D,n} + \mathbf{\Delta}_{D,n-1}, \quad n\in\Z.
 \end{align} 
The lagged covariance operator $C_{X_D;h}$ of $(X_{D,n})_{n\in\Z}$ is given by
  \begin{align}\label{cxdh}
  C_{X_D;h}= \E[P_{A_D} X_0 \otimes P_{A_D} X_h ] = P_{A_D} \E [X_0\otimes X_h ] P_{A_D} = P_{A_D} C_{X;h} P_{A_D},\quad h\in\Z.
  \end{align}
 \end{proposition}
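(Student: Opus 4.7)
My plan is to handle the three assertions in the order: covariance identity \eqref{cxdh}, stationarity and linear-process structure, and finally the innovation formula \eqref{Epstilde} together with invertibility.

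For \eqref{cxdh}, which is cheapest, I would use the tensor identities \eqref{otimesprop} and self-adjointness of the orthogonal projection $P_{A_D}$ to obtain $P_{A_D}X_0 \otimes P_{A_D}X_h = P_{A_D}(X_0 \otimes X_h)P_{A_D}$; taking expectation and using that $T \mapsto P_{A_D}TP_{A_D}$ commutes with $\E$ on nuclear operators gives \eqref{cxdh}. Since the right-hand side depends only on the lag $h$, weak stationarity of $(X_{D,n})_{n\in\Z}$ is immediate.

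For the linear-process structure I would apply the Wold decomposition to the $\R^D$-valued stationary process $(\mathbf X_{D,n})_{n\in\Z}$ (identified with $(X_{D,n})_{n\in\Z}$ via Lemma~\ref{isomor}). Define the innovation $\tilde{\mathbf E}_{D,n} := \mathbf X_{D,n} - P_{\mathbf{\calm}_{D,n-1}}\mathbf X_{D,n}$. Its covariance dominates that of $\mathbf E_{D,n}$ in the Loewner sense, because $\mathbf E_{D,n}$ is orthogonal to $\mathbf{\calm}_{D,n-1}$; and the covariance of $\mathbf E_{D,n}$ is the matrix of $C_\varepsilon$ restricted to $A_D$, which is strictly positive since every eigenvalue of $C_\varepsilon$ is. Pure non-determinism $\cap_n \mathbf{\calm}_{D,n} = \{0\}$ follows from the inclusion $\mathbf{\calm}_{D,n} \subseteq \overline{\spa}\{\langle \varepsilon_s, \nu_i\rangle : s\le n,\, i\in\N\}$ together with the trivial intersection of past noise spans for a WN. Wold then yields an MA representation with a nondegenerate WN $(\tilde\varepsilon_{D,n})_{n\in\Z}$ isometric to $(\tilde{\mathbf E}_{D,n})_{n\in\Z}$.

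For the identification \eqref{Epstilde} — the main obstacle — I would start from the block expansion \eqref{bold} and split
\begin{equation*}
\mathbf X_{D,n} = \mathbf E_{D,n} + \mathbf\Psi_{D,1}^\infty\bigl(\mathbf E_{n-1}^\infty - P_{\mathbf{\calm}_{D,n-1}}\mathbf E_{n-1}^\infty\bigr) + R_{n-1},
\end{equation*}
where $R_{n-1}$ collects the remaining terms. The first summand is orthogonal to $\mathbf{\calm}_{D,n-1}$ as future noise, and the second is orthogonal to $\mathbf{\calm}_{D,n-1}$ as the residual of a projection; hence applying $I-P_{\mathbf{\calm}_{D,n-1}}$ would yield \eqref{Epstilde} once one shows $R_{n-1}\in\mathbf{\calm}_{D,n-1}$. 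This last step is the real work: the matrices $\mathbf\Psi_{D,j}, \mathbf\Psi_{D,j}^\infty$ do not commute with $P_{\mathbf{\calm}_{D,n-1}}$ and the individual past-noise coordinates themselves do not sit in the past span. I would exploit the invertibility hypothesis \eqref{invertible} to rewrite the remaining noise coordinates as convergent linear combinations of the $\mathbf X_{D,t}$, $t\le n-1$, with summability controlled by $\sum_j \|\pi_j\|_\call < \infty$.

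Invertibility of $(X_{D,n})_{n\in\Z}$ is then a by-product: combining \eqref{invertible} with the rewriting developed in the previous step delivers an inverse expansion $\tilde\varepsilon_{D,n} = X_{D,n} - \sum_{j\ge1} \tilde\pi_{D,j} X_{D,n-j}$ with $\sum_j \|\tilde\pi_{D,j}\|_\call < \infty$, as required by Definition~\ref{definvertible}.
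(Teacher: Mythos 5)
Your treatment of \eqref{cxdh}, of stationarity, and of the Wold--decomposition setup (defining $\tilde{\mathbf E}_{D,n}=\mathbf X_{D,n}-P_{\mathbf{\calm}_{D,n-1}}\mathbf X_{D,n}$ and splitting \eqref{bold} so that \eqref{Epstilde} follows once the remainder lies in $\mathbf{\calm}_{D,n-1}$) matches the paper's proof, and you correctly isolate the membership of the remainder in $\mathbf{\calm}_{D,n-1}$ as the crux (the paper in fact asserts this membership with little justification). The genuine problem is the mechanism you propose, in two places, of ``exploiting the invertibility hypothesis to rewrite the remaining noise coordinates as convergent linear combinations of the $\mathbf X_{D,t}$.'' The inverse representation \eqref{invertible} expresses $\varepsilon_{n-j}$ through the full elements $X_{n-j-k}\in H$; taking the first $D$ coordinates yields $\mathbf E_{D,n-j}=\mathbf X_{D,n-j}-\sum_k\bigl(\mathbf\Pi_{D,k}\mathbf X_{D,n-j-k}+\mathbf\Pi^\infty_{D,k}\mathbf X^\infty_{n-j-k}\bigr)$ in the block notation of \eqref{ch4_bockrep}, and the terms $\mathbf\Pi^\infty_{D,k}\mathbf X^\infty_{n-j-k}$ involve coordinates $\langle X_t,\nu_l\rangle$ with $l>D$, which are \emph{not} elements of $\mathbf{\calm}_{D,n-1}$ and cannot be dropped. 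So neither the claimed membership nor an inverse expansion $\tilde\varepsilon_{D,n}=X_{D,n}-\sum_j\tilde\pi_{D,j}X_{D,n-j}$ with $\sum_j\|\tilde\pi_{D,j}\|_\call<\infty$ follows from truncating the $\pi_j$; note also that the innovation of the subprocess is $\tilde{\mathbf E}_{D,n}$, not $\mathbf E_{D,n}$, so even a valid expansion of $\mathbf E_{D,n}$ would not be the representation required by Definition~\ref{definvertible}.

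For invertibility the paper takes a spectral-domain route that your proposal omits entirely and that is where the invertibility of $X$ actually enters: it verifies the hypotheses of Theorem~1 of \cite{nsiri} for the Wold representation of $(\mathbf X_{D,n})_{n\in\Z}$. The key ingredient is Lemma~\ref{invspec}, showing that for an invertible functional linear process with non-degenerate noise the spectral density operator $\calf_X[\omega]$ has only positive eigenvalues (via $\calf_X[\omega]=\frac{1}{2\pi}A[e^{-i\omega}]C_\varepsilon A[e^{-i\omega}]^*$ and $P[z]A[z]=I_H$); combined with the projection identity $\calf_{X_D}[\omega]=P_{A_D}\calf_X[\omega]P_{A_D}$ this gives strict positivity of the spectral density matrix of the subprocess, which together with non-singularity of $C_{\tilde{\mathbf E}_D}=C_{\mathbf E_D}+C_{\mathbf\Delta_D}$ and summability of the Wold coefficients yields invertibility without ever constructing the $\tilde\pi_{D,j}$. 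To complete your argument you would need to supply this spectral argument (or an equivalent transfer of invertibility to the subprocess); the coordinatewise rewriting you sketch will not close the gap.
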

 
%We want to use the structure of the dynamics of $(X_{D,n})_{n\in\Z}$ for prediction. 
By Lemma~\ref{isomor}, $(X_{D,n})_{n\in\Z}$ is isomorphic to the $D$-dimensional vector process $(\mathbf{X}_{D,n})_{n\in\Z}$ as defined in \eqref{dvector}.
The prediction problem can therefore be solved by methods from multivariate time series analysis.
More precisely, we define for fixed $D\in\N$ 
\begin{align*} 
F_{D,n} =\{X_{D,1},\dots,X_{D,n} \}\quad\mbox{and}\quad  \widehat{X}_{D,n+1} = P_{\LCS(F_{D,n})}(X_{n+1}), 
\end{align*}
i.e., $\widehat{X}_{D,n+1}$ is the best linear functional predictor based on $F_{D,n}$  for $n\in\N$.

We formulate the Multivariate Innovations Algorithm for this setting.

\begin{proposition}[\cite{brockwell}, Proposition~11.4.2] \label{prop_iad}
Let $(X_n)_{n\in\Z}$ be a stationary functional linear process  %$A_D=\spa\{\nu_1,\dots,\nu_D\}$ 
and $(X_{D,n})_{n\in\Z}=(P_{A_D}X_n)_{n\in\Z}$ as in \eqref{projX}. 
If $C_{X_D}$ is invertible on $A_D$, then the best linear functional predictor $\widehat{X}_{D,n+1}$ of $X_{n+1}$  based on $\LCS(F_{D,n})$ 
 can be computed by the following set of recursions:
\begin{align}
& \wh{X}_{D,1}=0\quad\mbox{and}\quad V_{D,0}=C_{X_{D};0},\notag\\
&\wh{X}_{D,n+1}= \sum_{i=1}^{n} \theta_{D,n,i} (X_{D,n+1-i}-\wh{X}_{D,n+1-i}), \label{xdhat}\\
&\theta_{D,n,n-i}=\Big(C_{X_{D};n-i} - \sum_{j=0}^{i-1} \theta_{D,n,n-j} \ V_{D,j} \ \theta_{D,i,i-j}^*\Big)V_{D,i}^{-1}, \quad i=1,\dots,n-1, \label{theta} \\
&V_{D,n} =C_{X_{D,n+1}-\wh{X}_{D,n+1}}= C_{X_{D};0} - \sum_{j=0}^{n-1} \theta_{D,n,n-j}V_{D,j}\theta^*_{D,n,n-j} .\label{vd}
\end{align}
\end{proposition}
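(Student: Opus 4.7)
The plan is to transfer the classical Multivariate Innovations Algorithm stated in Proposition~11.4.2 of \cite{brockwell} to the functional setting via the isometric isomorphism of Lemma~\ref{isomor}, after first reducing the target from $X_{n+1}$ to its projection $X_{D,n+1}$.

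First, I would observe that although the best linear functional predictor is defined as $\wh X_{D,n+1}=P_{\LCS(F_{D,n})}(X_{n+1})$, it actually coincides with $P_{\LCS(F_{D,n})}(X_{D,n+1})$. Indeed, by Proposition~\ref{theo18} every element of $\LCS(F_{D,n})$ is an $L^2_H$-limit of finite sums $\sum_i l_i X_{D,i}$ with $l_i\in\call$, and each $l_i X_{D,i}$ takes values in $A_D$. Since $X_{n+1}-X_{D,n+1}=X_{n+1}-P_{A_D}X_{n+1}$ is pointwise orthogonal in $H$ to $A_D$, one obtains $\E\langle X_{n+1}-X_{D,n+1},Y\rangle=0$ for every $Y\in\LCS(F_{D,n})$. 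Hence the two projections agree, and it suffices to predict the $A_D$-valued process $(X_{D,n})_{n\in\Z}$ from its own past.

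Next, I would apply Lemma~\ref{isomor} to transfer the problem to $\R^D$. The map $T$ is an $L^2$-isometric isomorphism from $L^2_{A_D}$ onto $L^2_{\R^D}$ that sends $X_{D,n}$ to $\mathbf X_{D,n}$ and $\LCS(F_{D,n})$ onto $\spa\{\mathbf X_{D,1},\dots,\mathbf X_{D,n}\}$. By Proposition~\ref{properties} $(X_{D,n})_{n\in\Z}$ is stationary, so $(\mathbf X_{D,n})_{n\in\Z}$ is a stationary $\R^D$-valued process whose lag-$h$ covariance matrix $\mathbf C_{X_D;h}$ is the matrix representative of $C_{X_D;h}=P_{A_D}C_{X;h}P_{A_D}$ on the basis $\nu_1,\dots,\nu_D$. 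Invertibility of $C_{X_D}$ on $A_D$ therefore translates into invertibility of $\mathbf C_{X_D;0}$, which is the hypothesis required by the multivariate algorithm to initialize the recursion with $\mathbf V_{D,0}=\mathbf C_{X_D;0}$.

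I would then quote Proposition~11.4.2 of \cite{brockwell}: under these conditions the best linear predictor $\wh{\mathbf X}_{D,n+1}$ in $\spa\{\mathbf X_{D,1},\dots,\mathbf X_{D,n}\}$ is given by the recursions
\begin{align*}
\wh{\mathbf X}_{D,n+1}&=\sum_{i=1}^{n}\mathbf\Theta_{D,n,i}(\mathbf X_{D,n+1-i}-\wh{\mathbf X}_{D,n+1-i}),\\
\mathbf\Theta_{D,n,n-i}&=\Bigl(\mathbf C_{X_D;n-i}-\sum_{j=0}^{i-1}\mathbf\Theta_{D,n,n-j}\mathbf V_{D,j}\mathbf\Theta_{D,i,i-j}^{\top}\Bigr)\mathbf V_{D,i}^{-1},\\
\mathbf V_{D,n}&=\mathbf C_{X_D;0}-\sum_{j=0}^{n-1}\mathbf\Theta_{D,n,n-j}\mathbf V_{D,j}\mathbf\Theta_{D,n,n-j}^{\top},
\end{align*}
with $\wh{\mathbf X}_{D,1}=0$ and $\mathbf V_{D,0}=\mathbf C_{X_D;0}$, and the $\mathbf V_{D,j}$ remaining invertible throughout. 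Applying $T^{-1}$ and identifying matrices with operators on $A_D$ converts sums, adjoints and matrix products into the corresponding operators $\theta_{D,n,i}$, $V_{D,j}$ and $\theta^*$, yielding exactly \eqref{xdhat}--\eqref{vd}. The main thing to be careful with is the bookkeeping in this last step: one must verify that the transpose of the matrix representative is the matrix of the Hilbert-space adjoint in the ONB $\nu_1,\dots,\nu_D$ and that $V_{D,j}$ inherits invertibility on $A_D$ from invertibility of $\mathbf V_{D,j}$, so that $V_{D,i}^{-1}$ in \eqref{theta} is well defined at every stage of the recursion.
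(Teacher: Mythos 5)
Your overall route is exactly the one the paper intends: Proposition~\ref{prop_iad} is stated as a direct import of Proposition~11.4.2 of \cite{brockwell}, transported through the isometric isomorphism of Lemma~\ref{isomor}, and the paper offers no written proof beyond that citation. The transfer of the recursions, the identification of matrix transposes with Hilbert-space adjoints in the ONB $\nu_1,\dots,\nu_D$, and the remark that the invertibility of the $V_{D,j}$ must be propagated through the recursion are all correct and are the right things to check.

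The one step that does not hold as written is your justification of $P_{\LCS(F_{D,n})}(X_{n+1})=P_{\LCS(F_{D,n})}(X_{D,n+1})$. You claim that each $l_iX_{D,i}$ with $l_i\in\call$ takes values in $A_D$; this is false, since a general bounded operator on $H$ need not map $A_D$ into $A_D$ (take $l=\nu_1\otimes\nu_{D+1}$, so that $lX_{D,i}=\langle X_{D,i},\nu_1\rangle\,\nu_{D+1}\in A_D^{\perp}$). Consequently, with the literal $\LCS$ of Proposition~\ref{theo18} the space $\LCS(F_{D,n})$ contains elements with values outside $A_D$, the projection of $X_{n+1}$ onto it is in general not $A_D$-valued, and it does not coincide with the projection of $X_{D,n+1}$: already for $n=D=1$ one has $\LCS(F_{1,1})=\{\langle X_1,\nu_1\rangle v:v\in H\}$ and the minimizer $v^*=\E[\langle X_1,\nu_1\rangle X_2]/\E\langle X_1,\nu_1\rangle^2$ has nonzero components outside $A_1$, whereas the recursion \eqref{xdhat} produces an $A_1$-valued element. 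The repair is to work throughout with the $A_D$-valued sub-LCS generated by operators mapping $A_D$ into itself --- equivalently, to invoke the identification of $\LCS(F_{D,n})$ with $\spa\{\mathbf X_{D,1},\dots,\mathbf X_{D,n}\}$ asserted in Lemma~\ref{isomor}, which is the convention the paper uses implicitly (e.g.\ when it asserts $X_{D,n+1}-\wh X_{D,n+1}\in A_D$ in the proof of Theorem~\ref{ninfty}). On that subspace your orthogonality argument for $X_{n+1}-X_{D,n+1}$ is valid and the rest of your proof goes through unchanged.
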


The recursions can be solved explicitly in the following order: $V_{D,0},\theta_{D,1,1},V_{D,1},\theta_{D,2,2},\theta_{D,2,1}\dots$.
Thus we found a predictor, which is  in contrast to $\widehat{X}_{n+1}$ from \eqref{blp} easy to compute.
However, since we are not using all available information, we loose predictive power. 
To evaluate this loss we bound the prediction error. 
We show that the error bound can be decomposed in two terms. 
One is due to the dimension reduction, and the other to the statistical prediction error of the finite dimensional model.

\begin{theorem}\label{ninfty}
Let $(X_n)_{n\in\mathbb{Z}}$ be a stationary functional linear process with WN $(\varepsilon_n)_{n\in\Z}$ such that all eigenvalues of $C_{\varepsilon}$ are positive. 
Assume furthermore that $C_{X}$ is invertible on $A_D$.
Recall the best linear functional predictor $ \widehat{X}_{n+1}$ from Definition~\ref{defblp}.
\\
(i) Then for all  $n\in\N$ the prediction error is bounded:
\begin{align}
E\Vert X_{n+1} - \widehat{X}_{n+1} \Vert ^2 &\leq E \Vert X_{n+1} - \wh{X}_{D,n+1} \Vert^2 = \sum_{i>D} \langle C_{X}\nu_i, \nu_i \rangle + \Vert  V_{D,n}\Vert_{\caln} ^2 .  \label{theo1i}
\end{align}
(ii) If additionally  $(X_n)_{n\in\mathbb{Z}}$ is invertible, then
\begin{align*}
&\lim_{n\rightarrow\infty}E\Vert X_{n+1} - \wh{X}_{D,n+1} \Vert^2 = \sum_{i>D} \langle C_{X}\nu_i, \nu_i \rangle + \Vert C_{\tilde{\varepsilon}_{D}} \Vert_{\caln}^2.
%&\lim_{n\rightarrow\infty} \Vert (\theta_{d,n} - \alpha_K)(x) \Vert\rightarrow 0, \quad \text{ for all } x\in H.
\end{align*}
\end{theorem}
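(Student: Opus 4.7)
The plan is to exploit the orthogonal decomposition $H=A_D\oplus A_D^\perp$ together with Lemma~\ref{isomor} to reduce both parts of the statement to Parseval's identity combined with a standard convergence property of the Multivariate Innovations Algorithm. For (i) the inequality $\E\|X_{n+1}-\wh{X}_{n+1}\|^2\le\E\|X_{n+1}-\wh{X}_{D,n+1}\|^2$ is a variational statement: because $P_{A_D}\in\call$, Proposition~\ref{theo18} gives $X_{D,i}=P_{A_D}X_i\in\LCS(F_n)$, so $\LCS(F_{D,n})\subseteq\LCS(F_n)$ and the orthogonal projection onto the larger subspace can only be closer to $X_{n+1}$. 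For the equality I would decompose
\begin{align*}
X_{n+1}-\wh{X}_{D,n+1}=(X_{n+1}-X_{D,n+1})+(X_{D,n+1}-\wh{X}_{D,n+1}),
\end{align*}
observing that the Innovations recursion of Proposition~\ref{prop_iad}, read through the isomorphism $T$ of Lemma~\ref{isomor} as a recursion in $\R^D$, produces $\wh{X}_{D,n+1}\in A_D$, so the two summands are pointwise orthogonal in $H$. Pythagoras then splits $\E\|\cdot\|^2$ into two pieces: Parseval together with $\E\langle X_{n+1},\nu_i\rangle^2=\langle C_X\nu_i,\nu_i\rangle$ evaluates the first as $\sum_{i>D}\langle C_X\nu_i,\nu_i\rangle$, and identity \eqref{traceeq} applied to $V_{D,n}=C_{X_{D,n+1}-\wh{X}_{D,n+1}}$ evaluates the second as $\|V_{D,n}\|_\caln$, matching the stated expression.

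For (ii), Proposition~\ref{properties} shows that under the stated hypotheses $(X_{D,n})_{n\in\Z}$ is itself a stationary invertible functional linear process with white noise $(\tilde\varepsilon_{D,n})_{n\in\Z}$ of covariance $C_{\tilde\varepsilon_D}$. Transferring the prediction problem through Lemma~\ref{isomor} to the $D$-dimensional vector process $(\mathbf X_{D,n})_{n\in\Z}$, the Multivariate Innovations Algorithm of Proposition~\ref{prop_iad} computes the best linear predictor based on $\mathbf X_{D,1},\dots,\mathbf X_{D,n}$. For a stationary invertible vector time series this finite-past predictor converges in $L^2$ as $n\to\infty$ to the best linear predictor based on the infinite past, which by the Wold decomposition together with invertibility equals $\mathbf X_{D,n+1}-\tilde{\mathbf E}_{D,n+1}$ (see \cite{brockwell}, Ch.~11). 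Hence $V_{D,n}\to C_{\tilde\varepsilon_D}$ as operators on the fixed finite-dimensional space $A_D$, on which all norms are equivalent, so $\|V_{D,n}\|_\caln\to \|C_{\tilde\varepsilon_D}\|_\caln$; combined with (i) this delivers the asserted limit.

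The main obstacle is the identification in (ii) of the limiting prediction-error covariance with $C_{\tilde\varepsilon_D}$ rather than with the naive $P_{A_D}C_\varepsilon P_{A_D}$. The correction term $\mathbf\Delta_{D,n-1}$ appearing in \eqref{Epstilde} accounts precisely for the contribution of the discarded high-coordinate innovation $\mathbf E_n^\infty$ to the past of $\mathbf X_D$, and one must verify that $(\tilde{\mathbf E}_{D,n})_{n\in\Z}$ is the Wold innovation of $(\mathbf X_{D,n})_{n\in\Z}$, so that the standard $L^2$-convergence of the vector Innovations Algorithm really delivers the stated limit; the invertibility of $(X_{D,n})_{n\in\Z}$ granted by Proposition~\ref{properties}, together with the strict positivity of the eigenvalues of $C_\varepsilon$ ensuring invertibility of $C_{X_D;0}$ on $A_D$, is exactly what makes this identification go through.
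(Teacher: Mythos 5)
Your proof is correct and follows essentially the same route as the paper: the inclusion $\LCS(F_{D,n})\subseteq\LCS(F_n)$ plus the projection theorem for the inequality, the orthogonal (Pythagorean) decomposition of $X_{n+1}-\wh{X}_{D,n+1}$ with Parseval and \eqref{traceeq} for the equality, and Proposition~\ref{properties} together with the standard $L^2$-convergence of the Multivariate Innovations Algorithm for invertible processes in part (ii). One small remark: your computation gives $\Vert V_{D,n}\Vert_{\caln}$ without the square, which agrees with the paper's own derivation; the exponent appearing in the displayed statement is evidently a typo.
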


\begin{proof}
(i) Since $\wh{X}_{D,n+1}=P_{\LCS(F_{D,n})}(X_{n+1})$ and $\widehat{X}_{n+1}=P_{\LCS({F_n})}(X_{n+1})$, and since $ \LCS(F_{D,n}) \subseteq \LCS({F_n})$, the first inequality follows  immediately from the projection theorem.
Furthermore, since $X_{n+1} - X_{D,n+1}\in A_D^{\bot}$ (the orthogonal complement of $ A_D$) and $X_{D,n+1},  X_{D,n+1} -\wh{X}_{D,n+1} \in A_D$, we have $\langle X_{n+1} - X_{D,n+1},  X_{D,n+1} -\wh{X}_{D,n+1} \rangle =0$. 
Therefore,
\begin{align*}
\E\Vert X_{n+1} - \wh{X}_{D,n+1} \Vert^2 &= \E\Vert X_{n+1} - X_{D,n+1}  +  X_{D,n+1}- \wh{X}_{D,n+1} \Vert^2\\
&= \E\Vert X_{n+1} - X_{D,n+1}\Vert ^2   + \E\Vert  X_{D,n+1} -\wh{X}_{D,n+1} \Vert^2.
\end{align*}
By \eqref{traceeq} we have $ \E\Vert  X_{D,n+1} -\wh{X}_{D,n+1} \Vert^2 = \Vert\E[(  X_{D,n+1} -\wh{X}_{D,n+1} )\otimes ( X_{D,n+1} -\wh{X}_{D,n+1})]\Vert_{\caln}$, which is equal to $\Vert V_{D,n}\Vert_{\caln}$ by \eqref{vd}. 
Furthermore, %since $(\nu_i)_{i\in\N}$ are orthonormal,
\begin{align*}
 \E\Vert X_{n+1} - X_{D,n+1}\Vert ^2 &= \E\big\langle \sum_{i>D} \langle X_{n+1},\nu_i\rangle \nu_i, \sum_{j>D} \langle X_{n+1},\nu_j\rangle \nu_j \big\rangle\\
 &= \sum_{i,j>D} \E \big\langle X_{n+1}\langle X_{n+1} ,\nu_i\rangle, \nu_j \big\rangle\langle \nu_i,\nu_j\rangle\\
 &=\sum_{i>D}\langle C_X\nu_i,\nu_i\rangle.
\end{align*}
\noindent (ii) By (i), what is left to show is that $ \Vert  V_{D,n}\Vert_{\caln} ^2 \rightarrow  \Vert C_{\tilde{\varepsilon}_{D}} \Vert_{\caln}^2$ for $n\rightarrow \infty$. 
However, this is an immediate consequence of the Multivariate Innovations Algorithm under the assumption that $(X_{D,n})_{n\in\Z}$ is invertible (see Remark~4 in Chapter~11 of \cite{brockwell}). 
However, invertibility of $(X_{D,n})_{n\in\Z}$ is given by Proposition~\ref{properties}, which finishes the proof.
\end{proof}

The above theorem states that for a stationary, invertible functional linear process,
for increasing sample size the predictor restricted to the $D$-dimensional space performs arbitrarily well in the sense that in the limit only the statistical prediction error remains. 
However, our goal in \eqref{goal} is not satisfied. The dimension reduction induces the additional error term $\sum_{i>D}\langle C_X(\nu_i),\nu_i\rangle $ independently of the sample size.
If $A_D$ is spanned by eigenfunctions of the covariance operator $C_X$ with eigenvalues $\lambda_i$, the prediction error due to dimension reduction  is $\sum_{i>D}\lambda_i$.  

We now investigate the special case of functional moving average processes with finite order.

\subsection{Prediction of \FMA$(q)$} \label{4.2}

\FMA$(q)$ processes for $q\in\N$ as in Definition~\ref{FMA} are an interesting and not very well studied class of functional linear processes. We start with the \FMA(1) process as example.

\bexam\label{exma}
Consider a FMA(1) process $(X_n)_{n\in\Z}$ defined by
$$X_n= \psi \varepsilon_{n-1} + \varepsilon_n,\quad n\in\Z,$$ 
for some $\psi\in \call$ and WN $(\varepsilon_n)_{n\in\Z}$.
The representation of \eqref{bold} reduces to
 \begin{align*}
 \mathbf{X}_{D,n}= \mathbf{\Psi}_D \mathbf{E}_{D,n-1} + \mathbf{\Psi}_D^{\infty}\mathbf{E}_{n-1}^{\infty} + \mathbf{E}_{D,n}, \quad n\in\Z.
 \end{align*}  
 As $\mathbf{X}_{D,n}$ depends on $ \mathbf{E}_{D,n-1}$, $\mathbf{E}_{n-1}^{\infty}$ and  $\mathbf{E}_{D,n}$, it is in general not a vector MA$(1)$ process with WN $(\mathbf{E}_{D,n})_{n\in\Z}$. 
 \halmos
\eexam

However, we can state the dynamics of a finite dimensional subprocess of a \FMA$(q)$ process.

\begin{theorem}\label{th4}
Let $(X_n)_{n\in\Z}$ be a stationary \FMA$(q)$ process for $q\in\N$ and $A_D$ be as in \eqref{AD}. 
Then $(X_{D,n})_{n\in\Z}=(P_{A_D}X_n)_{n\in\Z}$ as defined in \eqref{projX} is a stationary \FMA$(q^*)$ process for $q^*\leq q$ satisfying
\begin{align}
X_{D,n}=\sum_{j=1}^{q^*}\tilde\psi_{D,j}  \tilde{\varepsilon}_{D,n-j} + \tilde{\varepsilon}_{D,n}, \qquad n\in \Z, \label{MAK}
\end{align}
where $\tilde\psi_{D,j} \in \call$ for $j=1,\dots,q^*$ and $(\tilde{\varepsilon}_{D,n})_{n\in\Z}$ is WN.
Moreover,  $(\tilde{\varepsilon}_{D,n})_{n\in\Z}$ is isometrically isomorphic to $(\tilde{\mathbf{E}}_{D,n})_{n\in\Z}$  as defined  in \eqref{Epstilde}.
If $q^*=0$, then $(X_{D,n})_{n\in\Z}$ is WN.
\end{theorem}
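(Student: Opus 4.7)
The plan is to reduce Theorem~\ref{th4} to a direct application of the characterization of \FMA($q$) processes given in Proposition~\ref{maq}, using the lagged covariance identity \eqref{cxdh} from Proposition~\ref{properties}.

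First I would invoke Proposition~\ref{properties} to conclude that $(X_{D,n})_{n\in\Z}$ is a stationary functional linear process driven by a WN sequence $(\tilde\varepsilon_{D,n})_{n\in\Z}$ which is isometrically isomorphic to $(\tilde{\mathbf E}_{D,n})_{n\in\Z}$ as defined in \eqref{Epstilde}. This takes care of the structural form of $(X_{D,n})_{n\in\Z}$ and of the identification of its innovation sequence; what remains is purely the determination of the MA order.

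Second, I would use \eqref{cxdh} to transfer the information about the lagged covariance operators of $(X_n)_{n\in\Z}$ to those of $(X_{D,n})_{n\in\Z}$. Since $(X_n)_{n\in\Z}$ is assumed to be \FMA($q$), Proposition~\ref{maq} yields $C_{X;h}=0$ whenever $|h|>q$, and therefore
\begin{align*}
C_{X_D;h}=P_{A_D}C_{X;h}P_{A_D}=0,\qquad |h|>q.
\end{align*}
Define $q^{*}:=\max\{h\in\{0,1,\ldots,q\}:C_{X_D;h}\neq 0\}$, with the convention that $q^{*}=0$ if $C_{X_D;h}=0$ for every $h\geq 1$. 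If $q^{*}\geq 1$, then $C_{X_D;q^{*}}\neq 0$ and $C_{X_D;h}=0$ for $|h|>q^{*}$, so applying Proposition~\ref{maq} in the reverse direction identifies $(X_{D,n})_{n\in\Z}$ as an \FMA($q^{*}$) process with $q^{*}\leq q$; the representation \eqref{MAK} with operators $\tilde\psi_{D,j}\in\mathcal{L}$ and innovations $\tilde\varepsilon_{D,n}$ then coincides with the one produced by Proposition~\ref{properties}. If instead $q^{*}=0$, then $C_{X_D;h}=0$ for all $h\neq 0$, and the same linear representation collapses to $X_{D,n}=\tilde\varepsilon_{D,n}$, i.e., $(X_{D,n})_{n\in\Z}$ is itself WN.

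The only delicate point that I would watch is the application of Proposition~\ref{properties}: that proposition requires the eigenvalues of $C_{\varepsilon}$ to be positive, so strictly speaking Theorem~\ref{th4} inherits the same implicit assumption (or, alternatively, one notes that for an \FMA($q$) process with a degenerate $C_{\varepsilon}$, one can first reduce to the non-degenerate subspace before projecting). Apart from this, the argument is essentially a bookkeeping exercise: stationarity and the linear-process structure come from Proposition~\ref{properties}, the termination of the moving-average expansion is forced by the vanishing of the lagged covariance operators beyond lag $q$ via Proposition~\ref{maq}, and the identification of the innovations is already encoded in \eqref{Epstilde}.
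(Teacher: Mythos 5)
Your proposal matches the paper's own proof essentially step for step: stationarity and the identification of the innovations via Proposition~\ref{properties} (through the Wold decomposition), the vanishing of $C_{X_D;h}=P_{A_D}C_{X;h}P_{A_D}$ for $|h|>q$ via \eqref{cxdh}, and the two-way application of Proposition~\ref{maq} to pin down the order $q^*$. Your side remark that the theorem implicitly inherits the positivity assumption on the eigenvalues of $C_{\varepsilon}$ from Proposition~\ref{properties} is a fair observation the paper glosses over, but it does not change the argument.
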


\begin{proof}
%First notice that by linearity of $P_{A_D}$,
%\begin{align}
%X_{D,n}= P_{A_D} X_n =  \sum_{j=1}^q P_{A_D} \psi_j \varepsilon_{n-j} + P_{A_D} \varepsilon_n, \qquad n\in\Z. \label{XDN}
%\end{align}
By Proposition~\ref{properties} $(X_{D,n})_{n\in\Z}$ is stationary. 
Furthermore, by \eqref{cxdh} and Proposition~\ref{maq} $C_{X_{D};h}=P_{A_D} C_{X;h} P_{A_D}=0$ for $h>q$, since $C_{X;h}=0 $ for $h>q$.
Hence, again by Proposition~\ref{maq} $(X_{D,n})_{n\in\Z}$ is a $\FMA(q^*)$ process, where $q^*$ is the largest lag  $j\leq q$ such that $C_{X_D;j}=P_{A_D} C_{X;j} P_{A_D} \neq 0$. 
Thus, \eqref{MAK} holds for some linear operators $\tilde\psi_{D,j}\in  \call$ and WN $(\tilde{\varepsilon}_{D,n})_{n\in\Z}$. 
The fact that  $(\tilde{\varepsilon}_{D,n})_{n\in\Z}$ is isometrically isomorphic to $(\tilde{\mathbf{E}}_{D,n})_{n\in\Z}$ as in \eqref{Epstilde} is again a consequence of the Wold decomposition of $(X_{D,n})_{n\in\Z}$ and follows from the proof of Proposition~\ref{properties}.
\begin{comment}
Furthermore, noting that for all $x\in H$, $\Vert (I-P_{A_D}) (x) \Vert \rightarrow 0$ for $D\rightarrow \infty$, we immediately obtain 
\begin{align*}
\big\Vert X_n - X_{D,n} \big\Vert=  \big\Vert (I-P_{A_D})X_n\big\Vert\rightarrow 0, \quad a.s. 
\end{align*}
Last, denote $G_n=\{X_k, k\leq n \}$, and $G_{D,n}=\{X_{D,k}, k\leq n \}$, and let $M_n$ be the LCS generated by $F_n$ and $M_{D,n}$ the LCS generated by $F_{D,n}$. Then
$\varepsilon_n= X_n - P_{M_n} (X_n)$ and similarly $\tilde{\varepsilon}_{D,n}=X_{D,n}- P_{M_{D,n}}(X_{D,n})$. With similar arguments as before, the following holds almost surely:
\begin{align*}
\Vert \varepsilon_n - \tilde{\varepsilon}_{D,n} \Vert &= \Vert X_n - P_{M_n} (X_n) - (X_{D,n}- P_{M_{D,n}}(X_{D,n})) \Vert \\
 &\leq \Vert X_n - X_{D,n} \Vert + \Vert  P_{M_n} (X_n) - P_{M_{D,n}}(X_{D,n}) \Vert \\
 &= \Vert X_n - X_{D,n} \Vert + \Vert  P_{M_n} (X_n) - P_{M_n}(X_{D,n}) + P_{M_n}(X_{D,n}) - P_{M_{D,n}}(X_{D,n}) \Vert \\
 &\leq  \Vert X_n - X_{D,n} \Vert + \Vert  P_{M_n} (X_n) - P_{M_n}(X_{D,n}) \Vert  + \Vert P_{M_n}(X_{D,n}) - P_{M_{D,n}}(X_{D,n}) \Vert    .
\end{align*}
Now all three terms converge to $0$ a.s. because of \eqref{X-Xd} which finishes the proof. 
\end{comment}
\end{proof}

The fact that every subprocess of a \FMA$(q)$ is a \FMA$(q^*)$ with $q^*\leq q$ simplifies the algorithm of Proposition~\ref{prop_iad}. 
Since $C_{X_D;h}=0$ for $\vert h \vert >q$ modifies \eqref{xdhat}-\eqref{vd} as follows: for $n>q^*$,
\begin{align*}
&\wh{X}_{D,n+1}= \sum_{i=1}^{q^*} \theta_{D,n,i} (X_{D,n+1-i}-\wh{X}_{D,n+1-i}) \\
&\theta_{D,n,k}=\Big(C_{X_{D};k} - \sum_{j=0}^{n-k-1} \theta_{D,n,n-j} \ V_{D,j} \ \theta_{D,n-k,n-k-j}^*\Big)V_{D,n-k}^{-1}, \quad k=1,\dots,q^*,  \\
&V_{D,n} =C_{X_{D,n+1}-\wh{X}_{D,n+1}}= C_{X_{d};0} - \sum_{j=1}^{q^*} \theta_{D,n,j}V_{D,n-j}\theta^*_{D,n,j} .
\end{align*}

We now investigate the prediction error $E \Vert X_{n+1} - \wh{X}_{D,n+1}  \Vert^2$ of Theorem~\ref{ninfty} for a functional linear process.
%As shown in \cite{aue} and \citep{KKW} for \FAR\ models the bound $E \Vert X_{n+1} - \wh{X}_{D,n+1} -\varepsilon_{n+1} \Vert^2$ decays to $0$ for $D\rightarrow \infty$ (e.g. Theorem~3.2 in \cite{aue}). 
For $D\rightarrow \infty$, obviously, $\sum_{i>D} \langle C_{X_0}(\nu_i), \nu_i \rangle \rightarrow 0$. 
However, the second term $\Vert V_{D,n}\Vert_{\caln}$ on the rhs of \eqref{theo1i} is not defined in the limit, since the inverse of $V_{D,j}$ in \eqref{theta} is no longer bounded when $D\rightarrow\infty$. 
To see this, take e.g. $V_{D,0}$. By \eqref{vd}, since $\wh{X}_{D,1}=0$ and since $(X_{D,n})_{n\in\Z}$ is stationary, $$V_{D,0}=C_{X_{D,1}-\wh{X}_{D,1}}=C_{X_{D,1}}=C_{X_D}.$$  
By \eqref{cxdh} for $h=0$ we find $C_{X_D}=P_{A_D}C_XP_{A_D}$, hence for all $x\in H$, $\Vert (C_X-C_{X_D})(x)\Vert \rightarrow 0$ for $D\rightarrow\infty$. 
But, since $C_X$ is not invertible on the entire of $H$, neither is $\lim_{D\rightarrow\infty}C_{X_D}$. Therefore, $\lim_{D\rightarrow\infty}\wh{X}_{D,n+1}$ is not defined.

To resolve this problem, we propose a tool used before in functional data analysis, for instance in \cite{bosq} for the estimation of \FAR$(1)$. 
We increase the dimension $D$ together with the sample size $n$ by choosing $d_n:=D(n)$ and $d_n\rightarrow\infty$ with $n\rightarrow \infty$. 
However, since the Innovations Algorithm is based on a recursion, it will always start with $V_{d_n,0}=C_{X_{d_n}}$, which again is not invertible on $H$ for $d_n\rightarrow\infty$. 
For the Innovations Algorithm we increase $D$ iteratively such that $V_{d_1,0}$ is inverted on say $A_1$, $V_{d_2,1}$ is inverted on $A_2, \dots$ and so on. 
%The idea is that $V_{D,i}$ converges to $C_{\varepsilon}$ quick enough.
To quantify a convergence rate in Theorem~\ref{dinfty} below we restrict ourselves to projections on eigenspaces of the covariance operator $C_X$ of the underlying process.

\section{Prediction based on projections of increasing subspaces of $H$} \label{sec5}

In this section we propose a functional version of the Innovations Algorithm. 
Starting with the same idea as in Section \ref{sec4}, we project the functional data on a finite dimensional space. 
However, we now let the dimension of the space on which we project depend on the sample size. 
More precisely, let $(X_n)_{n\in \Z}$ be a stationary functional linear process  with covariance operator $C_X$. 
For some positive, increasing sequence $(d_n)_{n\in\N}$ in $\N$ such that $d_n\rightarrow\infty$ with $n\rightarrow\infty$, we define
\begin{align}\label{bdn}
	(\nu_i)_{i\in\N}\quad\mbox{and}\quad A_{d_n}=\spa\{\nu_1,\dots,\nu_{d_n}\}, \quad n\in\N, 
\end{align}
where $(\nu_i)_{i\in\N}$ are now chosen as the eigenfunctions of $C_X$, and $(A_{d_n})_{n\in\N}$ is an increasing sequence of subspaces of $H$. % in the sense that $A_{d_i}\subseteq A_{d_j}$ for $i\leq j$.
Instead of applying the Innovations Algorithm to $(P_{A_d}X_1,\dots,P_{A_d}X_n)$ as in Proposition~\ref{prop_iad}, we apply it now to $(P_{A_{d_1}}X_1,\dots,P_{A_{d_n}}X_n)$.  

\begin{proposition} \label{prop_iadn}
	Let $(X_n)_{n\in\Z}$ be a stationary functional linear process with covariance operator $C_X$ with eigenpairs $(\lambda_j,\nu_j)_{j\in\N}$, where $\lambda_j>0$ for all $j\in\N$. 
	%Denote  $X_{d,n}=P_{A_d}X_n$. 
	Let $(d_n)_{n\in\N}$ be a positive sequence in $\N$ such that $d_n\uparrow\infty$ as $n\rightarrow \infty$. 
	 Define furthermore the best linear predictor of $X_{n+1}$ based on $\LCS(F'_{d_n,n})$ for $n\in\N$ as
	 \begin{align}\label{Fdn}
	 F'_{d_n,n}=\{X_{d_1,1},X_{d_2,2},\dots,X_{d_n,n}\}\quad\mbox{and}\quad  \widehat{X}_{d_{n+1},n+1} = P_{\LCS(F'_{d_n,n})}(X_{n+1}).
	 \end{align}
Then  $\widehat{X}_{d_{n+1},n+1}$ is given by the following set of recursions:
	\begin{align}
	&\wh{X}_{d_1,1}=0\quad\mbox{and}\quad V_{d_1,0}=C_{X_{d_1}},\notag\\
	&\wh{X}_{d_{n+1},n+1}= \sum_{i=1}^{n} \theta_{d_{n-i+1},n,i} (X_{d_{n+1-i},n+1-i}-\wh{X}_{d_{n+1-i},n+1-i}) \label{xdhat1},\\
	&\theta_{d_{i+1},n,n-i}=\Big(P_{A_{d_{n+1}}}C_{X;n-i}P_{A_{d_{i+1}}} - \sum_{j=0}^{i-1} \theta_{d_{j+1},n,n-j}  \ V_{d_{j+1},j} \ \theta_{d_{j+1},i,i-j}^*\Big)V_{d_{i+1},i}^{-1}, \quad i=1,\dots,n-1, \label{theta1} \\
	&V_{d_{n+1},n} =C_{X_{d_{n+1},n+1}-\wh{X}_{d_{n+1},n+1}}= C_{X_{d_{n+1}}} - \sum_{j=0}^{n-1} \theta_{d_{j+1},n,n-j}V_{d_{j+1},j}\theta^*_{d_{j+1},n,n-j}. \label{vd1}
	\end{align}
\end{proposition}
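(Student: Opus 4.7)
The approach is to adapt the classical derivation of the Multivariate Innovations Algorithm (Proposition~11.4.2 of \cite{brockwell}) to the present setting, in which the observed elements $X_{d_k,k}$ live in subspaces $A_{d_k}$ of increasing dimension. The strategy is the usual one: construct an orthogonal basis of innovations, expand the predictor in that basis, and determine the coefficients and the residual variance operator via the projection theorem together with the tensor identities in \eqref{otimesprop}.

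First I would introduce the innovations $U_1:=X_{d_1,1}$ and $U_{k+1}:=X_{d_{k+1},k+1}-\wh X_{d_{k+1},k+1}$ for $k\ge 1$, and establish by induction on $n$ two parallel facts: (a)~pairwise orthogonality $\E[U_i\otimes U_j]=0$ for $i\ne j$, and (b)~equality of spans $\LCS(U_1,\dots,U_n)=\LCS(F'_{d_n,n})$. Both follow from the identity $X_{d_{k+1},k+1}=U_{k+1}+\wh X_{d_{k+1},k+1}$ together with the projection theorem, upgraded from the scalar orthogonality $\E\langle X-\wh X,lY\rangle=0$ to the operator identity $\E[(X-\wh X)\otimes Y]=0$ by choosing rank-one operators $l=g\otimes h\in\call$ and varying $g,h\in H$. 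Consequently the predictor admits the orthogonal expansion
\[
\wh X_{d_{n+1},n+1}=\sum_{i=0}^{n-1}\theta_{d_{i+1},n,n-i}\,U_{i+1}.
\]

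To extract \eqref{theta1}, take the tensor product of both sides with $U_{k+1}$. On the right, orthogonality of innovations leaves only $i=k$, so by $\E[(AU)\otimes V]=\E[U\otimes V]A^*$ from \eqref{otimesprop} one obtains $\E[\wh X_{d_{n+1},n+1}\otimes U_{k+1}]=V_{d_{k+1},k}\,\theta_{d_{k+1},n,n-k}^{*}$. On the left, the projection property yields $\E[\wh X_{d_{n+1},n+1}\otimes U_{k+1}]=\E[X_{d_{n+1},n+1}\otimes U_{k+1}]$. Expanding $X_{d_{n+1},n+1}=P_{A_{d_{n+1}}}X_{n+1}$ and $U_{k+1}=X_{d_{k+1},k+1}-\wh X_{d_{k+1},k+1}$, invoking stationarity $\E[X_{n+1}\otimes X_{k+1}]=C_{X;n-k}$, and applying $\E[(P_AX)\otimes Y]=\E[X\otimes Y]P_A$ together with $\E[X\otimes P_AY]=P_A\E[X\otimes Y]$, produces the leading term $P_{A_{d_{n+1}}}C_{X;n-k}P_{A_{d_{k+1}}}$. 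Substituting the orthogonal expansion of $\wh X_{d_{k+1},k+1}$ and reusing the already-derived relation to rewrite $\E[U_{j+1}\otimes X_{n+1}]$ in terms of $\theta_{d_{j+1},n,n-j}V_{d_{j+1},j}$ gives the subtracted sum, and solving for $\theta_{d_{i+1},n,n-i}$ (with $i=k$) yields \eqref{theta1}. Formula \eqref{vd1} then follows by expanding $V_{d_{n+1},n}=\E[U_{n+1}\otimes U_{n+1}]$, applying $\E[X_{d_{n+1},n+1}\otimes\wh X_{d_{n+1},n+1}]=\E[\wh X_{d_{n+1},n+1}\otimes\wh X_{d_{n+1},n+1}]$ from the projection property, and inserting the orthogonal expansion of $\wh X_{d_{n+1},n+1}$.

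The main obstacle is bookkeeping: tracking the various projections $P_{A_{d_k}}$ as they interact with tensor products across subspaces of different dimensions. The identities in \eqref{otimesprop} together with the self-adjointness of each $P_{A_{d_k}}$ make this manageable, but care is needed to place each projection on the correct side of each operator product, since the dimensions change at every time step and each $\theta$-coefficient carries three indices (innovation dimension, prediction time, and lag) that must be consistently matched. A secondary point is the invertibility of $V_{d_{i+1},i}$ on $A_{d_{i+1}}$ required for $V_{d_{i+1},i}^{-1}$ in \eqref{theta1}; this follows from the positivity of all eigenvalues $(\lambda_j)_{j\in\N}$ of $C_X$, which makes the innovations $U_{i+1}$ non-degenerate on $A_{d_{i+1}}$ so that $V_{d_{i+1},i}^{-1}$ is well-defined on that subspace.
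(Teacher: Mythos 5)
Your proposal is correct and follows essentially the same route as the paper's proof: both adapt the derivation of Proposition~11.4.2 in \cite{brockwell} by expanding the predictor in the orthogonal innovations, tensoring with $X_{d_{k+1},k+1}-\wh{X}_{d_{k+1},k+1}$, and using orthogonality together with the projection theorem and the identities in \eqref{otimesprop} to solve for the $\theta$-coefficients and for $V_{d_{n+1},n}$ (you merely spell out the span-equality and innovation-orthogonality induction that the paper leaves implicit). The invertibility of $V_{d_{i+1},i}$ on $A_{d_{i+1}}$ is justified in the paper by the same appeal to finite dimensionality and positivity of the eigenvalues of $C_X$ that you give.
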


\begin{proof}
The proof is based on the proof of Proposition~11.4.2 in \cite{brockwell}. First notice that the representation 
	\begin{align*}
	\wh{X}_{d_{n+1},n+1}= \sum_{i=1}^n \theta_{d_{n-i+1},n,i} (X_{d_{n+1-i},n+1-i}-\wh{X}_{d_{n+1-i},n+1-i}), \quad n\in\N,
	\end{align*}
results from the definition of  $\widehat{X}_{d_{n+1},n+1} = P_{\LCS(F'_{d_n,n})}(X_{n+1})$. Multiplying both sides of \eqref{xdhat1} with $\langle X_{d_{k+1},k+1}-\wh{X}_{d_{k+1},k+1}, \cdot \rangle $ for $0\leq k \leq n$ and taking expectations, we get
	\begin{align*}
 E \big [ (X_{d_{k+1},k+1}-	&\wh{X}_{d_{k+1},k+1}) \otimes \wh{X}_{d_{n+1},n+1} \big]\\
	\qquad & = \sum_{i=1}^n \theta_{d_{n-i+1},n,i} E \big[(X_{d_{k+1},k+1}-\wh{X}_{d_{k+1},k+1} )\otimes (X_{d_{n+1-i},n+1-i}-\wh{X}_{d_{n+1-i},n+1-i})  \big] \\
	& = \theta_{d_{k+1},n,n-k}  E \big[(X_{d_{k+1},k+1}-\wh{X}_{d_{k+1},k+1}) \otimes (X_{d_{k+1},k+1}-\wh{X}_{d_{k+1},k+1})\big] ,
	\end{align*}
where we used that $E\langle  X_{d_{n+1},n+1}-\wh{X}_{d_{n+1},n+1}, X_{d_{k+1},k+1}-\wh{X}_{d_{k+1},k+1} \rangle = 0$ for $k\neq n$. Now with the definition of $V_{d_{k+1},k}$ in \eqref{vd1},
	\begin{align}
	E \big [( X_{d_{k+1},k+1}-\wh{X}_{d_{k+1},k+1})\otimes X_{d_{n+1},n+1} \big]  = \theta_{d_{k+1},n,n-k} V_{d_{k+1},k}. \label{help1}
	\end{align}
By representation \eqref{xdhat1} for $n=k$ and the fact that $V_{d_{k+1},k}$ is finite dimensional and therefore invertible, since all eigenvalues of $C_X$ are   positive,
\begin{align*}
\theta_{d_{k+1},n,n-k}  = \Big(P_{A_{d_{n+1}}}C_{X;n-k}P_{A_{d_{k+1}}} - \sum_{i=1}^{k} 	E \big [( X_{d_{i},i}-\wh{X}_{d_{i},i} )\otimes X_{d_{n+1},n+1} \big] \theta_{d_{i},k,k-i-1} ^* \Big)  V_{d_{k+1},k}^{-1}.
\end{align*}
However, with \eqref{help1} the expectation on the right-hand-side can be replaced by $\theta_{d_{i},n,n+1-i} V_{d_{i},i-1}$, for $i=1,\dots,k$, which leads to
\begin{align*}
\theta_{d_{k+1},n,n-k}  = \Big(P_{A_{d_{n+1}}}C_{X;n-k}P_{A_{d_{k+1}}} - \sum_{i=1}^{k} \theta_{d_{i},n,n+1-i} V_{d_{i},i-1} \theta_{d_{i},k,k-i-1} ^* \Big)  V_{d_{k+1},k}^{-1}.
\end{align*}
Finally, the projection theorem gives
\begin{align*}
V_{d_{n+1},n} =C_{X_{d_{n+1},n+1}-\wh{X}_{d_{n+1},n+1}}
 = C_{X_{d_{n+1}}} - C_{\wh{X}_{d_{n+1},n+1}} =C_{X_{d_{n+1}}}-\sum_{j=0}^{n-1} \theta_{d_{j+1},n,n-j}V_{d_{j+1},j}\theta^*_{d_{j+1},n,n-j}.
\end{align*}
\end{proof}

\brem
Notice that $X_{d_1,1},X_{d_2,2},\dots,X_{d_n,n}$ is not necessarily stationary. 
However, the recursions above can still be applied, since stationarity is not required for the application of the Innovations Algorithm in finite dimensions, see Proposition~11.4.2 in \cite{brockwell}. 
\erem

If $(X_n)_{n\in\Z}$ is invertible, we can derive asymptotics for $\wh{X}_{d_{n+1},n+1}$ as   $d_n \rightarrow \infty$ and $ n\rightarrow\infty$.

\begin{theorem}\label{dinfty}
Let $(X_n)_{n\in\mathbb{Z}}$ be a stationary, invertible functional linear process with WN $(\varepsilon_n)_{n\in\Z}$ such that all eigenvalues of $C_{\varepsilon}$ are  positive. 
Assume furthermore that all eigenvalues $\lambda_j$, $j\in\N$, of $C_{X}$ are  positive.\\
			%\item For some increasing sequence $m_n:=m(n)\in\N$ with $m_n\rightarrow\infty$ for $n\rightarrow \infty$ and $m_n<n$,
			%\begin{align*}
			%\sum_{j=m_n}^n \Vert \pi_j \Vert_{\call}=o(1/n)
			%\end{align*}
			%\item For some increasing sequence $d_n:= d(n)\in\N$ with  $d_n\rightarrow\infty$ for $n\rightarrow \infty$ and $d_n<n$
			%	\begin{align*}
		%				\sum_{j>d_{n-m_n}} \lambda_j =o(1/n)
		%		\end{align*}
		%	\item For some increasing sequence $k_n=k(n)\in\N$ with $k_n\rightarrow \infty$ for $n\rightarrow \infty$ and $k_n<n$
		%	\begin{align*}
		%		\frac{1}{\lambda_{k_n}} \big(\sum_{j=m_n}^n \Vert \pi_j \Vert_{\call} +	\sum_{j>d_{n-m_n}} \lambda_j \big) \rightarrow 0. 
		%	\end{align*} 
(i) Let $m_n\rightarrow\infty$, $m_n<n$ and $m_n/n\rightarrow 0$ for $n\rightarrow \infty$ and $d_n\rightarrow\infty$ for $n\rightarrow \infty$ be two positive increasing sequences in $\N$. 
Then
\begin{align}
E\Vert X_{n+1} - \wh{X}_{d_{n+1},n+1} -\varepsilon_{n+1} \Vert^2 = O\big(\sum_{j>m_n} \Vert \pi_j \Vert_{\call} +	\sum_{j>d_{n-m_n}} \lambda_j \big) \rightarrow 0, \quad  n\rightarrow \infty.\label{theoi}
\end{align}
(ii) Denote by $C_{\mathbf{X}_{d_n};h}$ the covariance matrix of the subprocess $(\mathbf{X}_{d_n})_{n\in\Z}$ as defined in Lemma~\ref{isomor}. 
Then all eigenvalues of the spectral density matrix $f_{\mathbf{X}_{d_n}}[\omega]:=\frac{1}{2\pi}\sum_{h\in\Z} e^{-ih\omega} C_{\mathbf{X}_{d_n};h}$ for $-\pi<\omega\leq \pi$ are positive.
Denote by $\alpha_{d_n}>0$ the infimum of these eigenvalues. 
If 
	\begin{align}
				\frac{1}{\alpha_{d_n}} \big(\sum_{j>m_n} \Vert \pi_j \Vert_{\call} +	\sum_{j>d_{n-m_n}} \lambda_j \big) \rightarrow 0,\quad   n\rightarrow \infty. \label{assumptionspec}
			\end{align} 
 then for $i=1,\dots,n$ and for all  $x\in H$,
\begin{align*}
\Vert (\theta_{d_n,n,i} - \gamma_i)(x) \Vert\rightarrow 0,\quad   n\rightarrow \infty .
\end{align*}
\end{theorem}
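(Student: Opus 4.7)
My plan is to handle (i) and (ii) in turn, both by reducing to known facts about projections and, for (ii), to the multivariate Innovations Algorithm.

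For (i), the key observation is that $\wh{X}_{d_{n+1},n+1}$ lies in $\LCS(F'_{d_n,n})$, a subspace of $L^2_H$ generated by linear images of $X_1,\dots,X_n$. By causality of the functional linear representation, every element of this subspace is orthogonal to $\eps_{n+1}$, so the orthogonal projection onto $\LCS(F'_{d_n,n})$ annihilates $\eps_{n+1}$. Consequently
\[
\wh{X}_{d_{n+1},n+1} \;=\; P_{\LCS(F'_{d_n,n})}(X_{n+1}) \;=\; P_{\LCS(F'_{d_n,n})}(X_{n+1}-\eps_{n+1}),
\]
and therefore $E\|X_{n+1}-\eps_{n+1}-\wh{X}_{d_{n+1},n+1}\|^2 \le E\|X_{n+1}-\eps_{n+1}-Y\|^2$ for every $Y\in\LCS(F'_{d_n,n})$. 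I choose as candidate the truncated--and--projected invertible representation
\[
Y_n \;=\; \sum_{j=1}^{m_n} \pi_j \, X_{d_{n+1-j},\,n+1-j},
\]
which indeed lies in $\LCS(F'_{d_n,n})$ since $\pi_j\in\call$ and $m_n<n$. Writing $X_{n+1}-\eps_{n+1}=\sum_{j\ge 1}\pi_j X_{n+1-j}$ by invertibility and subtracting, the error splits as
\[
\sum_{j>m_n}\pi_j X_{n+1-j} \;+\; \sum_{j=1}^{m_n} \pi_j(I-P_{A_{d_{n+1-j}}})X_{n+1-j}.
\]
A triangle inequality followed by a Cauchy--Schwarz bound gives a tail contribution of order $(\sum_{j>m_n}\|\pi_j\|_\call)^2 E\|X_0\|^2$ and a projection contribution bounded, using stationarity and the Karhunen--Lo\`eve expansion $E\|(I-P_{A_d})X_0\|^2=\sum_{i>d}\lambda_i$, by $(\sum_{j\ge 1}\|\pi_j\|_\call)^2 \sum_{i>d_{n-m_n}}\lambda_i$. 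Since both sequences go to $0$, their squares are absorbed into their first powers, yielding the claimed rate \eqref{theoi}.

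For (ii), the plan is to imitate the classical multivariate argument that the Innovations Algorithm coefficients converge to the MA coefficients of the Wold representation of the $d_n$-dimensional subprocess, and then pass to $H$. By Proposition~\ref{properties} the subprocess $(X_{d_n,n})_{n\in\Z}$ is a stationary invertible linear process, and by Lemma~\ref{isomor} it is isometrically isomorphic to the vector process $(\mathbf X_{d_n,n})_{n\in\Z}$, whose spectral density matrix $f_{\mathbf X_{d_n}}$ has strictly positive eigenvalues (inherited from the positive eigenvalues of $C_\eps$ and $C_X$ via the block structure underlying \eqref{bold}). The number $\alpha_{d_n}>0$ controls $\|V_{d_n,j}^{-1}\|$ uniformly in $j$, which is exactly what is needed to iterate the recursions \eqref{theta1}--\eqref{vd1}. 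Using (i) to control $\|V_{d_{n+1},n}-C_{\wt\eps_{d_{n+1}}}\|$ and feeding this into the recursion for $\theta_{d_{i+1},n,n-i}$, an induction on $i$ shows that each coefficient tends pointwise to the MA-coefficient $\gamma_i$ of the Wold representation, provided the combined rate $\alpha_{d_n}^{-1}(\sum_{j>m_n}\|\pi_j\|_\call+\sum_{j>d_{n-m_n}}\lambda_j)$ tends to zero, which is assumption \eqref{assumptionspec}.

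The main obstacle is the second part: one is running the Innovations Algorithm at each step on a \emph{different} finite-dimensional space, so the recursion does not fit verbatim into the stationary multivariate framework of \cite{brockwell}. The delicate point is to make the uniform lower bound $\alpha_{d_n}$ on the spectral density effective at every step of the recursion simultaneously, and to track how the tail errors from (i) propagate through the nested inversions $V_{d_{i+1},i}^{-1}$. This is precisely why the quantitative assumption \eqref{assumptionspec} appears, combining both the $\pi_j$-tail and the eigenvalue tail with the $1/\alpha_{d_n}$ amplification factor. Once these bounds are established, the pointwise convergence $\|(\theta_{d_n,n,i}-\gamma_i)(x)\|\to 0$ follows routinely, and the isometric isomorphism from Lemma~\ref{isomor} lifts the result back from the vector formulation to $H$.
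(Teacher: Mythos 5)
Your argument for part (i) is correct and essentially the paper's. You replace the paper's candidate predictor $\sum_{i=1}^n\pi_i X_{d_{n+1-i},n+1-i}$ by the truncated one with $m_n$ terms, and you bound the projection term by the elementary $L^2$-triangle inequality together with $\E\|(I_H-P_{A_d})X_0\|^2=\sum_{i>d}\lambda_i$ instead of the paper's nuclear-norm computation via Lemma~\ref{lehelp2}; both routes give the same rate, and yours is slightly more economical. (Your observation that $P_{\LCS(F'_{d_n,n})}$ annihilates $\eps_{n+1}$ is the same orthogonality the paper uses to split off $\E\|\eps_{n+1}\|^2$.)

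Part (ii), however, is only a plan, and it is missing the idea that actually makes the proof work. You propose to feed the error $\|V_{d_{n+1},n}-C_{\tilde{\eps}_{d_{n+1}}}\|$ into the recursion \eqref{theta1} and induct on $i$; but \eqref{theta1} computes $\theta_{d_{i+1},n,n-i}$ for fixed $n$ as $i$ varies, whereas the claim is convergence as $n\to\infty$ for fixed $i$, so such an induction does not close without a separate stability statement comparing coefficients across different $n$ (and across the different spaces $A_{d_n}$). More importantly, nothing in your sketch identifies why the limit is $\gamma_i=\psi_i$. The paper's proof goes through the regression representation $\wh X_{d_{n+1},n+1}=\sum_i\beta_{d_{n+1-i},n,i}X_{d_{n+1-i},n+1-i}$: it first proves $\|(\beta_{d_{n+1-i},n,i}-\pi_i)(x)\|\to0$ by bounding $\|B_{(d_n),n}-\Pi_nP_{(d_n)}\|_{\call}^2$ through the Gram operator $\Gamma_{(d_n),n}$, whose smallest positive eigenvalue is bounded below by $2\pi\alpha_{d_n}$ (this is exactly where \eqref{assumptionspec} and part (i) enter); it then proves stability of $\beta$ and $\theta$ in $n$; and it finally combines the identity $\theta_{d_{n+1-i},n,i}=\sum_{j=1}^i\beta_{d_{n+1-j},n,j}\theta_{d_{n+1-i},n-j,i-j}$ with $\psi_i=\sum_{j=1}^i\pi_j\psi_{i-j}$ to convert $\beta\to\pi$ into $\theta\to\psi$. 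The convergence of the regression coefficients to the operators $\pi_i$ of the inverse representation is the linchpin of the argument, and it is absent from your proposal; without it your induction has neither a base case nor an identified limit.
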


The proof of this Theorem is given in the next Section~\ref{proofs}.

\brem
(a) \, Part $(i)$ of Theorem~\ref{dinfty} requires only that $d_n\rightarrow\infty$ as $n\rightarrow\infty$. 
No rate is required, and we do not impose any coupling condition of $d_n$ with $m_n$. 
The theory would suggest to let $d_n$ increase as fast as possible. 
In practice, when quantities such as the lagged covariance operators of the underlying process have to be estimated, the variance of the estimators of $P_{d_n} C_{X;h} P_{d_n}$ increases with $d_n$. 
In fact, for instance, for the estimation of $\theta_{d_1,1,1}$ the statistician is faced with the inversion of $P_{d_1}C_XP_{d_1}$. 
Small errors in the estimation of small empirical eigenvalues of $P_{d_1}C_XP_{d_1}$ may have severe consequences for the estimation of $\theta_{d_1,1,1}$. 
This suggests a conservative choice for $d_n$.
The problem is similar to the choice of $k_n$ in Chapter~9.2 of \cite{bosq} concerned with the estimation of the autoregressive parameter operator in a \FAR$(1)$. 
The authors propose to choose $k_n$ based on validation of the empirical prediction error.
%there is no relation between the growth of $d_n$ and $m_n$, there is only a trade-off  in  \eqref{theoi} between the error terms coming from the coefficients $\pi_j$ and the eigenvalues $\la_j$.

(b) \, The choice of $m_n$ in \eqref{theoi} allows us to calibrate two error terms: under the restriction that $m_n/n\rightarrow 0$, choosing a larger $m_n$ increases $\sum_{j>d_{n-m_n}} \lambda_j$, the error caused by dimension reduction. Choosing a smaller $m_n$  will on the other hand increase $\sum_{j>m_n} \Vert \pi_j \Vert$.
\erem

\section{Proofs}\label{proofs}

Before presenting a proof of Theorem~\ref{dinfty} we give some notation and auxiliary results.
Recall that throughout $I_H$ denotes the identity operator  on $H$.
We also recall the notation and results provided in Section~\ref{sec2}, which we shall use below without specific referencing.

 Let $(X_n)_{n\in\Z}$ be a stationary functional linear process. 
 Then for $n\in\N$ define the covariance operator of the vector $(X_n,\dots,X_1)$ by
\begin{small}
\begin{align}\label{gammam}
\Gamma_{n}& :=\begin{pmatrix}
\mathbb{E}[X_{n}\otimes X_{n}]  & \mathbb{E}[X_{n}\otimes X_{n-1}] & \dots & \mathbb{E}[X_{n}\otimes X_{1}]\\
\mathbb{E}[X_{n-1}\otimes X_{1}] & \mathbb{E}[X_{n-1}\otimes X_{n-1} ]  & \dots &\vdots \\
\vdots & & \ddots & \\
\mathbb{E}[X_{1}\otimes X_{n}] & \dots & & \mathbb{E}[X_{1}\otimes X_{1}]  
\end{pmatrix} 
 =  \begin{pmatrix}
C_X  &C_{X;1} & \dots &  C_{X;n-1} \\
 C_{X;-1}  &  C_{X}    & \dots &\vdots \\
\vdots & & \ddots & \\
C_{X;-(n-1)} & \dots & &  C_X
\end{pmatrix},
\end{align}
\end{small}
i.e., $\Gamma_{n}$ is an operator acting on $H^n$, where $H^n$ is the Cartesian product of $n$ copies of $H$. 
Recall that $H^n$ is again a Hilbert space, when equipped with the scalar product
\begin{align*}
\langle x, y \rangle_n = \sum_{i=1}^n \langle x_i, y_i \rangle
\end{align*}
(see \cite{bosq}, Section~5 for details). 
As the covariance operator of  $(X_{n},X_{n-1},\dots,X_{1})$, $\Gamma_n$ is  self-adjoint, nuclear, and has the spectral representation (cf. Theorem~5.1 in \cite{gohberg})
\begin{align*}
\Gamma_{n}=\sum_{j=1}^{\infty} \lambda_j^{(n)}  \nu_j^{(n)}\otimes \nu_j^{(n)}, \quad n\in\N,
\end{align*}  
with eigenpairs $(\lambda_j^{(n)},\nu_j^{(n)})_{j\in\N}$. 

Furthermore, define the operators $P_{(d_n)}$ and $P_D$ acting on $H^n$ by 		 
 	\begin{align}
		  	P_{(d_n)}&=\text{diag}\ \big(P_{A_{d_n}},P_{A_{d_{n-1}}},\dots,P_{A_{d_1}}) \quad \text{and}		\quad  	P_{D}=\text{diag}\ \big(P_{A_D},P_{A_D},\dots,P_{A_D}), \label{pdn}
	\end{align}
then 
\begin{align*}
\Gamma_{(d_n),n}:=P_{(d_n)} \Gamma_n P_{(d_n)}\quad\mbox{and}\quad\Gamma_{D,n}:=P_D\Gamma_n P_D.
\end{align*}
Note that $\Gamma_{(d_n),n}$ is in fact the covariance operator of $(X_{d_n,n},\dots,X_{d_1,1})$ and has rank $k_n:=\sum_{i=1}^{n} d_i$, whereas $\Gamma_{D,n}$ is the covariance operator of $(X_{D,n},\dots,X_{D,1}) $ and has rank $D\cdot n$. The operators $\Gamma_{(d_n),n}$ and $\Gamma_{d_n,n}$ are therefore self-adjoint nuclear operators with spectral representations
\beam
\Gamma_{(d_n),n}=\sum_{j=1}^{k_n} \lambda_{(d_n),j}^{(n)} e^{(n)}_{(d_n),j}\otimes e^{(n)}_{(d_n),j} \quad\mbox{and}\quad \Gamma_{d_n,n}=\sum_{j=1}^{d_n\cdot n} \lambda_{d_n,j}^{(n)} e^{(n)}_{d_n,j}\otimes e^{(n)}_{d_n,j}. \label{specrep}
\eeam
 We need the following auxiliary results.

\begin{lemma}[Theorem~1.2 in \cite{mitchell2}]\label{specdens}
Let $(\mathbf{X}_{D,n})_{n\in\Z}$ be a $D-$variate stationary, invertible linear process satisfying $$\mathbf{X}_{D,n}=\sum_{i=1}^{\infty} \mathbf{\Psi}_i \mathbf{E}_{n-i} + \mathbf{E}_n,\quad n\in\Z,$$ 
with $\sum_{i=1}^{\infty}\Vert\mathbf{\Psi}_i\Vert_{2} < \infty$ ($\Vert\cdot\Vert_2$ denotes the Euclidean matrix norm) and  WN $(\mathbf{E}_{D,n})_{n\in\Z}$ in $L^2_{\R^D}$ with non-singular covariance matrix $C_{\mathbf{E}_D}$.
Let $C_{\mathbf{X}_D}$ be the covariance matrix of $ (\mathbf{X}_{D,n})_{n\in\Z}$.
Then  the spectral density matrix $f_{\mathbf{X}_D} [\omega]:=\frac{1}{2\pi}\sum_{h\in\Z} e^{-ih\omega} C_{\mathbf{X}_{D;h}}$ for $-\pi<\omega\leq\pi$ has only positive eigenvalues.
 Let $\alpha_D$ be their infimum. Then the eigenvalues $(\lambda^{(n)}_i)_{i=1,\dots,D\cdot n}$ of $\Gamma_{D,n}$  as in \eqref{gammam} are bounded below as follows:
\begin{align*}
0<2\pi \alpha_D \leq  \lambda^{(n)}_{D\cdot n} \leq \dots \leq \lambda^{(n)}_{1}.
\end{align*}
\end{lemma}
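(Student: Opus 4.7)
The plan is to exploit the block-Toeplitz structure of $\Gamma_{D,n}$ via the Fourier inversion
\begin{align*}
C_{\mathbf{X}_D;h}=\int_{-\pi}^{\pi}e^{ih\omega}\,f_{\mathbf{X}_D}[\omega]\,d\omega,\quad h\in\Z,
\end{align*}
and thereby reduce the smallest-eigenvalue bound on $\Gamma_{D,n}$ to a pointwise lower bound on the Hermitian matrix $f_{\mathbf{X}_D}[\omega]$. The strict positivity of $\alpha_D$ is handled separately via the invertibility hypothesis and compactness of $[-\pi,\pi]$.

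First I would fix an arbitrary unit vector $\mathbf v=(v_1^{\top},\dots,v_n^{\top})^{\top}\in\R^{D\cdot n}$ with $v_j\in\R^D$ and compute the quadratic form
\begin{align*}
\mathbf v^{\top}\Gamma_{D,n}\mathbf v
=\sum_{i,j=1}^{n}v_i^{\top}C_{\mathbf{X}_D;i-j}v_j
=\int_{-\pi}^{\pi}\hat v(\omega)^{*}\,f_{\mathbf{X}_D}[\omega]\,\hat v(\omega)\,d\omega,
\end{align*}
where $\hat v(\omega):=\sum_{j=1}^{n}e^{-ij\omega}v_j\in\C^{D}$ and where I used that the $v_j$ are real, so that the row vector $\sum_i e^{ii\omega}v_i^{\top}$ appearing after expanding equals $\hat v(\omega)^{*}$. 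Since $f_{\mathbf{X}_D}[\omega]$ is Hermitian positive semidefinite with smallest eigenvalue $\ge\alpha_D$, the integrand is bounded below by $\alpha_D\|\hat v(\omega)\|^2$.

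Next I would apply Parseval coordinatewise: for each $k=1,\dots,D$ the one-dimensional identity gives $\int_{-\pi}^{\pi}\bigl|\sum_{j=1}^{n}e^{-ij\omega}v_{j,k}\bigr|^2\,d\omega=2\pi\sum_{j=1}^{n}v_{j,k}^2$, and summing over $k$ yields $\int_{-\pi}^{\pi}\|\hat v(\omega)\|^2\,d\omega=2\pi\|\mathbf v\|^2=2\pi$. Combining this with the previous bound gives $\mathbf v^{\top}\Gamma_{D,n}\mathbf v\ge 2\pi\alpha_D$ for every unit $\mathbf v$, and the Courant--Fischer min--max principle identifies the left-hand infimum with $\lambda^{(n)}_{D\cdot n}$, yielding the asserted bound.

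It remains to verify that $\alpha_D>0$. Here I would invoke invertibility: the AR transfer function $\Pi(z):=I_D-\sum_{j\ge 1}\mathbf\Pi_j z^j$ is absolutely summable and invertible on $|z|=1$, so the causal spectral factorization
\begin{align*}
f_{\mathbf{X}_D}[\omega]=\frac{1}{2\pi}\,\Pi(e^{-i\omega})^{-1}\,C_{\mathbf{E}_D}\,\bigl(\Pi(e^{-i\omega})^{-1}\bigr)^{*}
\end{align*}
combined with the non-singularity of $C_{\mathbf{E}_D}$ shows that $f_{\mathbf{X}_D}[\omega]$ is Hermitian positive definite for every $\omega$. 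Continuity of $\omega\mapsto f_{\mathbf{X}_D}[\omega]$ and compactness of $[-\pi,\pi]$ then ensure a strictly positive infimum $\alpha_D>0$. The main obstacle I expect is the Parseval step in the block-valued setting: one must argue carefully that the $\C^D$-valued discrete-time Fourier transform of $(v_1,\dots,v_n)$ preserves the $\ell^2$ norm, which is most transparently resolved by the coordinatewise reduction above. A secondary technicality is justifying the spectral factorization from the invertibility assumption alone, a standard fact from multivariate time series theory that relies on $\Pi(z)$ having no zeros on the unit circle.
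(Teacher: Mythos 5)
Your proof is correct. Note, however, that the paper offers no proof of this statement to compare against: it is imported verbatim as Theorem~1.2 of \cite{mitchell2}. Your argument is the classical one for lower bounds on block Toeplitz quadratic forms: Fourier inversion of the autocovariances turns $\mathbf v^{\top}\Gamma_{D,n}\mathbf v$ into $\int_{-\pi}^{\pi}\hat v(\omega)^{*}f_{\mathbf X_D}[\omega]\hat v(\omega)\,d\omega$, the integrand is bounded below by $\alpha_D\Vert\hat v(\omega)\Vert^2$, and Parseval recovers $2\pi\Vert\mathbf v\Vert^2$; Courant--Fischer then gives $\lambda^{(n)}_{D\cdot n}\geq 2\pi\alpha_D$. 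All of these steps are legitimate here because $\sum_i\Vert\mathbf\Psi_i\Vert_2<\infty$ makes $\sum_h\Vert C_{\mathbf X_D;h}\Vert$ summable, which justifies both the inversion formula and the continuity of $\omega\mapsto f_{\mathbf X_D}[\omega]$ needed for the compactness argument. The one place you lean on an unproved ``standard fact'' is the factorization $f_{\mathbf X_D}[\omega]=\tfrac{1}{2\pi}\Pi(e^{-i\omega})^{-1}C_{\mathbf E_D}\bigl(\Pi(e^{-i\omega})^{-1}\bigr)^{*}$; to close it, compose the causal and inverse representations and equate coefficients (valid precisely because $C_{\mathbf E_D}$ is non-singular) to get $\Pi(z)\Psi(z)=I_D$ on $|z|\le 1$, so $\Pi(e^{-i\omega})$ is invertible with inverse $\Psi(e^{-i\omega})$ and the factorization is just the moving-average form of the spectral density. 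This is exactly the argument the paper does write out, in the infinite-dimensional setting, in Lemma~\ref{invspec} following \cite{nsiri}, so the finite-dimensional analogue could simply be cited there. With positive definiteness at each $\omega$, continuity, and compactness of $[-\pi,\pi]$, the conclusion $\alpha_D>0$ follows as you state.
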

%The fact that $\alpha_d$ is  positive follows from the invertibility of the multivariate process, see \cite{nsiri}.

The following is a consequence of the Cauchy-Schwarz inequality.

\ble\label{lehelp2}
For $j,l\in\N$ let  $(\la_j,\nu_j)$ and $(\la_l,\nu_l)$ be eigenpairs of $C_X$. Then for $i,k\in\Z$,
\begin{align}
		\langle C_{X;i-k}\nu_j, \nu_l \rangle 
	&\le \lambda_j ^{1/2} \lambda_l ^{1/2}. \label{help2}
	\end{align}
	\ele

\bproof
With the definition of the lagged covariance operators in \eqref{cxh} and then the Cauchy-Schwarz inequality, we get by stationarity of $(X_n)_{n \in\Z}$
	\begin{align}
	\langle C_{X;i-k}\nu_j, \nu_l \rangle &= \langle \E \langle X_i, \nu_j \rangle X_k, \nu_l \rangle \notag 
	\, = \, \E [\langle X_i, \nu_j \rangle \langle X_k, \nu_l\rangle ]\notag\\
	&\leq \big(\E \langle X_i,\nu_j \rangle ^2 \big)^{1/2} \big(\E \langle X_k,\nu_l \rangle ^2 \big)^{1/2}.\notag
	\end{align}
	We find $\E \langle X_i,\nu_j \rangle ^2= \E\langle\langle X_i,\nu_j\rangle X_i,\nu_j\rangle = \langle C_X\nu_j,\nu_j\rangle=\lambda_j$, which implies \eqref{help2}.
\eproof

So far we only considered the real Hilbert space $H=L^2([0,1])$. 
There is a natural extension to the complex Hilbert space by defining the scalar product $\langle x,y \rangle = \int_{0}^1 x(t) \bar{y}(t) dt$ for complex valued functions $x,y:[0,1]\to \C$. 
As in Section~7.2 of \cite{bosq}, for $(\psi_j)_{j\in\N}\subset \call$  we define the {\em complex} operators
\begin{align}
A[z]:=\sum_{j=0}^{\infty} z^j \psi_j, \quad z\in\C, \label{AK}
\end{align}
such that the series converges in the operator norm.
%For every $z\in\C$, $A[z]$ is an operator acting on the space of square integrable complex functions on $[0,1]$. 
We need some methodology on frequency analysis of functional time series, recently studied in \cite{panaretros}.
The functional discrete Fourier transform of $(X_1,\dots,X_n)$ is defined by
\begin{align*}
S_n(\omega) = \sum_{j=1}^n X_j e^{-ij\omega}, \quad \omega \in (-\pi,\pi].
\end{align*}
By Theorem~4 of \cite{cerovecki}, for all $\omega \in (-\pi,\pi]$, if  $(X_n)_{n\in\Z}$ is a linear process with $\sum_{i=1}^\infty \Vert \psi_j \Vert_\call < \infty$, then $\frac{1}{\sqrt{n}} S_n(\omega)$ converges in distribution as $n\rightarrow \infty$ to a complex Gaussian random element with covariance operator 
\begin{align*}
 2\pi\calf_X[\omega] :=\sum_{h\in\Z} C_{X;h} e^{-ih\omega}. 
\end{align*} 
The {\em  spectral density operator}  $\calf_X[{\omega}]$ of $(X_n)_{n\in\Z}$ is  non-negative, self-adjoint and nuclear (see Proposition~2.1 in  \cite{panaretros}).  
%In the context of functional linear processes, we use a result of \cite{cerovecki}. 
 
Theorem~1 and 4 of \cite{cerovecki} infer the following duality between $C_{X;h}$ and $\calf_X[{\omega}]$, with $A[z]$ as in \eqref{AK} and adjoint $A[z]^*$:
 \begin{align}\label{eq4.9}
 C_{X;h}=\int_{-\pi}^{\pi} \calf_X[{\omega}] e^{ih\omega} d\omega,\quad h\in\Z
 \quad\mbox{and}\quad 
\calf_X[{\omega}]=\frac{1}{2\pi} A[e^{-i\omega}] C_{\varepsilon} A[e^{-i\omega}]^*,\quad\omega\in(-\pi,\pi].
\end{align}
The following Lemma is needed for the subsequent proofs, but may also be of interest by itself.
\begin{lemma}\label{invspec}
Let $(X_n)_{n\in\Z}$ be a stationary, invertible functional linear process with WN $(\varepsilon_n)_{n\in\Z}$, such that all eigenvalues of $C_{\varepsilon}$ are positive. Then for all $\omega\in(-\pi,\pi]$ the spectral density operator $\calf_X[{\omega}]$ has only positive eigenvalues.
\end{lemma}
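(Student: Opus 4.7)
The plan is to use the factorization $\calf_X[\omega] = \frac{1}{2\pi} A[e^{-i\omega}] C_{\eps} A[e^{-i\omega}]^{*}$ from \eqref{eq4.9} together with the invertibility of $(X_n)_{n\in\Z}$ to reduce the problem to the strict positive definiteness of $C_{\eps}$.

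First, from Definition~\ref{definvertible} and the summability $\sum_{j=1}^\infty \|\pi_j\|_\call < \infty$, the operator $\Pi[z] := I_H - \sum_{j=1}^\infty z^j \pi_j$ is well defined and bounded on $\{|z|\le 1\}$, with the series converging in operator norm. The crucial step is to establish the two-sided operator identity
\begin{align*}
\Pi[e^{-i\omega}]\, A[e^{-i\omega}] \;=\; A[e^{-i\omega}]\, \Pi[e^{-i\omega}] \;=\; I_H, \qquad \omega \in (-\pi,\pi].
\end{align*}
For the left identity I would substitute the MA representation \eqref{process} into the invertibility equation \eqref{invertible} to obtain $\eps_n = \sum_{m\ge 0} c_m \eps_{n-m}$ with $c_0 = I_H$ and $c_m = \psi_m - \sum_{j=1}^m \pi_j \psi_{m-j}$ for $m\ge 1$. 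The WN property with $C_{\eps}$ strictly positive definite then forces $c_m = 0$ for $m\ge 1$, which is precisely the convolution identity $\Pi[z]A[z] = I_H$ as a formal operator power series, and hence as an identity of bounded operators on $\{|z|\le 1\}$. The reverse identity would follow from substituting \eqref{invertible} into \eqref{process} and invoking the Wold-type uniqueness of the MA representation of $X$ in terms of its own innovations, which guarantees that the $X$-coefficients may be matched.

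Given both identities, $A[e^{-i\omega}]$ is a bounded invertible operator with inverse $\Pi[e^{-i\omega}]$, so $A[e^{-i\omega}]^{*}$ is in particular injective. For any $x\in H$ with $x\neq 0$, set $y := A[e^{-i\omega}]^{*} x \neq 0$; then
\begin{align*}
\langle \calf_X[\omega]\, x, x \rangle \;=\; \tfrac{1}{2\pi}\langle C_{\eps}\, y, y \rangle \;>\; 0,
\end{align*}
since all eigenvalues of $C_{\eps}$ are positive. Hence $\calf_X[\omega]$ is strictly positive definite; as it is also self-adjoint and nuclear (Proposition~2.1 in \cite{panaretros}), its spectrum is a sequence of real eigenvalues accumulating only at $0$, and strict positive definiteness excludes $0$ from the point spectrum. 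Thus every eigenvalue of $\calf_X[\omega]$ is strictly positive.

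The main obstacle is the right-sided identity $A[e^{-i\omega}]\Pi[e^{-i\omega}] = I_H$: non-commutativity of the operator coefficients rules out deducing it purely algebraically from the left-sided identity, and a naive Neumann-series argument does not apply since $\sum_{j\ge 1} e^{-ij\omega}\pi_j$ need not have norm less than one. The resolution is to appeal to the uniqueness of the Wold decomposition, or equivalently to the fact that under invertibility $\overline{\spa}\{\eps_n : n\in\Z\} = \overline{\spa}\{X_n : n\in\Z\}$, which transfers the right-sided composition identity from the innovation side to the entire Hilbert space $H$.
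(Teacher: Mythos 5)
Your proposal is correct and follows essentially the same route as the paper: factor $\calf_X[\omega]=\frac{1}{2\pi}A[e^{-i\omega}]C_{\varepsilon}A[e^{-i\omega}]^*$ and use the convolution identity obtained by equating coefficients in $\varepsilon_n=P[B]A[B]\varepsilon_n$ to exclude a nontrivial kernel. The one place where you go beyond the paper is worth keeping: the paper establishes only the left identity $P[z]A[z]=I_H$ and from $\langle C_{\varepsilon}A[e^{-i\omega}]^*v,A[e^{-i\omega}]^*v\rangle=0$ passes to a nonzero element annihilated by $A$; but what one actually obtains is $A[e^{-i\omega}]^*v=0$, and since $\ker A^*=(\overline{\mathrm{ran}\,A})^{\perp}$, ruling this out requires $A[e^{-i\omega}]$ to have dense range --- which a left inverse alone does not supply in infinite dimensions (think of the unilateral shift). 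Your insistence on the two-sided identity $\Pi[z]A[z]=A[z]\Pi[z]=I_H$ closes exactly this point. Your justification of the right-sided identity via ``Wold-type uniqueness'' is the only sketchy step; it can be made rigorous by taking the cross-covariance of the relation $\sum_{m\geq 1}(A\Pi)_m X_{n-m}=0$ with $\varepsilon_{n-k}$ for each $k\geq 1$, which gives $\sum_{m=1}^{k}(A\Pi)_m\psi_{k-m}C_{\varepsilon}=0$; since $C_{\varepsilon}$ has dense range, an induction on $k$ yields $(A\Pi)_k=0$ for all $k\geq1$, hence $A[z]\Pi[z]=I_H$ on $\{|z|\leq1\}$ by absolute convergence of the Cauchy product.
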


\begin{proof}
The proof is an extension of the proof of Theorem~3.1 in \cite{nsiri} to the infinite dimensional setting.
Define for $A[z]$ as in \eqref{AK} and $(\pi_i)_{i\in\N}$ as in \eqref{invertible}
\begin{align*}
P[z]:= \sum_{j=0}^{\infty} z^j\pi_j\quad \text{and} \quad
D[z] := P[z]A[z],\quad z\in\C.
\end{align*}
Since  $A[z]$ and $P[z]$ are power series, also $D[z]$ can be represented by
\begin{align*}
D[z]= \sum_{j=0}^{\infty} z^j\delta_j, \quad z\in\C,
\end{align*}
for $\delta_j\in\call$. % with $\sum_{j=0}^{\infty}\Vert\delta_j\Vert < \infty$. 
Let $B$ be the backshift operator. 
Then $X_n=A[B] \varepsilon_n$ and $\varepsilon_n=P[B]X_n$; in particular, 
\begin{align}
\varepsilon_n=P[B] X_n = P[B]A[B]\varepsilon_n=D[B]\varepsilon_n, \qquad n\in\Z. \label{PA}
\end{align}
Since all eigenvalues of $C_{\varepsilon}$ are positive, by equating the coefficients in \eqref{PA}, $D[z]=I_H$ for all $z\in\C$.

Assume that there exists some non-zero $v\in H$ such that $\calf_X[{\omega}](v)=0$. 
Then by \eqref{eq4.9}, 
$$\frac{1}{2\pi} A[e^{-i\omega}] C_{\varepsilon} A[e^{-i\omega}]^*(v)=0.$$ 
But since all eigenvalues of $C_{\varepsilon}$ are positive, there exists some non-zero $u\in H$ such that $A[e^{i\omega}](u)=0$. 
However, since $D[z]=P[z]A[z]=I_H$ for all $z\in\C$, this is a contradiction, and $\calf_X[{\omega}]$ can only have  positive eigenvalues for all $\omega\in(-\pi,\pi]$.
\end{proof}	

\subsection{Proof of Proposition~\ref{properties}}

Stationarity of $(X_{D,n})_{n\in\Z}$ follows immediately from stationarity of $(X_{n})_{n\in\Z}$, since $P_{A_D}$ is a linear shift-invariant transformation. 
The functional Wold decomposition (e.g. Definition~3.1 in \cite{bosq2}) gives a representation of $(X_{D,n})_{n\in\Z}$  as a linear process with WN, say $(\tilde\varepsilon_n)_{n\in\Z}$ in $L^2_H$.
By Lemma~\ref{isomor}, $(X_{D,n})_{n\in\Z}$ is isometrically isomorphic to the vector process $(\mathbf{X}_{D,n})_{n\in\Z}$ as in \eqref{dvector}. 
Analogously, $(\tilde{ \mathbf E}_{D,n})_{n\in\Z}$ defined by $
 \tilde{\mathbf{E}}_{D,n}:=(\langle \tilde{\varepsilon}_{D,n}, \nu_1\rangle, \dots,\langle \tilde{\varepsilon}_{D,n}, \nu_D \rangle )^\top$ is isometrically isomorphic to $(\tilde\varepsilon_{D,n})_{n\in\Z}$.
We give a representation of $(\tilde{ \mathbf E}_{D,n})_{n\in\Z}$.

Define  $\mathbf{\calm}_{D,n}=\spa\{\mathbf X_{D,t},-\infty<t\leq n\}$. 
Then from the multivariate Wold decomposition the WN of $(\mathbf X_{D,n})_{n\in\Z}$ in $L^2 _{\R^D}$ is defined by 
\begin{align}
\tilde{\mathbf E}_{D,n} = \mathbf X_{D,n}- P_{\mathbf{\calm}_{D,n-1}}(\mathbf{X}_{D,n}),\quad n\in\Z. \label{epshelp}
\end{align}
Now recall \eqref{bold} in the following form
$$\mathbf X_{D,n}= \mathbf{E}_{D,n} + \sum_{j=1}^\infty \mathbf{\Psi}_{D,j} \mathbf{E}_{D,n-j}
+\mathbf{\Psi}_{D,1}^{\infty}\mathbf{E}_{n-1}^{\infty}
+ \sum_{j=2}^\infty \mathbf{\Psi}_{D,j}^{\infty}\mathbf{E}_{n-j}^{\infty},\quad n\in\Z,
$$
and we apply the projection operator to all terms.
Firstly, $P_{\mathbf{\calm}_{D,n-1}}(\mathbf{E}_{D,n})=0$, and $\mathbf{E}_{D,n-j}$ and $\mathbf{E}^{\infty}_{n- j-1}$ belong to $\mathbf{\calm}_{D,n-1}$ for all $j\geq 1$.
% $$P_{\mathbf{\calm}_{D,n-1}}(\mathbf{E}_{D,n-j}+\mathbf{E}^{\infty}_{n-j-1})=\mathbf{E}_{D,n-j}+\mathbf{E}^{\infty}_{n-j-1}.$$
%Note that $\mathbf{E}_{D,n}$ and $\mathbf{E}^{\infty}_{n-1}$ do not belong to $\mathbf{\calm}_{D,n-1}$. Furthermore $\mathbf{E}^{\infty}_{n-1}$ is uncorrelated with $\mathbf{E}_{D,n-j}$ and $\mathbf{E}^\infty_{n-j}$ for all $j\geq 2$. But since $ \mathbf{E}_{D,n-1}$ is in $\mathbf{\calm}_{D,n-1}$ and since  $\mathbf{E}^{\infty}_{n-1}$ and  $ \mathbf{E}_{D,n-1}$  are in general not independent,  
%$$P_{\mathbf{\calm}_{D,n-1}}(\mathbf{E}^{\infty}_{n-1})=\E[\mathbf{E}^\infty_{n-1}\vert \mathbf{E}_{D,n-1}]  .$$
%The last three equations combined with \eqref{dvector} yield
Hence,
 \begin{align*}
 P_{\mathbf{\calm}_{D,n-1}}(\mathbf X_{D,n})= \sum_{i=1}^{\infty} \mathbf{\Psi}_{D,j} \mathbf{E}_{D,n-j} + \sum_{i=2}^{\infty}\mathbf{\Psi}_{D,j}^{\infty}\mathbf{E}_{n-j}^{\infty}+\mathbf{\Psi}_{D,1}
  P_{\mathbf{\calm}_{D,n-1}}(\mathbf{E}^\infty_{n-1}), \quad n\in\Z,
 \end{align*}
which together with \eqref{epshelp} implies \eqref{Epstilde}.
% \begin{align*}
%\tilde{\mathbf{E}}_{D,n} =\mathbf{E}_{D,n} + \mathbf{\Psi}_{D,1} %(\mathbf{E}^\infty_{n-1}-\E[\mathbf{E}^\infty_{n-1}\vert \mathbf{E}_{D,n-1}]):= \mathbf{E}_{D,n} + \mathbf{G}_{D,n-1}.
% \end{align*}

We now show that $( \mathbf X_{D,n})_{n\in\Z}$ is invertible. The Wold decomposition gives the following representation
\begin{align} \label{multlinproc}
 \mathbf X_{D,n}=\sum_{i=1}^{\infty} \tilde{\mathbf{\Psi}}_{D,i}( \tilde{\mathbf E}_{D,n-i})+ \tilde{\mathbf E}_{D,n},\quad n\in\Z
 \end{align}
 for appropriate $\tilde{\mathbf{\Psi}}_{D,i}$ and innovation process as in \eqref{epshelp}.
Theorem~1 of \cite{nsiri} gives conditions for the invertibility of the stationary  $D$-variate linear process $( \mathbf X_{D,n})_{n\in\Z}$ satisfying \eqref{multlinproc}.

We verify these conditions one by one. \\
(1) \, We start by showing that for all $\omega\in (-\pi,\pi]$ the matrix $\calf_{\mathbf{X}_D}[\omega]$ is invertible, equivalently,  $\langle \calf_{\mathbf{X}_D}[\omega]\mathbf{x},\mathbf{x}\rangle_{\R^D}>0$ for all non-zero $\mathbf{x}\in\R^D$.
By the isometric isomorphy between $\R^D $ and $A_D$ from Lemma~\ref{isomor} we have
\begin{align*}
\big\langle  \calf_{\mathbf{X}_D}[\omega]\mathbf{x},\mathbf{x}\big\rangle_{\R^D} = \big\langle  \calf_{X_D}[\omega]x,x\big\rangle.
\end{align*}
By \eqref{cxdh} the spectral density operator $\calf_{X_D}[{\omega}]$  of $(X_{D,n})_{n\in\Z}$ satisfies
\begin{align}\label{project}
\calf_{X_D}[{\omega}] &= \frac{1}{2\pi}\sum_{h \in \Z} C_{X_D;h} e^{-ih\omega}=\frac{1}{2\pi} \sum_{h\in\Z} P_{A_D}C_{X;h}P_{A_D}e^{-ih\omega}\notag\\
&=P_{A_D}\Big(\frac{1}{2\pi} \sum_{h\in\Z} C_{X;h}e^{-ih\omega}\Big)P_{A_D} =P_{A_D} \calf_X[{\omega}] P_{A_D}. 
\end{align}
However, since $(X_n)_{n\in\Z}$ is invertible, by Lemma~\ref{invspec} all eigenvalues of $\calf_X[{\omega}]$ are positive for all $\omega\in(-\pi,\pi ]$.
Using first \eqref{project}, then that $x\in A_D$ and finally that all eigenvalues of $\calf_X[{\omega}]$ are positive, we get
\begin{align*}
\big\langle  \calf_{X_D}[\omega]x,x\big\rangle = \big\langle  P_{A_D} \calf_X[{\omega}] P_{A_D} x,x\big\rangle  =  \big\langle   \calf_X[{\omega}]  x,x\big\rangle >0.
\end{align*}
Hence, $\langle \calf_{\mathbf{X}_D}[\omega]\mathbf{x},\mathbf{x}\rangle_{\R^D}>0$  and thus $\calf_{X_D}[{\omega}]$ is invertible. \\
(2) \,  We next show that the covariance matrix $C_{\tilde{\mathbf{E}}_D}$ of $(\tilde{\mathbf{E}}_{D,n})_{n\in\Z}$ as in \eqref{Epstilde}  is invertible. 
Since $\mathbf{E}_{D,n}$ and $ \mathbf{\Delta}_{D,n-1}$ from \eqref{Epstilde} are uncorrelated, 
 $C_{\tilde{\mathbf{E}}_{D}}= C_{\mathbf{E}_{D}} + C_{\mathbf{\Delta}_{D}}$. 
 %Covariance operators are non-negative definite and symmetric. 
 All eigenvalues of $C_{\varepsilon}$ are positive by assumption.
 For all $x\in A_{D}$ we get $\langle x, C_{\varepsilon} x \rangle = \langle \mathbf{x}, C_{\mathbf E_{D}} \mathbf{x} \rangle_{\R^d}$ where $x$ and $\mathbf{x}$ are related by the isometric isomorphism $T$ of Lemma~\ref{isomor}. With the characterization of the eigenvalues of a self-adjoint operator via the Rayleigh quotient as in Theorem~4.2.7 in \cite{hsing}, all eigenvalues of $C_{\mathbf E_{D}}$ are positive.  
 % \begin{align*}
 %\lambda^{(n)}_{{(d_n)},k_n}  =  \min_{x\in \mathbf{A}_{(d_n)}} \frac{\langle x, \Gamma_{(d_n),n} x \rangle}{\Vert x\Vert^2} = \min_{x\in \mathbf{A}_{(d_n)}} \frac{\langle x, \Gamma_{d_n,n} x \rangle}{\Vert x\Vert^2}\geq  \min_{x\in \mathbf{A}_{d_n}} \frac{\langle x, \Gamma_{d_n,n} x \rangle}{\Vert x\Vert^2} =  \lambda^{(n)}_{{d_n}, d_n\cdot n},
 %\end{align*}
 %where the last equality is again an application of Theorem~4.2.7 in \cite{hsing}.  
Therefore, all eigenvalues of  $C_{\tilde{\mathbf E}_{D}}= C_{\mathbf E_{D}} +C_{\mathbf{\Delta}_{D}} $ are positive, and $C_{\tilde{\mathbf E}_{D}}$ is invertible. \\
(3) \, Finally, summability in Euclidean matrix norm of the matrices $\tilde{\mathbf{\Psi}}_{D,i}$ over $i\in\N$ follows from the properties of the Wold decomposition (see e.g. Theorem~5.7.1 in \cite{brockwell}) and from the summability of $\Vert \psi_i \Vert_\call$ over $i\in\N$. 

 Therefore, all conditions of Theorem~1 of \cite{nsiri} are satisfied and $(\mathbf{X}_{D,n})_{n\in\Z}$ is invertible.
	\halmos

\subsection{Proof of Theorem~\ref{dinfty}~(i)}

	%\begin{proof} %{\it of Theorem~\ref{dinfty}~(i)
	First note that by the projection theorem (e.g. Theorem~2.3.1 in \cite{brockwell}), 
	\begin{align}\label{proof8}
	\E \Vert   X_{n+1}-\wh{X}_{d_{n+1},n+1} \Vert ^2 \leq \E \Vert X_{n+1}- \sum_{i=1}^n \eta_i X_{d_{n+1-i},n+1-i} \Vert^2, \quad n\in\N, 
	\end{align}
	for all $\eta_i \in \call$, $i=1,\dots,n$. 
	Hence, \eqref{proof8} holds in particular for $\eta_i=\pi_i$ for $i=1,\dots,n$, where $\pi_i$ are the operators in the inverse representation of $(X_n)_{n\in\Z}$ of \eqref{invertible}. 
	 Furthermore, by the orthogonality of $\varepsilon_{n+1}$ and $X_k$ for $k<n+1$ and $n\in\N$,
	%\begin{align}
	%E \Vert   X_{n+1}-\wh{X}_{d_{n+1},n+1} - \varepsilon_{n+1} \Vert ^2 &=\E \Vert   X_{n+1}-\wh{X}_{d_{n+1},n+1}\Vert ^2  - E \langle \varepsilon_{n+1}, X_{n+1}-\wh{X}_{d_{n+1},n+1}\rangle \notag\\
	%&\quad- E \langle  X_{n+1}-\wh{X}_{d_{n+1},n+1}, \varepsilon_{n+1}\rangle +  E\Vert \varepsilon_{n+1} \Vert ^2\notag\\
	%&=E \Vert   X_{n+1}-\wh{X}_{d_{n+1},n+1}\Vert ^2  - E\Vert \varepsilon_{n+1} \Vert ^2\notag.
	%\end{align}
	\begin{align}
	\E \Vert   X_{n+1}-\wh{X}_{d_{n+1},n+1}\Vert ^2 &=
	\E \Vert   X_{n+1}-\wh{X}_{d_{n+1},n+1} - \varepsilon_{n+1} \Vert ^2 +  \E\Vert \varepsilon_{n+1} \Vert ^2\notag
	\end{align}
	Now \eqref{proof8} with $\eta_i=\pi_i$ and then the invertibility of $(X_n)_{n\in\Z}$ yield
	\begin{align}
	\E \Vert   X_{n+1}-\wh{X}_{d_{n+1},n+1} - \varepsilon_{n+1} \Vert ^2 &\leq \E \Vert X_{n+1}- \sum_{i=1}^n \pi_i X_{d_{n+1-i},n+1-i} \Vert^2- \E\Vert \varepsilon_{n+1} \Vert ^2\notag\\
	&=  \E \Vert \sum_{i=1}^{\infty} \pi_i X_{n+1-i } + \varepsilon_{n+1} - \sum_{i=1}^n \pi_i X_{d_{n+1-i},n+1-i} \Vert^2- \E\Vert \varepsilon_{n+1} \Vert ^2\notag\\
	&=\E \Vert \sum_{i=1}^n \pi_i ( X_{n+1-i}-X_{d_{n+1-i},n+1-i}) + \varepsilon_{n+1} +   \sum_{i>n} \pi_i X_{n+1-i}\Vert^2- \E\Vert \varepsilon_{n+1} \Vert ^2\notag.
	\end{align}
	Again by the orthogonality of $\varepsilon_{n+1}$ and $X_k$, for $k<n+1$, since $X_{d_n,n}=P_{A_{d_n}}X_n$, and then using that for $X,Y\in L^2_H$,  $\E\Vert X + Y \Vert ^2 \leq 2\E \Vert X\Vert^2+ 2\E\Vert Y \Vert ^2$,  we get
	\begin{align}\label{proof0}
	\E \Vert   X_{n+1}-\wh{X}_{d_{n+1},n+1} - \varepsilon_{n+1} \Vert ^2 &\leq  \E \Vert \sum_{i=1}^n \pi_i ( I_H-P_{A_{d_{n+1-i}}})X_{n+1-i}+\sum_{i>n} \pi_i X_{n+1-i}  \Vert^2 \notag\\
	&\leq  2 \E \Vert \sum_{i=1}^n \pi_i ( I_H-P_{A_{d_{n+1-i}}})X_{n+1-i} \Vert^2+ 2\E \Vert \sum_{i>n} \pi_i X_{n+1-i} \Vert ^2 \\
	& =: 2 J_1 + 2 J_2.\notag
	\end{align}
	We consider the two terms in \eqref{proof0} separately. 
	From \eqref{traceeq} we get for the first term in \eqref{proof0}
	\begin{align}
	%E\big \Vert \sum_{i=1}^n \pi_i ( I_H-P_{A_{d_{n+1-i}}})X_{n+1-i}\big \Vert^2 
	J_1 &= \Big\Vert  \E \Big [\sum_{i=1}^n \pi_i ( I_H-P_{A_{d_{n+1-i}}})X_{n+1-i} \otimes \sum_{i=1}^n \pi_i ( I_H-P_{A_{d_{n+1-i}}})X_{n+1-i} \Big] \Big\Vert_{\caln}.\notag
	\end{align}
Using the triangle inequality together with properties of the nuclear operator norm given in Section~\ref{sec2}, and then the definition of $C_{X;h}$ in \eqref{cxh},
	\begin{align}
	 % E\big \Vert \sum_{i=1}^n \pi_i ( I_H-P_{A_{d_{n+1-i}}})X_{n+1-i}\big \Vert^2
	J_1 &\leq \sum_{i,j=1}^n  \Vert \pi_i\Vert_{\call} \Vert \pi_j\Vert_{\call} \big\Vert \E \big[ ( I_H-P_{A_{d_{n+1-i}}})X_{n+1-i} \otimes  ( I_H-P_{A_{d_{n+1-j}}})X_{n+1-j}\big] \big\Vert_{\caln} \notag\\
	&= \sum_{i,j=1}^n  \Vert \pi_i\Vert_{\call} \Vert \pi_j\Vert_{\call} \big\Vert ( I_H-P_{A_{d_{n+1-i}}})C_{X;i-j} ( I_H-P_{A_{d_{n+1-j}}}) \big\Vert_{\caln} := \sum_{i,j=1}^n  \Vert \pi_i\Vert_{\call} \Vert \pi_j\Vert_{\call}  K(i,j).\label{proof1}
	\end{align}
	 By the definition of $A_d$ in \eqref{bdn} and, since by \eqref{projX} we have  $( I_H-P_{A_{d_{i}}})= \sum_{l>d_{i}} \nu_l\otimes\nu_l$, 
	\begin{align*}
%	\big\Vert ( I_H-P_{A_{d_{n+1-i}}})C_{X;i-j} ( I_H-P_{A_{d_{n+1-j}}}) \big\Vert_{\caln} 
	K(i,j) &=  \big\Vert (\sum_{l'>d_{n+1-i}} \nu_{l'}\otimes \nu_{l'}) C_{X;i-j}(\sum_{l>d_{n+1-j}} \nu_l\otimes \nu_l)\big\Vert_{\caln}
%	&=\big\Vert ( I_H-P_{A_{d_{n+1-i}}}) \sum_{l>d_{n+1-j}} \nu_l\otimes C_{X;i-j}(\nu_l)\big\Vert_{\caln}\\
	=\big\Vert\sum_{l'>d_{n+1-i}} \sum_{l>d_{n+1-j}} \langle C_{X;i-j} (\nu_l), \nu_{l'} \rangle \nu_l\otimes \nu_{l'}\big\Vert_{\caln}.
	\end{align*}
	With Lemma~\ref{lehelp2}, the definition of the nuclear norm given in Section~\ref{sec2} and the orthogonality of the $(\nu_i)_{i\in\N}$, we get
 	\begin{align}
	%\big\Vert ( I_H-P_{A_{d_{n+1-i}}})C_{X;i-j} ( I_H-P_{A_{d_{n+1-j}}}) \big\Vert_{\caln} 
	K(i,j) &\leq \big\Vert\sum_{l'>d_{n+1-i}} \sum_{l>d_{n+1-j}} \lambda_l^{1/2}\lambda_{l'}^{1/2} \nu_l\otimes \nu_{l'}\big\Vert_{\caln} \notag\\
	&= \sum_{k=1}^{\infty} \big\langle \sum_{l'>d_{n+1-i}} \sum_{l>d_{n+1-j}} \lambda_l^{1/2}\lambda_{l'}^{1/2} \nu_l\otimes \nu_{l'} (\nu_k), \nu_k \big\rangle \notag\\
	%&= \sum_{k>d_{n+1-j}} \big\langle \sum_{l'>d_{n+1-i}} \lambda_k^{1/2}\lambda_{l'}^{1/2} \nu_{l'} , \nu_k \big\rangle \notag\\ 
	&= \sum_{k>\max(d_{n+1-j},d_{n+1-i})} \lambda_k \ \leq \sum_{k>d_{n+1-j}} \lambda_k. \label{proof2}
	\end{align}
	Plugging \eqref{proof2} into \eqref{proof1}, and recalling that $\sum_{i=1}^{\infty} \Vert\pi_i\Vert_{\call}=:M_1<\infty$, we conclude
	\begin{align}
	%E\big \Vert \sum_{i=1}^n \pi_i ( I_H-P_{A_{d_{n+1-i}}})X_{n+1-i}\big \Vert^2 
	J_2 &\leq M_1 \sum_{j=1}^n \Vert \pi_j \Vert_{\call} \sum_{l>d_{n+1-j}} \lambda_l. \label{proof6}
	\end{align}
	Now for some $m_n<n$,
	\begin{align}
	\sum_{j=1}^n \Vert \pi_j \Vert_{\call} \sum_{l>d_{n+1-j}} \lambda_l &= \sum_{j=1}^{m_n} \Vert \pi_j \Vert_{\call} \sum_{l>d_{n+1-j}} \lambda_l \ + \sum_{j=m_n}^{n} \Vert \pi_j \Vert_{\call} \sum_{l>d_{n+1-j}} \lambda_l.\label{proof3}
	\end{align}
	Since  $\sum_{j=1}^{m_n} \Vert \pi_j \Vert_{\call}\leq \sum_{j=1}^{\infty}\Vert \pi_j \Vert_{\call}=M_1<\infty$, the first term on the rhs of \eqref{proof3} can be bounded by
		\begin{align}
		\sum_{j=1}^{m_n} \Vert \pi_j \Vert_{\call} \sum_{l>d_{n+1-j}} \lambda_l \leq M_1 \sum_{l>d_{n+1-m_n}} \lambda_l. \label{result1}
			\end{align}
	Furthermore, since $\sum_{l>d_{n+1-j}} \lambda_l \leq \sum_{l=1}^{\infty} \lambda_l = \Vert C_X \Vert_{\caln}<\infty$, the second term of the rhs in \eqref{proof3} can be bounded by
		\begin{align}
		\sum_{j=m_n}^{n} \Vert \pi_j \Vert_{\call} \sum_{l>d_{n+1-j}} \lambda_l &\leq  \Vert C_X \Vert_{\caln} \sum_{j=m_n}^{n} \Vert \pi_j \Vert_{\call} . \label{result3}
		\end{align}
	Hence, from \eqref{proof6} together with \eqref{proof3}, \eqref{result1} and \eqref{result3} we obtain 
	\begin{align}
	%E \Vert \sum_{i=1}^n \pi_i ( I_H-P_{A_{d_{n+1-i}}})X_{n+1-i} \Vert^2   
	J_1 = O \big( \sum_{j=m_n}^{n} \Vert \pi_j \Vert_{\call} + \sum_{l>d_{n+1-m_n}} \lambda_l \big). \label{result4}
	\end{align}
  Concerning $J_2$, the second term of \eqref{proof0} with \eqref{otimesprop}, and then the definition of $C_{X;h}$ in \eqref{cxh} 	yield
  \begin{align*}
		J_2 &= \E \Vert \sum_{i>n} \pi_i X_{n+1-i} \Vert ^2 \, = \, \big\Vert \E\big[ \sum_{i>n} \pi_i X_{n+1-i} \otimes  \sum_{j>n} \pi_j X_{n+1-j} \big]\big\Vert_{\caln}\\
		&=\Vert\sum_{i,j>n} \pi_i C_{X;i-j} \pi_j^* \Vert_{\caln}
		\leq \sum_{i,j>n} \Vert \pi_i\Vert_{\call} \Vert \pi_j\Vert_{\call} \Vert C_{X;i-j} \Vert_{\caln}.
		\end{align*}
		Since $C_{X;i-j}\in\caln$ for all $i,j\in\N$, $\Vert C_{X;i-j} \Vert_{\caln}=:M_2<\infty$, and for some $m_n<n$,
		\begin{align}
		%E \Vert \sum_{i>n} \pi_i X_{n+1-i} \Vert ^2 
		J_2 &\leq M_2 \big(\sum_{i>n}\Vert \pi_i \Vert_{\call}\big)^2 = O \big(\sum_{i>m_n}\Vert \pi_i \Vert_{\call}\big). \label{result2}
		\end{align}
	Finally the combination of \eqref{proof0}, \eqref{result4} and \eqref{result2}  yields assertion (i).
	\halmos

\subsection{Proof of Theorem~\ref{dinfty}~(ii)}
	
Note first that by the projection theorem there is an equivalent representation of $\wh{X}_{d_{n+1},n+1}$ to \eqref{xdhat1} given by
	\begin{align}
		\wh{X}_{d_{n+1},n+1} = P_{\LCS(F'_{d_n,n})}(X_{n+1}) = \sum_{i=1}^{n} \beta_{d_{n+1-i},n,i} X_{d_{n+1-i},n+1-i} \label{alternative}
	\end{align}
	for $F'_{d_n,n}$ as in \eqref{Fdn} and $\beta_{d_{n+1-i},n,i}\in\call$ for $i=1,\dots,n$.
	Furthermore, for $k=1,\dots,n$, we define 
	%$\wh{X}_{d_{n+1},n+1}(k)$, 
	the best linear predictor of $X_{n+1}$ based on $F'_{d_n,n}(k)=\{X_{d_{n+1-k},n+1-k},X_{d_{n-k+2},n+2-k},\dots,X_{d_n,n}\}$ by
	\begin{align}
			\wh{X}_{d_{n+1},n+1}(k) =P_{\LCS(F'_{d_n,n}(k))}(X_{n+1})=\sum_{i=1}^{k} \beta_{d_{n+1-i},k,i} X_{d_{n+1-i},n+1-i} . \label{altk}
		\end{align}
		
	 We start with the following Proposition, which is an infinite-dimensional extension to Proposition~2.2 in \cite{mitchell}.
	 
	\begin{proposition}\label{prophelp}
		Under the assumptions of Theorem~\ref{dinfty} the following assertions hold:\\
		(i) The operators $\beta_{d_{n+1-i},n,i}$ from (\ref{alternative}) and $\theta_{d_{n+1-i},n,i}$ from (\ref{xdhat1}) are for $n\in\N$ related by
		\begin{align}
		\theta_{d_{n+1-i},n,i}=\sum_{j=1}^i \beta_{d_{n+1-j},n,j} \theta_{d_{n+1-i},n-j,i-j}, \quad   i=1,\dots,n.\label{link}
		\end{align}
	Furthermore, for every $i, j \in \mathbb{N}$ and $x\in H$, as $n\rightarrow \infty$,
	\begin{compactenum}
		\item[(ii)] $\big\Vert(\beta_{d_{n+1-i},n,i} - \pi_i)( x ) \big\Vert \rightarrow 0$,
		\item[(iii)] $\big\Vert(\beta_{d_{n+1-i},n,i} - \beta_{d_{n+1-i-j},n-j,i})(x) \big\Vert \rightarrow 0$,
		\item[(iv)] $\big\Vert(\theta_{d_{n+1-i},n,i} - \theta_{d_{n+1-i-j},n-j,i} )(x)\big\Vert \rightarrow 0$.
	\end{compactenum}
	\end{proposition}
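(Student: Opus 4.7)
Part (i) is a purely algebraic identity linking the two representations \eqref{xdhat1} and \eqref{alternative} of the same random element $\wh X_{d_{n+1},n+1}$. Set $U_k := X_{d_k,k}-\wh X_{d_k,k}$ for $k=1,\dots,n$. Applying \eqref{xdhat1} with $n$ replaced by $n-i$ gives $X_{d_{n+1-i},n+1-i} = U_{n+1-i} + \sum_{k=1}^{n-i}\theta_{d_{n+1-i-k},n-i,k}U_{n+1-i-k}$. Substituting this into \eqref{alternative}, reindexing the resulting double sum by $l=i+k$, and equating the coefficient of each $U_{n+1-l}$ with its coefficient in \eqref{xdhat1}---possible because the innovations $(U_k)$ are pairwise orthogonal in $L^2_H$ with nondegenerate covariance, thanks to the positivity of the eigenvalues of $C_{\varepsilon}$ and Proposition~\ref{properties}---yields \eqref{link} under the standing convention $\theta_{\cdot,m,0}:=I_H$.

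Part (ii) is the main step. I would combine Theorem~\ref{dinfty}(i), invertibility, and the prediction equations for the $\beta$'s. Theorem~\ref{dinfty}(i) together with \eqref{invertible} gives $\E\|\wh X_{d_{n+1},n+1}-\sum_{j\geq 1} \pi_j X_{n+1-j}\|^2\to 0$, but this $L^2_H$-limit alone does not identify individual coefficients. So I would further use the normal equations that characterize $\wh X_{d_{n+1},n+1}$, which take the operator form
$C_{X;k}P_{A_{d_{n+1-k}}} = \sum_{i=1}^n \beta_{d_{n+1-i},n,i}P_{A_{d_{n+1-i}}}C_{X;k-i}P_{A_{d_{n+1-k}}}$ for $k=1,\dots,n$,
and compare them with the Yule--Walker relations $C_{X;k}=\sum_{j\geq 1}\pi_j C_{X;k-j}$, $k\geq 1$, that follow from invertibility by taking covariances against $X_0$ in \eqref{invertible}. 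Passing to the limit using $P_{A_{d_n}}x\to x$ for every $x\in H$ together with the positivity of the spectral density operator $\calf_X[\omega]$ from Lemma~\ref{invspec}---which makes the limiting Yule--Walker operator invertible and therefore the system uniquely solvable---yields $\|(\beta_{d_{n+1-i},n,i}-\pi_i)(x)\|\to 0$ for each fixed $i$ and $x\in H$.

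Part (iii) is then immediate from (ii) by the triangle inequality, since for fixed $i,j$ both $\beta_{d_{n+1-i},n,i}(x)$ and $\beta_{d_{n+1-i-j},n-j,i}(x)$ converge to $\pi_i(x)$. For part (iv) I would induct on $i$. The base case $i=1$ reduces via \eqref{link} to $\theta_{d_n,n,1}=\beta_{d_n,n,1}$, to which (iii) applies directly. For the inductive step, \eqref{link} expresses the difference $\theta_{d_{n+1-i},n,i}-\theta_{d_{n+1-i-j},n-j,i}$ as a finite sum whose typical term pairs a $\beta$-difference with a $\theta$-difference of strictly smaller lag; each such factor converges pointwise to zero by (iii) and the induction hypothesis, while the complementary factors remain uniformly bounded in $\call$ by the $\call$-summability of $(\pi_i)_{i\in\N}$ together with the Innovations-Algorithm identities \eqref{theta1}.

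The principal obstacle is part (ii): $L^2_H$-convergence of the sum $\sum_i \beta_{d_{n+1-i},n,i} X_{d_{n+1-i},n+1-i}$ does not by itself pin down the individual operator coefficients, as cancellation between terms is a priori possible. Overcoming this requires comparing the finite prediction equations with the infinite Yule--Walker identities derived from invertibility, and exploiting the positivity of $\calf_X[\omega]$ (Lemma~\ref{invspec}) to guarantee the uniqueness needed to transfer convergence on the right-hand sides of the normal equations back to convergence of the coefficients themselves.
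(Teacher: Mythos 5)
Your part (i) matches the paper's argument (set $\theta_{d_{n+1},n,0}=I_H$, substitute the innovations representation \eqref{xdhat1} into \eqref{alternative}, and equate coefficients of the pairwise orthogonal innovations), and your induction for (iv) based on \eqref{link} is also essentially the paper's. Granting (ii), your one-line deduction of (iii) by the triangle inequality is legitimate: applying (ii) at sample sizes $n$ and $n-j$ and noting $d_{n+1-i-j}=d_{(n-j)+1-i}$ gives (iii). (The paper proves (iii) by a separate covariance computation, but your shortcut is sound.)

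The genuine gap is in (ii). You propose to compare the finite-sample normal equations with the Yule--Walker identities and pass to the limit, invoking the positivity of the eigenvalues of $\calf_X[\omega]$ (Lemma~\ref{invspec}) for ``uniqueness''. But uniqueness of the limiting system is not the obstruction; stability is. The relevant covariance operator $\Gamma_{(d_n),n}$ of $(X_{d_n,n},\dots,X_{d_1,1})$ has smallest positive eigenvalue tending to $0$, and the limiting operator is compact with unbounded inverse, so convergence of the right-hand sides of the normal equations does not transfer to convergence of the coefficients without a quantitative lower bound on the spectrum. This is exactly where the paper works: it forms the operator $B_{(d_n),n}-\Pi_nP_{(d_n)}:H^n\to H$, bounds its squared operator norm by $(2\pi\alpha_{d_n})^{-1}\Vert(B_{(d_n),n}-\Pi_nP_{(d_n)})\Gamma_{(d_n),n}(B_{(d_n),n}-\Pi_nP_{(d_n)})^*\Vert_\call$ via the eigenvalue bound of Lemma~\ref{specdens}, identifies the latter expression with the prediction-error covariance controlled by Theorem~\ref{dinfty}(i), and then needs the coupling condition \eqref{assumptionspec} to conclude that the quotient tends to $0$. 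Your proposal never invokes \eqref{assumptionspec}, and the mere positivity of the eigenvalues of $\calf_X[\omega]$ (which accumulate at $0$ since the operator is nuclear) cannot substitute for it; without the $\alpha_{d_n}$-rate the limit passage fails. A smaller omission: the discrepancy between $\pi_i$ and $\pi_iP_{A_{d_{n+1-i}}}$, i.e.\ the term $\Pi_n(I_{H^n}-P_{(d_n)})$, must be controlled separately (the paper's $J_2(d_n,n)$, handled pointwise using $\sum_i\Vert\pi_i\Vert_\call<\infty$ and $P_{A_{d_n}}x\to x$); your normal-equation formulation hides this term but does not dispose of it.
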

	
	\begin{proof}
		\textbf{(i)} Set $\theta_{d_{n+1},n,0}:=I_H$. By adding the term $\theta_{d_{n+1},n,0}(X_{d_{n+1},n+1}-\widehat{X}_{d_{n+1},n+1})$ to both sides of \eqref{xdhat1}, we get
		 \begin{align*}
		 X_{d_{n+1},n+1}=\sum_{j=0}^n \theta_{d_{n+1-j},n,j} (X_{d_{n+1-j},n+1-j}-\widehat{X}_{d_{n+1-j},n+1-j}), \quad n\in\N.
		 \end{align*}
		 Plugging this representation of $X_{d_{n+1-i},n+1-i}$ into \eqref{alternative} for $i=1,\dots,n$ yields
		 \begin{align*}
		 \widehat{X}_{d_{n+1},n+1} 
		 &=\sum_{i=1}^n \beta_{d_{n+1-i},n,i} \, \Big(\sum_{j=0}^{n-i} \theta_{d_{n+1-i-j},n-i,j} (X_{d_{n+1-i-j},n+1-i-j}-\widehat{X}_{d_{n+1-i-j},n+1-i-j})\Big).
		 \end{align*}
		 Equating the coefficients of the innovations $(X_{d_{n+1-i},n+1-i} - \widehat{X}_{d_{n+1-i},n+1-i})$ with the innovation representation (\ref{xdhat1}), the identity
		 	\begin{align*}
		 	\wh{X}_{d_{n+1},n+1}= \sum_{i=1}^{n} \theta_{d_{n-i+1},n,i} (X_{d_{n+1-i},n+1-i}-\wh{X}_{d_{n+1-i},n+1-i})
		 	\end{align*}
		 leads by linearity of the operators to \eqref{link}.\\
		 
		\textbf{(ii)} 	Let 
		\begin{align}\label{bpi}
		B_{(d_n),n}=(\beta_{d_{n},n,1},\dots,\beta_{d_1,n,n})\quad\mbox{and}\quad\Pi_n=(\pi_1,\dots,\pi_n),
		\end{align}
		 which are both operators from $H^n$ to $H$ defined as follows: 
		let $x=(x_1,\dots,x_n) \in H^n$ with $x_i\in H$ for $i=1,\dots,n$. Then $B_{(d_{n}),n} \ x =\sum_{i=1}^n\beta_{d_{n+1-i},n,i}  x_i \ \in H$. 
		By definition of the norm in $H^n$ we have for all $x \in H$
		\begin{align*}
		\Vert (B_{(d_{n}),n} - \Pi_n)(x)\Vert  &= \sum_{i=1}^n \Vert(\beta_{d_{n+1-i},n,i} - \pi_i)( x ) \big\Vert.
		\end{align*} 
		We show that this tends to 0 as $n\rightarrow\infty$,
		which immediately gives $\Vert(\beta_{d_{n+1-i},n,i} - \pi_i)( x ) \big\Vert \rightarrow 0$ for all $i\in\N$.  
		
		 First notice that for $x\in H^n$ and with $P_{(d_n)}$ defined in \eqref{pdn}, the triangular inequality yields
		 \begin{align*}
		 \Vert (B_{(d_{n}),n} - \Pi_n)(x)\Vert &\leq  \Vert (B_{(d_{n}),n} - \Pi_n P_{(d_n)})(x)\Vert +  \Vert \Pi_n(I_{H^n}-P_{(d_n)})(x)\Vert\\ & =: J_1(d_n,n)(x)+ J_2(d_n,n)(x),
		 \end{align*}
		 with identity operator $I_{H^n}$ on $H^n$. 
		 We find bounds for $J_1(d_n,n)(x)$ and $J_2(d_n,n)(x)$. 
		 Since uniform convergence implies pointwise convergence, we consider the operator norm of $J_1(d_n,n)(x)$ 
		 \begin{align*}
	%	\sup_{\Vert x\Vert \leq 1} \Vert (B_{(d_{n}),n} - \Pi_n P_{(d_n)})(x)\Vert =
	 J_1(d_n,n): =\Vert B_{(d_{n}),n} - \Pi_n P_{(d_n)}\Vert_\call 
		 \end{align*}
		 and show that $J_1(d_n,n)\rightarrow 0$ as $n\rightarrow\infty$. 
		 From Theorem~2.1.8 in \cite{simon} we find
		 \begin{align}
		 \Vert  B_{(d_{n}),n} - \Pi_n P_{(d_n)}\Vert_\call^2 = \Vert (B_{(d_{n}),n} - \Pi_nP_{(d_n)})(B_{(d_{n}),n} - \Pi_nP_{(d_n)})^* \Vert_\call.\label{proof5}
		 \end{align}
		 \begin{comment}
		 Recall the definition of $\Gamma_{d_n,n}$ in (\ref{gammam}) with ordered eigenvalues $ \lambda_j^{(n)} \geq \lambda_k^{(n)}$ for $j \geq k$.
		 Define
		 \begin{align*}
		 \mathcal{A}_K^n := \big\lbrace x\in H: (B_{(d_{n}),n} - \Pi_n)^*(x) \in \overline{\text{sp}}\lbrace \nu_1^{(n)},\dots, e_K^{(n)}\rbrace\big\rbrace,
		 \end{align*}
		 where $\nu_i^{(n)}$ for $i=1,\dots,K$ are the eigenfunctions of $\Gamma_n$, corresponding to the $K$ largest eigenvalues $\lambda_i^{(n)}$ for $i=1,\dots,K$.
		 Denoting by $ P_{\mathcal{A}_K^n}$ the projection on $\mathcal{A}_K^n$, every $x\in H$ can be written as
		 \begin{align*}
		 x= P_{\mathcal{A}_K^n}(x) + (Id- P_{\mathcal{A}_K^n})(x):= x_K + x_K^{\bot}.
		 \end{align*}
		 Furthermore, we define by $\Gamma_n^{(K)}$ the projection of $\Gamma_n$ on its first $K$ eigenfunctions:
		 \begin{align*}
		 \Gamma_n^K(\cdot)=\sum_{j=1}^{K} \lambda_j^{(n)} \langle \nu_j^{(n)}, \cdot \rangle \nu_j^{(n)}.
		 \end{align*}
		 \end{comment}
		Recall the spectral representation of $\Gamma_{(d_n),n}$ as in \eqref{specrep}. By the definition of $B_{(d_{n}),n}$ and $\Pi_n P_{(d_n)}$, note that $(B_{(d_{n}),n} - \Pi_n P_{(d_n)})P_{(d_n)}=B_{(d_{n}),n} - \Pi_n P_{(d_n)}$.
		%Note that $(B_{(d_{n}),n} - \Pi_n P_{(d_n)})(x)=0$ for all  $x\in (A_{d_n},A_{d_{n-1}},\dots,A_{d_1})^\bot$.  
		%Furthermore, since $\Gamma_{(d_n),n}$ acts only on  $(A_{d_n},A_{d_{n-1}},\dots,A_{d_1})$, $$\spa \{e^{(n)}_{(d_n),j}, j=1,\dots,k_n\}=(A_{d_n},A_{d_{n-1}},\dots,A_{d_1})\subseteq H^n.  $$ Hence $(B_{(d_{n}),n} - \Pi_n P_{(d_n)})( e^{(n)}_{(d_n),j}) = 0 $ for $j>k_n$. 
		Extracting the smallest  positive eigenvalue $ \lambda^{(n)}_{(d_n),k_n}$  of $\Gamma_{(d_n),n}$, we get
		%The eigenfunctions $(e^{(n)}_{(d_n),j})_{j=1,\dots,k_n}$ are orthogonal and can be extended to $(e^{(d_n)}_{d_n,j})_{j\in\N}$, an ONB of $H^n$. Furthermore, $\spa \{e^{(n)}_{(d_n),j}, j=1,\dots,k_n\}=(A_{d_n},A_{d_{n-1}},\dots,A_{d_1})\subseteq H^n$. 
	%	 However, $(B_{(d_{n}),n} - \Pi_n P_{(d_n)})$ acts only on $(A_{d_n},A_{d_{n-1}},\dots,A_{d_1})$. Hence $(B_{(d_{n}),n} - \Pi_n P_{(d_n)})( e^{(n)}_{(d_n),j}) = 0 $ for $j>k_n$.  Therefore for all $x\in H$ we get $\big\langle (B_{(d_{n}),n} - \Pi_n P_{(d_n)})^*(x), e^{(n)}_{(d_n),j} \big\rangle =0 $ for $j>k_n$. Now   
		 \begin{align}
		 \Big\Vert (B_{(d_{n}),n}& - \Pi_n P_{(d_n)})\Gamma_{(d_n),n}(B_{(d_{n}),n} - \Pi_n P_{(d_n)})^*\Big \Vert_\call \notag \\
		 &  =\Big\Vert (B_{(d_{n}),n} - \Pi_n P_{(d_n)})\sum_{j=1}^{k_n} \lambda_{(d_n),j}^{(n)}(e^{(n)}_{(d_n),j} \otimes e^{(n)}_{(d_n),j}) (B_{(d_{n}),n} - \Pi_n P_{(d_n)})^*\Big \Vert_\call \notag \\
		 %&\geq\lambda^{(n)}_{{(d_n)},k_n}\Big \Vert (B_{(d_{n}),n} - \Pi_n P_{(d_n)})\sum_{j=1}^{k_n} e^{(n)}_{(d_n),j} \otimes e^{(n)}_{(d_n),j} (B_{(d_{n}),n} - \Pi_n P_{(d_n)})^* \Big\Vert_{\call} \notag \\
		 %&=\lambda_K^{(n)}\Big \Vert (B_{(d_{n}),n} - \Pi_n P_{(d_n)})\sum_{j=1}^{\infty}  \big\langle (B_{(d_{n}),n} - \Pi_n P_{(d_n)})^*(x_K), \nu_j^{(n)} \big\rangle \nu_j^{(n)} \Big\Vert\\
		 & \geq \lambda^{(n)}_{(d_n),k_n}\Big\Vert (B_{(d_{n}),n} - \Pi_n P_{(d_n)})(B_{(d_{n}),n} - \Pi_n P_{(d_n)})^*\Big \Vert_\call.\label{proof14}
		 \end{align} 
		 Since $A_{d_i}\subseteq A_{d_n}$ for all $i\leq n$ we obtain $\mathbf{A}_{(d_n)}:=(A_{d_n},A_{d_{n-1}},\dots,A_{d_1})\subseteq \mathbf{A}_{d_n}:= (A_{d_n},A_{d_{n}},\dots,A_{d_n})$ and, therefore, $P_{d_n}P_ {(d_n)}=P_{(d_n)}$. Together with the definition of $\Gamma_{(d_n),n} $ this implies
		 $$\Gamma_{(d_n),n}  =P_{(d_n)} \Gamma_n P_{(d_n)}=P_{(d_n)}P_{d_n} \Gamma_n P_{d_n}P_{(d_n)}=P_{(d_n)} \Gamma_{d_n,n} P_{(d_n)}.$$
		 %By Theorem~4.2.7 of \cite{hsing}, the smallest positive eigenvalue of $\Gamma_{(d_n),n}$ can be characterized by
		 %\begin{align*}
		 %\lambda^{(n)}_{{(d_n)},k_n} =  \min_{x\in \mathbf{A}_{(d_n)}} \frac{\langle x, \Gamma_{(d_n),n} x \rangle}{\Vert x\Vert^2}.
		 %\end{align*}
		 Since $\langle x, \Gamma_{(d_n),n} x \rangle = \langle x, \Gamma_{d_n,n} x \rangle$ for all $x\in \mathbf{A}_{d_n}$, and $\mathbf{A}_{(d_n)}\subseteq \mathbf{A}_{d_n}$, we get
		 \begin{align*}
		  \lambda^{(n)}_{{(d_n)},k_n}  =  \min_{x\in \mathbf{A}_{(d_n)}} \frac{\langle x, \Gamma_{(d_n),n} x \rangle}{\Vert x\Vert^2} = \min_{x\in \mathbf{A}_{(d_n)}} \frac{\langle x, \Gamma_{d_n,n} x \rangle}{\Vert x\Vert^2}\geq  \min_{x\in \mathbf{A}_{d_n}} \frac{\langle x, \Gamma_{d_n,n} x \rangle}{\Vert x\Vert^2} =  \lambda^{(n)}_{{d_n}, d_n\cdot n},
		 \end{align*}
		 where the first and last equality hold by application of Theorem~4.2.7 in \cite{hsing}.
		 %Furthermore, with the spectral representations of $\Gamma_{(d_n),n}$ and $\Gamma_{d_n,n}$ in \eqref{specrep}
		 %\begin{align*}
		 %\Gamma_{(d_n),n}=P_{(d_n)} \Big(\sum_{j=1}^{d_n\cdot n} \lambda_{d_n,j}^{(n)} e^{(n)}_{d_n,j}\otimes e^{(n)}_{d_n,j}\Big) P_{(d_n)}=\sum_{j=1}^{d_n\cdot n} \lambda_{d_n,j}^{(n)} P_{(d_n)}e^{(n)}_{d_n,j}\otimes P_{(d_n)}e^{(n)}_{d_n,j}.
		 %\end{align*}
		 % Hence, the smallest eigenvalue $\lambda^{(n)}_{{(d_n)},k_n}$ of $\Gamma_{(d_n),n}$ is bounded below by the smallest eigenvalue  $\lambda^{(n)}_{d_n,d_n\cdot n}$ of $\Gamma_{d_n,n}$. 
	Furthermore, by Lemma~\ref{specdens}, $\lambda^{(n)}_{d_n,d_n\cdot n}\ge 2\pi \alpha_{d_n} $. 
Therefore,
\begin{align}
		\lambda^{(n)}_{{d_n},k_n} \geq\lambda^{(n)}_{d_n,d_n\cdot n}\geq 2\pi \alpha_{d_n}.\label{proof15}
				\end{align}
		With \eqref{proof14} and \eqref{proof15}, we get
		 \begin{align}
	 \Vert  B_{(d_{n}),n} - \Pi_n P_{(d_n)}\Vert^2
		 &\leq  \frac{1}{2\pi\al_{d_n}}\Big\Vert (B_{(d_{n}),n} - \Pi_n P_{(d_n)})\Gamma_{(d_n),n}(B_{(d_n),n} - \Pi_n P_{(d_n)})^*\Big \Vert_\call =: J_1'(d_n,n). \label{proof4}
		 \end{align}
		\begin{comment}
		 Now note that every $x\in H$ we can write as $x_K + x_K^\bot$ and, hence,
		 \begin{align}
		 J_1'(d_n,n) &= \Big\Vert (B_{(d_{n}),n} - \Pi_n)(B_{(d_{n}),n} - \Pi_n)^*(x_K + x_K^{\bot})\Big \Vert \notag\\
		 %		&\leq  \Big\Vert (B_{(d_{n}),n} - \Pi_n)(B_{(d_{n}),n} - \Pi_n)^*(x_K)\Big \Vert + \Big\Vert (B_{(d_{n}),n} - \Pi_n)(B_{(d_{n}),n} - \Pi_n)^*(x_K^{\bot})\Big \Vert \notag\\
		 &\leq \frac{1}{\lambda_K^{(n)}}\Big\Vert (B_{(d_{n}),n} - \Pi_n)\Gamma_{(d_n),n}(B_{(d_{n}),n} - \Pi_n)^*(x_K)\Big \Vert +  \Big\Vert (B_{(d_{n}),n} - \Pi_n)(B_{(d_{n}),n} - \Pi_n)^*(x_K^{\bot})\Big \Vert\notag\\
		 &:= J_1(K,n) + J_2(K,n). \label{J1J2}
		 \end{align}
		 \end{comment}
		% We first estimate $J_1(K,n)$.
		Furthermore, since $\langle Ax,y\rangle=\langle x, A^*y\rangle$ for $A\in \mathcal{L}$ and $x,y \in H$, and by \eqref{bpi} and the structure of $ \Gamma_{(d_n),n}$,
		 \begin{align*}
	&\Big\Vert (B_{(d_{n}),n} - \Pi_nP_{(d_n)})\Gamma_{(d_n),n}(B_{(d_{n}),n} - \Pi_nP_{(d_n)})^*\Big \Vert_\call\\
		% &\quad=  \left\Vert (B_{(d_{n}),n} - \Pi_nP_{(d_n)})
		 %\begin{pmatrix}
		 %\sum_{j=1}^n \mathbb{E}\big[X_{d_n,n} \langle X_{d_{n-j+1},n-j+1}, (\beta_{d_{n+1-j},n,j} - \pi_jP_{A_{d_{n+1-j}}})^*(x) \rangle \big]\\
		% \sum_{j=1}^n \mathbb{E}\big[X_{d_{n-1},n-1} \langle X_{d_{n-j+1},n-j+1}, (\beta_{d_{n+1-j},n,j} - \pi_jP_{A_{d_{n+1-j}}})^*(x) \rangle \big]\\
		 %\vdots\\
		 %\sum_{j=1}^n \mathbb{E}\big[X_{d_1,1} \langle X_{d_{n-j+1},n-j+1}, (\beta_{d_{n+1-j},n,j} - \pi_jP_{A_{d_{n+1-j}}})^*(x) \rangle \big]\\
		% \end{pmatrix} \right\Vert\\
		% &\quad \leq  \left\Vert
		 %\mathbb{E}\Big[\sum_{i=1}^n(\beta_{d_{n+1-i},n,i} - \pi_iP_{A_{d_{n+1-i}}})X_{d_{n+1-i},n+1-i} \ \big\langle \sum_{j=1}^n (\beta_{d_{n+1-j},n,j} - \pi_jP_{A_{d_{n+1-j}}})X_{d_{n-j+1},n-j+1}, x \big\rangle \Big]\right\Vert\\
		  &\quad \le \Big\Vert
		 \mathbb{E}\Big[
		 \sum_{i=1}^n(\beta_{d_{n+1-i},n,i} - \pi_iP_{A_{d_{n+1-i}}})X_{d_{n+1-i},n+1-i}  \otimes   \sum_{j=1}^n (\beta_{d_{n+1-j},n,j} - \pi_jP_{A_{d_{n+1-j}}})X_{d_{n-j+1},n-j+1}  
		 \Big]\Big\Vert_\call.
		 \end{align*}
		 Now with \eqref{invertible} and \eqref{alternative} we get  
		 \begin{align}
		&\left\Vert
				 \mathbb{E}\Big[
				 \sum_{i=1}^n(\beta_{d_{n+1-i},n,i} - \pi_iP_{A_{d_{n+1-i}}})X_{d_{n+1-i},n+1-i}  \otimes   \sum_{j=1}^n (\beta_{d_{n+1-j},n,j} - \pi_jP_{A_{d_{n+1-j}}})X_{d_{n-j+1},n-j+1}  
				 \Big]\right\Vert_\call \notag\\
		 & =\Big\Vert
		 \mathbb{E}\Big[\Big(\widehat{X}_{d_{n+1},n+1} - X_{n+1} + \varepsilon_{n+1} +  \sum_{i>n}\pi_i X_{n+1-i} + \sum_{i=1}^n\pi_i(I-P_{A_{d_{n+1-i}}})X_{n+1-i}\Big) \notag  \\
		 & \qquad \otimes  \Big(\widehat{X}_{d_{n+1},n+1} - X_{n+1}+ \varepsilon_{n+1} +  \sum_{j>n}\pi_jX_{n+1-j} + \sum_{j=1}^n\pi_j(I-P_{A_{d_{n+1-j}}})X_{n+1-j}\Big)  \Big]\Big\Vert_\call \label{proof9}
		 \end{align}
		 With the trianglular inequality, \eqref{proof9} decomposes in the following four terms giving with \eqref{proof4}: 
		 \begin{align*}
	&	 {2\pi \alpha_{d_n}}   \Vert  B_{(d_{n}),n} - \Pi_n P_{(d_n)}\Vert^2 \\
	& \ \leq \Big\Vert
		 \mathbb{E}\Big[\big(\widehat{X}_{d_{n+1},n+1} - X_{n+1} + \varepsilon_{n+1} \big)\otimes\big( \widehat{X}_{d_{n+1},n+1} - X_{n+1}+ \varepsilon_{n+1} \big)\Big]\Big\Vert_\call \\
		 &\qquad + \Big\Vert
		 \mathbb{E}\Big[\Big(\sum_{i>n}\pi_i X_{n+1-i} + \sum_{i=1}^n\pi_i(I-P_{A_{d_{n+1-i}}})X_{n+1-i}\Big) \otimes\Big(  \sum_{j>n}\pi_jX_{n+1-j} + \sum_{j=1}^n\pi_j(I-P_{A_{d_{n+1-j}}})X_{n+1-j} \Big)\Big]\Big\Vert_\call \\
		 &\qquad +\Big\Vert
		 \mathbb{E}\Big[\Big(\widehat{X}_{d_{n+1},n+1} - X_{n+1} + \varepsilon_{n+1} \Big)\otimes \Big(\sum_{j>n}\pi_jX_{n+1-j} + \sum_{j=1}^n\pi_j(I-P_{A_{d_{n+1-j}}})X_{n+1-j}\Big) \Big]\Big\Vert_\call \\
		 &\qquad + \Big\Vert
		 \mathbb{E}\Big[ \Big(\sum_{i>n}\pi_i X_{n+1-i} + \sum_{i=1}^n\pi_i(I-P_{A_{d_{n+1-i}}})X_{n+1-i} \Big) \otimes\Big( \widehat{X}_{d_{n+1},n+1} - X_{n+1}+\varepsilon_{n+1} \Big) \Big]\Big\Vert_\call.
		 \end{align*}
		 Define 
		 \begin{align*}
		 f(n,d_n,m_n):=\big(\sum_{j>m_n} \Vert \pi_j \Vert_{\call} +	\sum_{j>d_{n-m_n}} \lambda_j \big).
		 \end{align*}
		 By Theorem~\ref{dinfty} the first term is of the order $f(n,d_n,m_n)$. 
		 The second term is of the same order by the calculations following \eqref{proof0}. Concerning the remaining two terms, using first that $\Vert C_{X,Y}\Vert_{\call} \leq \E \Vert X \Vert \Vert Y \Vert $, and then applying the Cauchy-Schwarz inequality gives
		 \begin{align}\label{6.31a}
		& \Big\Vert\mathbb{E}\Big[\Big(\widehat{X}_{d_{n+1},n+1} - X_{n+1} + \varepsilon_{n+1}\Big)\otimes \Big(  \sum_{j>n}\pi_jX_{n+1-j} + \sum_{j=1}^n\pi_j(I-P_{A_{d_{n+1-j}}})X_{n+1-j} \Big) \Big]\Big\Vert_\call^2 \notag\\
		& \qquad \leq \Big( \E \Big\Vert \widehat{X}_{d_{n+1},n+1} - X_{n+1} + \varepsilon_{n+1}\Big\Vert_\call \Big \Vert\sum_{j>n}\pi_jX_{n+1-j} + \sum_{j=1}^n\pi_j(I-P_{A_{d_{n+1-j}}})X_{n+1-j} \Big\Vert \Big)^2\\
		& \qquad \leq  \E \Big\Vert \widehat{X}_{d_{n+1},n+1} - X_{n+1} + \varepsilon_{n+1}\Big\Vert_\call^2 \E \Big \Vert\sum_{j>n}\pi_jX_{n+1-j} + \sum_{j=1}^n\pi_j(I-P_{A_{d_{n+1-j}}})X_{n+1-j} \Big\Vert_\call ^2.\notag
		 \end{align}
		 Both terms are of the order  $f(n,d_n,m_n)$ by Theorem~\ref{dinfty}(i).\\
		\begin{comment}
		From Lemma~\ref{lemma1} we know that $\frac{1}{\lambda_K^{(n)}} \leq \frac{1}{\lambda_K^{(0)}}$, where $\lambda_K^{(0)}$ is the $K$-th largest eigenvalue of $C_X$. 
		 Hence for every $K\in\N$, there exists a function $f$ with $f(n)=o(n^{-1})$, such that 
		 \begin{align*}
		 J_1(K,n)=\frac{1}{\lambda_K^{(n)}}f(n) \leq \frac{1}{\lambda_K^{(0)}} f(n). 
		 \end{align*}
		 Take $k_n:=k(n)$, with $k(n)\rightarrow \infty$ with $n\rightarrow \infty$. 
		 \end{comment}
		 Hence, $\Vert  B_{(d_{n}),n} - \Pi_n P_{(d_n)}\Vert^2$ is of the order  $f(n,d_n,m_n)/\alpha_{d_n}$, and with the assumption \eqref{assumptionspec},  
		 \begin{align}
		J_1(d_n,n)^2  \rightarrow 0,\quad n\to\infty.\label{J1}
		 \end{align}
		 We now estimate $J_2(d_n,n)(x)$, which we have to consider pointwise. For every $x=(x_1,\dots,x_n)\in H^n$ with $x_i\in H$ for $1\leq i\leq n$ and $\Vert x \Vert \leq 1$,
		 \begin{align*}
		 J_2(d_n,n) &= \Vert\Pi_n(I-P_{(d_n)})(x) \Vert \notag\\
		 &=\Big\Vert \big(\pi_1(I_H-P_{A_{d_n}}),\pi_2 (I_H-P_{A_{d_{n-1}}}),\dots, \pi_n (I_H-P_{A_{d_1}})\big) (x) \Big\Vert \\
		 &= \sum_{i=1}^n  \Vert \pi_i(I_H-P_{A_{d_{n+1-i}}})(x_i)  \Vert. 
		 \end{align*}
		 \begin{comment}with a similar intuition as in the proof of Theorem~\ref{dinfty}, let again $m_n=m(n)$ be a increasing sequence taking its values in $\N$, such that $m_n\rightarrow \infty$ with $n\rightarrow\infty$ but with $m_n/n\rightarrow 0$. \end{comment}
		 Let $m\in \N$ such that $m<n$. Then,
		 \begin{align}
		 \sum_{i=1}^n  \big\Vert \pi_i(I_H-P_{A_{d_{n+1-i}}})(x_i) \big \Vert &=\sum_{i=1}^{m} \big\Vert \pi_i(I_H-P_{A_{d_{n+1-i}}})(x_i) \big \Vert + \sum_{i=m+1}^{n} \big\Vert \pi_i(I_H-P_{A_{d_{n+1-i}}})(x_i) \big \Vert. \label{proof11} 
		 \end{align}
		 Note that $I_H-P_{A_{d_{n}}}$ is a projection operator on the orthogonal complement of $A_{d_{n}}$. Hence for all $n\in\N$, we have $\Vert I_H-P_{A_{d_n}}\Vert =1$ (see e.g. Theorem~2.1.9 in \cite{simon}). Furthermore, for $A,B\in\call$ and $x\in H$, $\Vert A B x\Vert \leq \Vert A \Vert \Vert B\Vert \Vert x\Vert$, and since $\Vert x_i\Vert \leq 1$,
		 \begin{align}
		 \sum_{i=m+1}^{n} \big\Vert \pi_i(I_H-P_{A_{d_{n+1-i}}})(x_i) \big \Vert
		 &\leq\sum_{i>m}\Vert \pi_i \Vert_\call. \label{proof12}
		 \end{align}
		 Furthermore, since $A_{d_j}\subseteq A_{d_i}$ for $j\leq i$, 
		 \begin{align}
		\sum_{i=1}^{m} \big\Vert \pi_i(I_H-P_{A_{d_n+1-i}})(x_i) \big \Vert &\leq  \sum_{i=1}^{m} \Vert \pi_i \Vert_\call \Vert (I_H-P_{A_{d_n+1-m}})(x_i) \Vert.  \label{proof13} 
		\end{align}
		 Since $\sum_{i=1}^{\infty} \Vert \pi_i\Vert_\call <\infty$, for every $\delta >0$ there exists some $m_{\delta}\in \N$, such that
		 $\sum_{i>m_{\delta}} \Vert \pi_i \Vert_\call<\delta/2.$ 
		 Hence, with \eqref{proof11}, \eqref{proof12} and \eqref{proof13} we estimate
		 \begin{align}
		  \sum_{i=1}^n  \big\Vert \pi_i(I_H-P_{A_{d_{n+1-i}}})(x_i) \big \Vert \leq \sum_{i=1}^{m_{\delta}} \Vert \pi_i \Vert_\call \Vert (I_H-P_{A_{d_{n+1-m_{\delta}}}})(x_i) \Vert + \delta/2 .\label{proof7}
		 \end{align}
		 Furthermore, for the first term of the rhs of \eqref{proof7}, 
		 \begin{align*}
		 \sum_{i=1}^{m_{\delta}} \Vert \pi_i \Vert_\call \Vert (I_H-P_{A_{d_{n+1-m_{\delta}}}})(x_i) \Vert \leq \max_{1\leq j \leq m_{\delta}}  \|( I_H-P_{A_{d_{n+1-m_{\delta}}}}) (x_j)\|  \sum_{i=1}^{m_{\delta}} \Vert \pi_i \Vert_\call .
		 \end{align*} 
		 Now note that  $\Vert (I_H-P_{A_{d_{n+1-m_{\delta}}}})(x) \Vert \rightarrow 0 $ for $n\rightarrow \infty$ for all $x\in H$. 
		 Hence, there exists some $n_{\delta}\in \N$ such that 
		 $$\max_{1\leq j \leq m_{\delta}} \|( I_H-P_{A_{d_{{n_{\delta}}+1-m_{\delta}}}})(x_j) \| \sum_{i=1}^{m_\delta} \Vert \pi_i \Vert_\call<\delta/2.$$
		 Hence, $J_2(d_{n},n)(x)<\delta$ for all $n\ge n_\delta$ and all $x\in H$. \\
		 Together with \eqref{J1}, this proves (ii).\\	
		 \begin{comment}
			 Now from \eqref{J1}, for all $\epsilon_1 > 0$ there exists a $N(\epsilon_1)$ such that for all $n_1>N(\epsilon_1)$, $J_1(K_1,n_1)<\epsilon_1$, where $K_1=K(n_1)$. \\
			 From \eqref{J2}, for fixed $n$, for all $\epsilon_2>0$, there exists a $K(\epsilon_2)$ such that for all $K_2 > K(\epsilon_2)$, $J_2(K_2,n)<\epsilon_2$.\\
			 Hence choosing $N^*$ such that $N^*> N(\epsilon_1)$ and $K(N^*)> K(\epsilon_2)$, then for all $n>N^*$,
			 \begin{align}
			 J_1(K(n),n) + J_2(K(n), n) < \epsilon_1+ \epsilon _2:=\epsilon. \label{Kn}
			 \end{align}
			 content...
			 \end{comment}			 			 
		 \textbf{(iii)}  Similarly to the proof of (ii), we start by defining for every $n\in\N$,
		 	\begin{align*}
		 		\wt B_{(d_n),n-j}:=(\beta_{d_n,n-j,1}, \beta_{d_{n-1},n-j,2},\dots, \beta_{d_{j+1},n-j,n-j},0_H,\dots,0_H), \quad  j=1,\dots,n,
		 	\end{align*}
		 	where the last $j$ entries are $0_H$, the null operator on $H$.
		 	Then $\wt B_{(d_n),n-j}$ is a bounded linear operator from $H^n$ to $H$. 
Analogously to the beginning of the proof of (ii), we show that $\Vert \wt B_{(d_n),n} - \wt B_{(d_n),n-j} \Vert_\call \rightarrow 0$ for $n\rightarrow\infty$. 
With the same calculation as deriving   \eqref{proof4} from \eqref{proof5}, we obtain
		 	\begin{align*}
		 		\Vert \wt B_{(d_n),n} - \wt B_{(d_n),n-j} \Vert_\call ^ 2 
		 		%& = \Vert ( \wt B_{(d_n),n} - \wt B_{(d_n),n-j}) ( \wt B_{(d_n),n} - \wt B_{(d_n),n-j})^* \Vert_\call  \\
		 	&	\leq \frac{1}{2\pi\alpha_{d_n}} \Vert( \wt B_{(d_n),n} - \wt B_{(d_n),n-j})\Gamma_{(d_n),n}( \wt B_{(d_n),n} - \wt B_{(d_n),n-j})^* \Vert_\call \\
		 	&=: \frac{1}{2\pi\alpha_{d_n}} \wt J'_1(d_n,n).
		 	\end{align*}
		 	Applying the same steps as when bounding $J_1(d_n,n)$ in the proof of (ii), and setting $\beta_{d_{n+j},n,m}=0$ for $m>n$, we obtain
		 	 \begin{comment}
		 	with 
		 	\begin{align*}
		 		\wt J_1'(K,n)& := \Big\Vert (\wt B_n -\wt B_{n-j})\Gamma_n ^{(K)}( \wt B_n - \wt B_{n-j})^*(x_K)\Big \Vert_\call \\
		 	\end{align*}
		Define \begin{align*}
		 		\mathcal{A}_K^{*n}= \big\lbrace x\in H: (B_n -B_{n-j})^*(x) \in \overline{\text{sp}}\lbrace e_1^{(n)},\dots, e_K^{(n)}\rbrace\big\rbrace,
		 	\end{align*}
		 	where $\nu_i^{(n)}$, $i=1,\dots,K$, are again the eigenfunctions of $\Gamma_n$.
		 	Now every $x\in H$ can be written as 
		 	$$x  = \mathcal{P}_{\mathcal{A}_K^{*n}}(x) + (Id- \mathcal{P}_{\mathcal{A}_K^{*n}})(x) = x_K + x_K^{\bot}.$$
		 	\end{comment}
		 	\begin{align*}
		 		\wt J_1'(d_n,n)&=   \Big\Vert
		 		\mathbb{E}\Big[\big(\sum_{i=1}^n(\beta_{d_{n-i+1},n,i} -\beta_{d_{n-i+1},n-j,i})X_{d_{n+1-i},n+1-i} \big)\\
		 		&\qquad \qquad\otimes \big(\sum_{l=1}^n (\beta_{d_{n-l+1},n,l} -\beta_{d_{n-l+1},n-j,l})X_{d_{n+1-l},n+1-l}\big) \Big]\Big\Vert_\call\\
		 		&=   \left\Vert
		 		\mathbb{E}\Big[\big(\widehat{X}_{d_{n+1},n+1}- \widehat{X}_{d_{n+1},n+1}(n-j)\big) \ \otimes (\widehat{X}_{d_{n+1},n+1}- \widehat{X}_{d_{n+1},n+1}(n-j))\Big]\right\Vert_\call,
		 	\end{align*}
		 	where $\widehat{X}_{d_{n+1},n+1}(k)=\sum_{l=1}^k\beta_{d_{n-l+1},k,l}X_{d_{n+1-l},n+1-l}$ is defined as in \eqref{altk}. 
		 	By adding and subtracting $X_{d_{n+1},n+1}+\varepsilon_{n+1}$ and then using the linearity of the scalar product we get
		 	\begin{align*}
		 		\wt J_1'(K,n) =&  \Big\Vert
		 		\mathbb{E}\Big[\big((\widehat{X}_{d_{n+1},n+1}-X_{d_{n+1},n+1}-\varepsilon_{n+1})-( \widehat{X}_{d_{n+1},n+1}(n-j)-X_{d_{n+1},n+1}-\varepsilon_{n+1})\big) \\ 
		 		& \quad \otimes \big((\widehat{X}_{d_{n+1},n+1}-X_{d_{n+1},n+1}-\varepsilon_{n+1})-( \widehat{X}_{d_{n+1},n+1}(n-j)-X_{d_{n+1},n+1}-\varepsilon_{n+1})\big)\Big]\Big\Vert_\call\\
		 		&\leq\   \Big\Vert\mathbb{E}\Big[\big(\widehat{X}_{d_{n+1},n+1}-X_{d_{n+1},n+1}-\varepsilon_{n+1}\big) \otimes \big(\widehat{X}_{d_{n+1},n+1}-X_{d_{n+1},n+1}-\varepsilon_{n+1}\big) \Big]\Big\Vert_\call\\
		 		& \ + \Big\Vert\mathbb{E}\Big[\big( \widehat{X}_{d_{n+1},n+1}(n-j)-X_{d_{n+1},n+1}-\varepsilon_{n+1}\big)  \otimes \big( \widehat{X}_{d_{n+1},n+1}(n-j)-X_{d_{n+1},n+1}-\varepsilon_{n+1})\big] \Big\Vert_\call\\
		 		& \ +\Big\Vert\mathbb{E}\Big[\big(\widehat{X}_{d_{n+1},n+1}-X_{d_{n+1},n+1}-\varepsilon_{n+1}) \otimes \big( \widehat{X}_{d_{n+1},n+1}(n-j)-X_{d_{n+1},n+1}-\varepsilon_{n+1})\Big] \Big\Vert_\call\\
		 		& \ +\Big\Vert\mathbb{E}\Big[\big(\widehat{X}_{d_{n+1},n+1}(n-j)-X_{d_{n+1},n+1}-\varepsilon_{n+1}\big) \otimes \big(\widehat{X}_{d_{n+1},n+1}-X_{d_{n+1},n+1}-\varepsilon_{n+1}\big) \Big]\Big\Vert_\call.
		 	\end{align*}
		 	For $n\rightarrow\infty$ the first term converges to $0$ by Theorem~\ref{dinfty}~(i). 
		 	For every fixed $j\in\{1,\dots,n\}$ the second term converges to $0$ by exactly the same arguments. Similar arguments as in the proof of (ii) show that the third and fourth terms also converge to $0$. 
		 	Indeed,  applying the Cauchy-Schwarz inequality, we find as in \eqref{6.31a},
		 	\begin{align*}
		 		\Big\Vert\mathbb{E}\Big[\big(&\widehat{X}_{d_{n+1},n+1}-X_{d_{n+1},n+1}-\varepsilon_{n+1}\big)  \otimes \big( \widehat{X}_{d_{n+1},n+1}(n-j)-X_{d_{n+1},n+1}-\varepsilon_{n+1}\big) \Big]\Big\Vert^2_\call\\
		 		%&\leq \Big(\mathbb{E} 	\big\Vert\widehat{X}_{d_{n+1},n+1}-X_{d_{n+1},n+1}-\varepsilon_{n+1}\ \big\Vert \big\Vert  \widehat{X}_{d_{n+1},n+1}(n-j)-X_{d_{n+1},n+1}-\varepsilon_{n+1} \big\Vert_\call\Big)^2\\
		 		&\leq  \mathbb{E}\big\Vert\widehat{X}_{d_{n+1},n+1}-X_{d_{n+1},n+1}-\varepsilon_{n+1}\big\Vert_\call^2 \ 
		 		 \mathbb{E}\big\Vert\widehat{X}_{d_{n+1},n+1}(n-j)-X_{d_{n+1},n+1}-\varepsilon_{n+1} \big\Vert_\call^2.
		 	\end{align*}
		 	Since both these terms tend to $0$ for $n\rightarrow \infty$, $\wt J'_1(d_n,n)\rightarrow 0$ for $n\rightarrow\infty$, which finishes the proof of (iii).\\
		 	
		 {\bf  (iv)} 
		 	By \eqref{link}
		 		\begin{align*}
		 				\theta_{d_{n+1-k},n,k}=\sum_{l=1}^k \beta_{d_{n+1-l},n,l} \theta_{d_{n+1-k},n-l,k-l}, \quad   k=1,\dots,n,
		 			\end{align*}
		 	and we get $\theta_{d_n,n,1}=\beta_{d_{n},n,1}$. Hence, for $n\rightarrow\infty$ and fixed $j\in\N$,
		 	\begin{align}
		 		\Vert(\theta_{d_n,n,1} - \theta_{d_n,n-j,1})(x)\Vert=\Vert(\beta_{d_{n},n,1}- \beta_{d_{n},n-j,1})(x) \Vert \rightarrow 0. \label{indstart}
		 	\end{align}
		 	For some fixed $j\in\N$ by a shift of \eqref{link}, we obtain
		 	\begin{align}
		 		\theta_{d_{n+1-k},n-j,k}=\sum_{l=1}^k \beta_{d_{n+1-l},n-j,l} \theta_{d_{n+1-l},n-j-l,k-l}. \label{proof16}
		 	\end{align}
With \eqref{proof16} and then the triangular equality after adding and subtracting $\beta_{d_{n+1-l},n,l}\theta_{d_{n+1-l},n-j-l,k-l}(x)$ for $l=1,\dots,k$,
		 	\begin{align*}
		 		\Big\Vert(	\theta_{d_{n+1-k},n,k}-	\theta_{d_{n+1-k},n-j,k})(x)\Big\Vert&=\Big\Vert\Big(\sum_{l=1}^k \beta_{d_{n+1-l},n,l} \theta_{d_{n+1-k},n-l,k-l}-\beta_{d_{n+1-l},n-j,l} \theta_{d_{n+1-l},n-j-l,k-l}\Big)(x)\Big\Vert\\
		 		&\leq\Big\Vert\sum_{l=1}^k \beta_{d_{n+1-l},n,l}( \theta_{d_{n+1-k},n-l,k-l}-\theta_{d_{n+1-l},n-j-l,k-l})(x)\Big\Vert\\ & \quad +\Big\Vert(\beta_{d_{n+1-l},n,l}-\beta_{d_{n+1-l},n-j,l}) \theta_{d_{n+1-l},n-j-l,k-l}(x)\Big\Vert
		 	\end{align*}
		 	By (iii) $\Vert(\beta_{d_{n+1-l},n,l}-\beta_{d_{n+1-l},n-j,l})(x) \Vert \rightarrow 0$ as $n\rightarrow\infty$. 
		 	Furthermore, if for all $l=1,\dots,i-1$, $\Vert(\theta_{d_{n+1-l},n,l}-\theta_{d_{n+1-l},n-j,l}(x) \Vert \rightarrow 0$, then $\Vert(\theta_{d_{n+1-i},n,i}-\theta_{d_{n+1-i},n-j,i}(x) \Vert \rightarrow 0$. 
		 	The proof then follows by induction with the initial step given in (\ref{indstart}).
		 	\end{proof}
		 	
		 	We are now ready to prove Theorem~\ref{dinfty}(ii).		 	
		 	
		 	\begin{proof}[Proof of Theorem~\ref{dinfty}(ii)]
		 	 Set $\pi_0:=-I_H$. By \eqref{invertible} and the definition of a linear process \eqref{process} 
		 	\begin{align*}
		 	-\varepsilon_n&= \sum_{i=0}^{\infty} \pi_i (X_{n-i})=\sum_{i=0}^{\infty} \pi_i \big(\sum_{j=0}^{\infty} \psi_j \varepsilon_{n-i-j}\big),\quad n\in\Z.
		 	\end{align*}
		    Setting $k=i+j$, this can be rewritten as
		 	\begin{align*}
		 	-\varepsilon_n &=\sum_{i=0}^{\infty} \pi_i \big(\sum_{j=0}^{\infty} \psi_j \varepsilon_{n-i-j}\big) =\sum_{k=0}^{\infty}\big(\sum_{i+j=k} \pi_j \psi_{i} \big)\varepsilon_{n-k} 
		 	=  \sum_{k=0}^{\infty} \sum_{j=0}^k \pi_j\psi_{k-j} \varepsilon_{n-k}.
		 	\end{align*}
		 	Equating the coefficients we get $\sum_{j=0}^{k} \pi_j \psi_{k-j}=0$ for $k>0$. Since $-\pi_0=I_H$, extracting the first term of the series, $\sum_{j=1}^{k} \pi_j \psi_{k-j} -I_H\psi_k=0$, hence,
		 	\begin{align*}
		 	 \sum_{j=1}^{k} \pi_j \psi_{k-j}=\psi_k.
		 	\end{align*}
		 	Furthermore, by \eqref{link} we get for all $x \in H$,
		 	\begin{align*}
		 		\Big\Vert\big(\theta_{d_{n+1-i},n,i}-\psi_{i}\big)(x)\Big\Vert
		 		&=\Big\Vert\Big(\sum_{j=1}^i \beta_{d_{n+1-j},n,j} \theta_{d_{n+1-i},n-j,i-j}- \sum_{j=1}^i \pi_j \psi_{i-j}\Big) (x)\Big\Vert\\
		 		&=\Big\Vert\sum_{j=1}^i (\beta_{d_{n+1-j},n,j}-\pi_j) \theta_{d_{n+1-i},n-j,i-j}(x) - \sum_{j=1}^i \pi_j( \psi_{i-j}-\theta_{d_{n+1-i},n-j,i-j}) (x)\Big\Vert\\
		 		&\leq\Big\Vert\sum_{j=1}^i (\beta_{d_{n+1-j},n,j}-\pi_j) \theta_{d_{n+1-i},n-j,i-j}(x)\Big\Vert + \Big\Vert\sum_{j=1}^i \pi_j( \psi_{i-j}-\theta_{d_{n+1-i},n,i-j}) (x)\Big\Vert\\
		 		& \quad +\Big\Vert\big(\sum_{j=1}^i \pi_j (\theta_{d_{n+1-i},n,i-j}-\theta_{d_{n+1-i},n-j,i-j})\big)(x)\Big\Vert,
		 	\end{align*}
		 	where we have added and subtracted $\theta_{d_{n+1-i},n,i-j}$ and applied the triangular inequality for the last equality.
		 	Now, for $n \rightarrow \infty$, the last term tends to $0$ by Proposition~\ref{prophelp}~(iv). 
		 	The first term tends to $0$  by Proposition~\ref{prophelp}~(ii). 
		 	The second term tends to $0$ by induction, where the initial step is clear, since $\psi_1=-\pi_1$ and $\theta_{d_n,n,1}=\beta_{d_{n},n,1}$.
		 \end{proof}

\bibliographystyle{plain}
\bibliography{bibliography}

\end{document}